    \def\renderlayout{arXiv}
\def\renderlayoutACM{ACM}
\def\renderlayoutACManon{ACManon}
\def\renderlayoutarXiv{arXiv}
\def\renderlayoutarXivanon{arXivanon}
\def\classoptions{}
    \def\classoptions{acm}
        \def\classoptions{acm,anonymous}
            \def\classoptions{arxiv}
                \def\classoptions{arxiv,anonymous}
\title{Locality in online, dynamic, sequential, and distributed graph algorithms}
\def\namedlabel#1#2{\begingroup
    #2%
    \def\@currentlabel{#2}%
    \phantomsection\label{#1}\endgroup
}
\newcommand{\flushfigures}{\afterpage{\clearpage}}
\newcommand{\pequiv}{\mathrel{\accentset{\star}{\sim}}}
\theoremstyle{\acmorarxiv{acmplain}{plain}}
\newtheorem{theorem}{Theorem}[section]
\newtheorem{lemma}[theorem]{Lemma}
\newtheorem{corollary}[theorem]{Corollary}
\newtheorem{observation}[theorem]{Observation}
\theoremstyle{\acmorarxiv{acmdefinition}{definition}}
\newtheorem{definition}[theorem]{Definition}
\theoremstyle{\acmorarxiv{acmdefinition}{definition}}
\newtheorem*{question*}{Question}
\newtheorem*{hypothesis*}{Hypothesis}
\crefname{line}{line}{lines}
\newcommand{\simplemodelop}[1]{\ifmmode\operatorname{#1}\else#1\fi\xspace}
\newcommand{\SLOCAL}{\simplemodelop{SLOCAL}}
\newcommand{\LOCAL}{\simplemodelop{LOCAL}}
\newcommand{\onlineLOCAL}{\simplemodelop{online-LOCAL}}
\newcommand{\dynamicLOCAL}{\simplemodelop{dynamic-LOCAL}}
\newcommand{\randomizedLOCAL}{\simplemodelop{randomized-LOCAL}}
\newcommand{\OnlineLOCAL}{\simplemodelop{Online-LOCAL}}
\newcommand{\DynamicLOCAL}{\simplemodelop{Dynamic-LOCAL}}
\DeclareMathOperator{\class}{Class}
\DeclareMathOperator{\dist}{dist}
\DeclareMathOperator{\poly}{poly}
\newcommand{\algoa}{\mathcal{A}}
\newcommand{\algob}{\mathcal{B}}
\DeclareMathOperator{\regularlog}{log}
\renewcommand{\log}{\protect\@ifstar{\regularlog^*}{\regularlog}}
\newcommand{\pathform}[1]{#1^{\text{path}}}
\newcommand{\machine}{\mathcal{M}}
\newcommand{\twohalf}{2.5}
\newcommand{\probtwohalf}{\Pi_{\twohalf}}
\definecolor{mygreen}{HTML}{A5CEBA}
\definecolor{myblue}{HTML}{7F9CAB}
\colorlet{myred}{red!80!black}
\begin{document}

\myauthor{Amirreza Akbari}{amirreza.akbari@aalto.fi}{Aalto University}{Espoo}{Finland}
\myauthor{Navid Eslami}{navid.eslami@aalto.fi}{Aalto University}{Espoo}{Finland}
\myauthor{Henrik Lievonen}{henrik.lievonen@aalto.fi}{Aalto University}{Espoo}{Finland}
\myauthor{Darya Melnyk}{darya.melnyk@aalto.fi}{Aalto University}{Espoo}{Finland}
\myauthor{Joona Särkijärvi}{joona.sarkijarvi@aalto.fi}{Aalto University}{Espoo}{Finland}
\myauthor{Jukka Suomela}{jukka.suomela@aalto.fi}{Aalto University}{Espoo}{Finland}

\begin{abstract}
In this work, we give a unifying view of locality in four settings: distributed algorithms, sequential greedy algorithms, dynamic algorithms, and online algorithms.

We introduce a new model of computing, called the \onlineLOCAL model: the adversary reveals the nodes of the input graph one by one, in the same way as in classical online algorithms, but for each new node we get to see its radius-$T$ neighborhood before choosing the output. Instead of \emph{looking ahead} in time, we have the power of \emph{looking around} in space.

We compare the \onlineLOCAL model with three other models: the \LOCAL model of distributed computing, where each node produces its output based on its radius-$T$ neighborhood, its sequential counterpart \SLOCAL, and the \dynamicLOCAL model, where changes in the dynamic input graph only influence the radius-$T$ neighborhood of the point of change.

The \SLOCAL and \dynamicLOCAL models are sandwiched between the \LOCAL and \onlineLOCAL models, with \LOCAL being the weakest and \onlineLOCAL the strongest model. In general, all four models are distinct, but we study in particular \emph{locally checkable labeling problems} (LCLs), which is a family of graph problems extensively studied in the context of distributed graph algorithms.

We prove that for LCL problems in paths, cycles, and rooted trees, all four models are roughly equivalent: the locality of any LCL problem falls in the same broad class---$O(\log^* n)$, $\Theta(\log n)$, or $n^{\Theta(1)}$---in all four models. In particular, this result enables one to generalize prior lower-bound results from the \LOCAL model to all four models, and it also allows one to simulate e.g.\ \dynamicLOCAL algorithms efficiently in the \LOCAL model.

We also show that this equivalence does not hold in two-dimensional grids or general bipartite graphs. We provide an \onlineLOCAL algorithm with locality $O(\log n)$ for the $3$-coloring problem in bipartite graphs---this is a problem with locality $\Omega(n^{1/2})$ in the \LOCAL model and $\Omega(n^{1/10})$ in the \SLOCAL model.
\end{abstract}

\maketitle

\section{Introduction}\label{sec:intro}

In \emph{online graph algorithms}, the adversary reveals the graph one node at a time, and the algorithm has to respond by labeling the node based on what it has seen so far. For example, in \emph{online graph coloring}, we need to pick a color for the node that was revealed, in such a way that the end result is a proper coloring of the graph.

Usually, when the adversary reveals a node $v$, the algorithm gets to see only the edges incident to of $v$. In this work, we consider a more general setting: the algorithm gets to see the radius-$T$ neighborhood of $v$, i.e., it can \emph{look around} further in the input graph. 
For $T=0$, this model corresponds to the typical online model. For $T=n$, on the other hand, it is clear that any graph problem (in connected graphs) is solvable in this setting. The key question is what value of $T$ is sufficient for a given graph problem. Put otherwise, what is the \emph{locality} of a given online problem?

It turns out that this question is very closely connected to questions studied in the context of \emph{distributed} graph algorithms, and we can identify problem classes in which the online setting coincides with the distributed setting. However, we will also see surprising differences: the prime example will be the problem of $3$-coloring bipartite graphs, which is a fundamentally \emph{global} problem in the distributed setting, while we show that we can do much better in the online setting.

\begin{figure}
\newcommand{\mynote}[1]{{\color{myblue}\small#1}}
\newcommand{\mycols}{@{\hspace{0.5ex}}c@{\hspace{0.5ex}}}
\newcommand{\mybox}[1]{\fbox{#1}}
\centering
\begin{tikzcd}[column sep=-5mm, row sep=-3mm]
&
\hspace{22mm}
&
&
    \mybox{\begin{tabular}{\mycols}
        \text{\SLOCAL}\\[2pt]
        \mynote{distributed,}\\[-2pt]
        \mynote{sequential}
    \end{tabular}}
    \arrow[rrrd, "\subsetneq"]
    \arrow[rddd, leftrightarrow, color=myred, "\nsubseteq"]
    \arrow[lddd, leftrightarrow, color=myred, "\nsubseteq"]
&
&
\hspace{22mm}
&
\\
    \mybox{\begin{tabular}{\mycols}
        \text{\LOCAL}\\[2pt]
        \mynote{distributed,}\\[-2pt]
        \mynote{parallel}
    \end{tabular}}
    \arrow[rrru, "\subsetneq"]
    \arrow[rrdd, "\subsetneq"]
&
&
&
&
&
&
    \mybox{\begin{tabular}{\mycols}
        online-\\
        \LOCAL\\[2pt]
        \mynote{centralized}
    \end{tabular}}
\\
\rule{0pt}{10mm}
\\
&
&
    \mybox{\begin{tabular}{\mycols}
        dynamic-\\
        $\LOCAL^\pm$\\[2pt]
        \mynote{centralized}
    \end{tabular}}
    \arrow[rr, "\subsetneq"]
&
&
    \mybox{\begin{tabular}{\mycols}
        dynamic-\\
        \LOCAL\\[2pt]
        \mynote{centralized}
    \end{tabular}}
    \arrow[rruu, "\subsetneq"]
&
&
    \mybox{\begin{tabular}{\mycols}
        online graph\\
        algorithms\\[2pt]
        \mynote{centralized}
    \end{tabular}}
    \arrow[uu, "\subsetneq"]
\end{tikzcd}
\caption{The landscape of models.}\label{fig:diagram}
\end{figure}

\subsection{Contribution 1: landscape of models}

In \cref{sec:def-related}, we define the new model of online algorithms with lookaround, which we call \onlineLOCAL, and we also recall the definitions of three models familiar from the fields of distributed and dynamic graph algorithms:
\begin{itemize}
    \item The \LOCAL model \cite{linial92,Peleg2000}: the nodes are processed \emph{simultaneously} in parallel; each node looks at its radius-$T$ neighborhood and picks its own output.
    \item The \SLOCAL model \cite{slocal}: the nodes are processed \emph{sequentially} in an adversarial order; each node in its turn looks at its radius-$T$ neighborhood and picks its own output (note that here the output of a node may depend on the outputs of other nodes that were previously processed).
    \item The \dynamicLOCAL model: the adversary \emph{constructs} the graph by adding nodes and edges one by one; after each modification, the algorithm can only update the solution within the radius-$T$ neighborhood of the point of change.
    While this is not one of the standard models, there is a number of papers \cite{10.1007/3-540-57899-4_44,10.1145/2700206,10.1145/3188745.3188922,Barenboim17,DBLP:journals/corr/abs-1804-01823,doi:10.1137/1.9781611975031.1,du2018improved} that implicitly make use of this model.
    We will also occasionally consider the $\dynamicLOCAL^\pm$ model, in which we can have both additions and deletions.
\end{itemize}
In \cref{sec:landscape}, we show that we can sandwich \SLOCAL and both versions of \dynamicLOCAL between \LOCAL and \onlineLOCAL, as shown in \cref{fig:diagram}. In particular, this implies that if we can prove that \LOCAL and \onlineLOCAL are equally expressive for some family of graph problems, we immediately get the same result also for \SLOCAL and \dynamicLOCAL. This is indeed what we will achieve in our next contribution.

\subsection{Contribution 2: collapse for LCLs in rooted regular trees}\label{ssec:contribution2}

A lot of focus in the study of distributed graph algorithms and the \LOCAL model has been on understanding \emph{locally checkable labeling problems} (in brief, LCLs)~\cite{balliu19lcl-decidability,balliu20binary-labeling,chang21automata-theoretic,naor95,brandt17grid-lcl,Chang2017ATH,balliu19weak-col,balliu21rooted-trees,chang2020complexity}. These are problems where feasible solutions are defined with local constraints---a solution is feasible if it looks good in all constant-radius neighborhoods (see \cref{def:lcl}). Coloring graphs of maximum degree $\Delta$ with $k$ colors (for some constants of $\Delta$ and $k$) is an example of an LCL problem.

\begin{table}
\newcommand{\myeq}{$\Leftrightarrow$}
\centering
\begin{tabular}{@{}l@{\quad\qquad}lclclcl@{}}
\toprule
& \multicolumn{2}{@{}l}{\LOCAL}
& \multicolumn{2}{l}{\SLOCAL}
& \multicolumn{2}{l}{dynamic-}
& online-
\\
&&
&&
& \multicolumn{2}{l}{\LOCAL}
& \LOCAL
\\
\midrule
LCLs in paths and cycles
& $O(\log^* n)$ &\myeq& $O(1)$ &\myeq& $O(1)$ &\myeq& $O(1)$ \\
& $\Theta(n)$ &\myeq& $\Theta(n)$ &\myeq& $\Theta(n)$ &\myeq& $\Theta(n)$ \\
\midrule
LCLs in rooted regular trees
& $O(\log^* n)$ &\myeq& $O(1)$ &\myeq& $O(1)$ &\myeq& $O(1)$ \\
& $\Theta(\log n)$ &\myeq& $\Theta(\log n)$ &\myeq& $\Theta(\log n)$ &\myeq& $\Theta(\log n)$ \\
& $n^{\Theta(1)}$ &\myeq& $n^{\Theta(1)}$ &\myeq& $n^{\Theta(1)}$ &\myeq& $n^{\Theta(1)}$ \\
\bottomrule
\end{tabular}
\caption{In all four models, LCL problems have got the same locality classes in paths, cycles, and rooted trees. Here $n^{\Theta(1)}$ refers to locality $\Theta(n^\alpha)$ for some constant $\alpha > 0$. See \cref{sec:collapse} for more details.}\label{tab:classes}
\end{table}

In \cref{sec:collapse}, we study LCL problems in \emph{paths}, \emph{cycles}, and \emph{rooted regular trees}, and we show that all four models are approximately equally strong in these settings:
\begin{equation}\label{eq:tree-approx}
    \LOCAL \approx \SLOCAL \approx \dynamicLOCAL \approx \onlineLOCAL.
\end{equation}
For example, we show that if the locality of an LCL problem in rooted trees is $n^{\Theta(1)}$ in the \LOCAL model, it is also $n^{\Theta(1)}$ in the \dynamicLOCAL, \SLOCAL, and \onlineLOCAL models. Refer to \cref{tab:classes} for the full classification.

By previous work, we know that LCL complexities in paths, cycles, and rooted regular trees are \emph{decidable} in the \LOCAL model \cite{balliu21rooted-trees,chang21automata-theoretic,balliu19lcl-decidability}. Our equivalence result allows us to extend this decidability to the \SLOCAL, \dynamicLOCAL, and \onlineLOCAL models. For example, there is an algorithm that gets as input the description of an LCL problem in rooted trees and produces as output in which of the classes of \cref{tab:classes} it is, for any of the four models.

\subsection{Contribution 3: 3-coloring bipartite graphs in \onlineLOCAL}\label{ssec:contribution3}

Given the equivalence results for LCLs in paths, cycles, and rooted regular trees, it would be tempting to conjecture that the models are approximately equal for LCLs in any graph class.
In \cref{sec:bipartite-separation}, we show that this is not the case:
we provide an exponential separation between the \SLOCAL and \onlineLOCAL models for the problem of $3$-coloring bipartite graphs. By prior work it is known that in the \LOCAL model, the locality of $3$-coloring is $\Omega(n^{1/2})$ in two-dimensional grids \cite{brandt17grid-lcl}, which are a special case of bipartite graphs; using this result we can derive a lower bound of $\Omega(n^{1/10})$ also for the \SLOCAL model (see \cref{app:slocal-3-coloring}). In \cref{sec:bipartite-separation}, we prove the following:
\begin{restatable}{theorem}{bipartitecoloring}\label{thm:bipartitecoloring}
    There is an \onlineLOCAL algorithm that finds a $3$-coloring in bipartite graphs with locality $O(\log n)$.
\end{restatable}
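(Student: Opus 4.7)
The plan is to design an \onlineLOCAL algorithm with locality $T = O(\log n)$ that outputs a proper $3$-coloring of the bipartite input. Colors $1$ and $2$ will represent the two sides of a locally computed bipartition, while color $3$ serves as an escape for conflicts between incompatible local orientations.

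When a node $v$ is revealed, the algorithm inspects $B(v, T)$. Since $G$ is bipartite, so is $B(v, T)$, and the parity $p(u) \in \{0, 1\}$ of every $u$ connected to $v$ inside the ball is well-defined (it is just the distance from $v$ modulo $2$ in the ball). The algorithm selects a canonical ``anchor'' in $B(v, T)$---for example, the already-colored node with the smallest identifier that is reachable from $v$ inside the ball and currently carries color $1$ or $2$---and, using the anchor's color together with its parity from $v$, determines which side of the bipartition $v$ belongs to. The algorithm outputs $1$ or $2$ accordingly, unless this choice conflicts with some already-colored node in the ball, in which case it outputs $3$.

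For any edge $\{u,v\}$, assume without loss of generality that $u$ is processed first. Then $u$ lies in $B(v, T)$ and is already colored by the time $v$ is processed. If $u$'s color is in $\{1,2\}$ and compatible with the bipartition computed from $B(v, T)$, the algorithm assigns $v$ the opposite bipartition color; otherwise it assigns $3$, which differs from $u$'s color provided $u$ itself is not colored $3$. Ensuring that this last provision holds---that adjacent nodes are never both colored $3$, and that no node is surrounded by neighbors of all three colors---is the central structural claim.

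The main obstacle is precisely this claim. The argument should show that any two nodes colored $3$ arise from incompatible anchor-orientation commitments, which, thanks to the hierarchical anchor rule and the $O(\log n)$ lookaround, can only occur if their anchors lie at distance $\Omega(\log n)$ in the graph; such nodes cannot be adjacent, so no edge has both endpoints colored $3$. Formalizing this likely requires defining the anchor rule via an implicit low-diameter decomposition of $G$ into clusters of diameter $O(\log n)$ induced by a canonical tournament on identifiers, and verifying that this decomposition is robust against adversarial query orders---so that each node's color is determined solely by its $O(\log n)$-ball and not by the processing order. I expect this consistency analysis, rather than the algorithm itself, to be where most of the technical effort goes.
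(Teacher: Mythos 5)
Your plan has a genuine gap, and it sits exactly where you point: the ``central structural claim'' that no edge gets both endpoints colored $3$. An anchor-based rule cannot deliver it. When two regions of the graph are revealed far apart, any algorithm must commit them to $2$-colorings whose parities may be globally incompatible; when the adversary later reveals the nodes joining them, the interface between the two regions can be a long path of edges on which \emph{both} endpoints are in conflict with their respective anchors. Your rule then colors adjacent nodes $3$, or forces an unbounded cascade that the proposal does not control. The claim that conflicting nodes must have anchors at distance $\Omega(\log n)$ and hence cannot be adjacent is unjustified and false in spirit: adjacent nodes right at the seam of two independently anchored clusters naturally have different anchors. Moreover, your stated goal that each node's color be ``determined solely by its $O(\log n)$-ball and not by the processing order'' is provably unattainable: such a rule would immediately give an \SLOCAL (indeed essentially \LOCAL) algorithm with locality $O(\log n)$ for $3$-coloring grids, contradicting the $\Omega(n^{1/2})$ \LOCAL and $\Omega(n^{1/10})$ \SLOCAL lower bounds that the paper itself relies on. Any correct algorithm must exploit what \onlineLOCAL adds beyond \SLOCAL, namely global memory and adaptive commitments to not-yet-queried nodes.

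The paper's proof supplies the two ingredients you are missing. First, conflicts are resolved not per node but per \emph{region}: the algorithm maintains properly $2$-colored groups (connected components of the revealed subgraph), and when two groups of incompatible parity meet, it surrounds one of them with a closed \emph{barrier}: a layer of the third color pre-committed onto not-yet-revealed nodes, sandwiched between layers of the two ordinary colors, which flips that group's effective parity. Because the barrier is a coordinated layer, third-colored nodes are never adjacent to each other. Second, there is a quantitative charging argument bounding how deep this can nest: each group carries a border count, the group with the \emph{smaller} count is the one that gets surrounded, and an induction shows a group with border count $b$ has at least $2^b$ nodes, so $b \le \log_2 n$. Since each barrier pushes commitments at most three steps outward, all commitments stay within distance $3\lceil \log_2 n\rceil$ of some queried node, which is what yields locality $O(\log n)$. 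Your proposal has no analogue of either the barrier construction or this potential argument, and without them there is no mechanism forcing conflicts to be resolved within $O(\log n)$ distance.
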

That is, in bipartite graphs, there is an LCL problem that requires locality $n^{\Omega(1)}$ in the \LOCAL and \SLOCAL models and is solvable with locality $O(\log n)$ in the \onlineLOCAL model.

The algorithm that we will present for coloring bipartite graph is also interesting from the perspective of competitive analysis of online algorithms. With locality $O(\log n)$, the \onlineLOCAL algorithm can compute a $3$-coloring. Since bipartite graphs are $2$-colorable, this gives us a $1.5$-competitive \onlineLOCAL algorithm. On the other hand, it has been shown that any online algorithm for coloring bipartite graphs is at least $\Omega(\log n)$-competitive~\cite{10.2307/2272247}, with a matching algorithm presented in \cite{LOVASZ1989319}. This result shows how much the competitive ratio of an algorithm can be improved by increasing the view of each node.

\subsection{Contribution 4: locality of online coloring}

As a corollary of our work, together with results on distributed graph coloring from prior work \cite{linial92,brandt17grid-lcl,cole86deterministic}, we have now a near-complete understanding of the locality of graph coloring in paths, cycles, rooted trees, and grids in both distributed and online settings.
\Cref{tab:coloring} summarizes our key results.
For the proofs of the localities in the \onlineLOCAL model, see \cref{sec:bipartite-separation,sec:collapse}.

\begin{table}
    \centering
    \begin{tabular}{@{}l@{\quad\quad}llllll@{}}
    \toprule
    & colors & competitive & \LOCAL & \SLOCAL & online- & references \\
    && ratio &&& \LOCAL & \\
    \midrule
    Rooted trees
    & $2$ & $1$ & $\Theta(n)$ & $\Theta(n)$ & $\Theta(n)$ & trivial \\
    and paths
    & $3$ & $1.5$ & $\Theta(\log^* n)$ & $O(1)$ & $O(1)$ & \cite{linial92,cole86deterministic} \\
    & $4$ & $2$ & $\Theta(\log^* n)$ & $O(1)$ & $O(1)$ & \cite{linial92,cole86deterministic} \\
    & \ldots \\
    \midrule
    Grids & $2$ & $1$ & $\Theta(n^{1/2})$ & $\Theta(n^{1/2})$ & $\Theta(n^{1/2})$ & trivial \\
    & $3$ & $1.5$ & $\Theta(n^{1/2})$ & \boldmath $\Omega(n^{1/10})$ & \boldmath $O(\log n)$ & \cref{sec:bipartite-separation}, \cite{brandt17grid-lcl} \\
    & $4$ & $2$ & $\Theta(\log^* n)$ & $O(1)$ & $O(1)$ & \cite{brandt17grid-lcl} \\
    & $5$ & $2.5$ & $\Theta(\log^* n)$ & $O(1)$ & $0$ & \cite{brandt17grid-lcl} \\
    & \ldots \\
    \bottomrule
    \end{tabular}
    \caption{The locality of the vertex coloring problem in distributed vs.\ online settings, for two graph families: rooted trees and paths (with $n$ nodes) and $2$-dimensional grids (with $\sqrt{n} \times \sqrt{n}$ nodes). See \cref{sec:bipartite-separation,sec:collapse,app:slocal-3-coloring} for more details.}
    \label{tab:coloring}
\end{table}

\subsection{Motivation}

Before we discuss the key technical ideas, we will briefly explain the practical motivation for the study of \onlineLOCAL and \dynamicLOCAL models. As a running example, consider the challenge of providing public services (e.g.\ local schools) in a rapidly growing city. The future is unknown, depending on future political decisions, yet the residents will need services every day.

The offline solution would result in a city-wide redesign of e.g.\ the entire school network every time the city plan is revised; this is not only costly but also disruptive. On the other hand, a strict online solution without any consideration of the future would commit to a solution that is far from optimal. The models that we study in this work capture the essence of two natural strategies for coping with such a situation:
\begin{itemize}
    \item Redesign the public service network only in the local neighborhoods in which there are new developments. This corresponds to the \dynamicLOCAL model, and the locality parameter $T$ captures cost the redesign cost and the disruption it causes.
    \item Wait until new developments in a neighborhood are completed before providing permanent public services in the area. This corresponds to the \onlineLOCAL model, and locality $T$ captures the inconvenience of the residents (e.g., distance to the nearest school).
\end{itemize}
These two models make it possible to formally explore trade-offs between the quality of the solution in the long term vs.\ the inconvenience of those living close to the areas where the city is changing. In these kinds of scenarios the key challenge is not related to the computational cost of finding an optimal solution (which is traditionally considered in the context of dynamic graph algorithms) but to the quality of the solution (which is typically the focus in online algorithms). The key constraint is not the availability of information on the current state of the world (which is traditionally considered in distributed graph algorithms), but the cost of changing the solution.

\subsection{Techniques and key ideas}

For the equivalence \eqref{eq:tree-approx} in paths and cycles (\cref{ssec:collapse-paths-cycles}), we first make use of \emph{pumping-style arguments} that were introduced by \citet*{Chang2017ATH} in the context of distributed algorithms. We show that such ideas can be used to also analyze locality in the context of online algorithms: we start by showing that we can ``speed up'' (or ``further localize'') \onlineLOCAL algorithms with a \emph{sublinear} locality to \onlineLOCAL algorithms with a \emph{constant} locality in paths and cycles. Then, once we have reached constant locality in the \onlineLOCAL model, we show how to turn it into a \LOCAL-model algorithm with locality $O(\log^* n)$. In this part, the key insight is that \emph{we cannot directly simulate} \onlineLOCAL in \LOCAL. Instead, we can use an \onlineLOCAL algorithm with a constant locality to find a \emph{canonical labeling} for each possible input-labeled fragment, and use this information to design a \LOCAL-model algorithm. The main trick is that we first present only disconnected path fragments to an \onlineLOCAL algorithm, and force it to commit to some output labeling in each fragment without knowing how the fragments are connected to each other.

In the case of rooted regular trees (\cref{ssec:collapse-rooted-trees}), we face the same fundamental challenge: we cannot directly simulate black-box \onlineLOCAL algorithms in the \LOCAL model. Instead, we will need to look at the combinatorial properties of a given LCL problem $\Pi$. We proceed in two steps: (1)~Assume that the locality of $\Pi$ is $n^{\Theta(1)}$ in the \LOCAL model; we need to show that the locality is $n^{\Theta(1)}$ also in the \onlineLOCAL model. Using the result of \cite{balliu21rooted-trees}, high \LOCAL-model locality implies that the structure of $\Pi$ has to have certain ``inflexibilities'', and we use this property to present a strategy that the adversary can use to force any \onlineLOCAL algorithm with locality $n^{o(1)}$ to fail. (2)~Assume that we have an \onlineLOCAL algorithm $A$ for $\Pi$ with locality $o(\log n)$; we need to show that the locality is $O(\log^* n)$ in the \LOCAL model. Here we design a family of inputs and a strategy of the adversary that forces algorithm $A$ to construct a ``certificate'' (in the sense of \cite{balliu21rooted-trees}) that shows that $\Pi$ is efficiently solvable in the \LOCAL model.

For $3$-coloring bipartite graphs in \onlineLOCAL (\cref{sec:bipartite-separation}), we make use of the following ideas. We maintain a collection of graph fragments such that each of the fragments has got a boundary that is properly $2$-colored. Each such fragment has got one of two possible parities (let us call them here ``odd'' and ``even'') with respect to the underlying bipartition. We do not know the global parity of a given graph fragment until we have seen almost the entire graph. Nevertheless, it is possible to merge two fragments and maintain the invariant: if two fragments $A$ and $B$ have parities that are not compatible with each other, we can surround either $A$ or $B$ with a \emph{barrier} that uses the third color, and thus change parities. Now we can merge $A$ and $B$ into one fragment that has got a properly $2$-colored boundary. The key observation here is that we can make a \emph{choice} between surrounding $A$ vs.\ $B$, and if we always pick the one with the smallest number of nested barriers, we will never need to use more than a logarithmic number of nested barriers. It turns out that this is enough to ensure that seeing up to distance $O(\log n)$ suffices to color any node chosen by the adversary.

\subsection{Open questions}

Our work gives rise to a number of open questions. First, we can take a more fine-grained view of the results in \cref{tab:classes,tab:coloring}:
\begin{enumerate}
    \item Is there any problem in rooted trees with locality $\Theta(n^{\alpha})$ in the \onlineLOCAL model and locality $\Theta(n^{\beta})$ in the \LOCAL model, for some $\alpha < \beta$?
    \item Is it possible to find a $3$-coloring in $2$-dimensional graphs in the \dynamicLOCAL model with locality $O(\log n)$?
    \item Is it possible to find a $3$-coloring in bipartite graphs in the \onlineLOCAL model with locality $o(\log n)$?
\end{enumerate}
Perhaps even more interesting is what happens if we consider \emph{unrooted} trees instead of rooted trees. In unrooted trees we can separate \emph{randomized} and \emph{deterministic} versions of the \LOCAL model \cite{chang16}, and \SLOCAL is strong enough to derandomize randomized \LOCAL-model algorithms \cite{DerandomizingLocal}; hence the key question is:
\begin{enumerate}[resume]
    \item Does $\randomizedLOCAL \approx \SLOCAL \approx \dynamicLOCAL \approx \onlineLOCAL$ hold for LCL problems in unrooted trees?
\end{enumerate}
Finally, our work shows a trade-off between the competitive ratio and the locality of coloring: With locality $O(\log n)$, one can achieve $O(1)$-coloring of a bipartite graph, and to achieve locality $0$, one needs to use $\Omega(\log n)$ colors. This raises the following question:
\begin{enumerate}[resume]
    \item What tradeoffs exist between the locality and number of colors needed to color a (bipartite) graph in the \onlineLOCAL model?
\end{enumerate}

\section{Definitions and related work}\label{sec:def-related}

Throughout this work, graphs are simple, undirected, and finite, unless otherwise stated. We write $G = (V,E)$ for a graph $G$ with the set of nodes $V$ and the set of edges $E$, and we use $n$ to denote the number of nodes in the graph. For a node $v$ and a natural number $T$, we will use $B(v,T)$ to denote the set of all nodes in the radius-$T$ neighborhood of node $v$. For a set of nodes $U$, we write $G[U]$ for the subgraph of $G$ induced by $U$. By radius-$T$ neighborhood of $v$ we refer to the induced subgraph $G[B(v,T)]$, together with possible input and output labelings.

We will use the following notation for graph problems. We write $\mathcal{G}$ for the family of graphs, $\Sigma$ for the set of input labels, and $\Gamma$ for the set of output labels. For a graph $G = (V,E)$, we write $I\colon V \to \Sigma$ for the input labeling and $L\colon V \to \Gamma$ for the output labeling. We consider here node labelings, but edge labelings can be defined in an analogous manner. A \emph{graph problem} $\Pi$ associates with each possible \emph{input} $(G,I)$ a set of feasible \emph{solutions} $L$; this assignment must be invariant under graph isomorphism.

\paragraph{Locality.}

In what follows, we will define five models of computing: \LOCAL, \SLOCAL, two versions of \dynamicLOCAL, and \onlineLOCAL. In all of these models, an algorithm is characterized by a \emph{locality} $T$ (a.k.a.\ locality radius, local horizon, time complexity, or round complexity, depending on the context). In general, $T$ can be a function of $n$. We will assume that the algorithm knows the value of $n$.

In each of these models $\mathcal{M}$, we say that algorithm $\algoa$ solves problem $\Pi$ if for each possible input $(G,I)$ and for each possible adversarial choice, the labeling $L$ produced by $\algoa$ is a feasible solution. We say that problem $\Pi$ has locality $T$ in model $\mathcal{M}$ if $T$ is the pointwise smallest function such that there exists an $\mathcal{M}$-model algorithm $\algoa$ that solves $\Pi$ with locality at most $T$.

\paragraph{\boldmath \LOCAL model.}

In the \LOCAL model of distributed computing \cite{linial92,Peleg2000}, the adversary labels the nodes with unique identifiers from $\{1,2,\dotsc,\poly(n)\}$. In a \LOCAL model algorithm, each node in parallel chooses its local output based on its radius-$T$ neighborhood (the output may depend on the graph structure, input labels, and the unique identifiers).

\citet*{naor95} initiated the study of the locality of LCL problems (see \cref{def:lcl}) in the \LOCAL model. Today, LCL problems are well classified with respect to their locality for the special cases of paths~\cite{balliu19lcl-decidability,balliu20binary-labeling,brandt17grid-lcl,chang21automata-theoretic,naor95}, grids~\cite{brandt17grid-lcl}, directed and undirected trees~\cite{Chang2017ATH,balliu20binary-labeling,balliu19weak-col,balliu21rooted-trees,chang2020complexity} as well as general graphs~\cite{naor95,brandt17grid-lcl}, with only a few unknown gaps~\cite{balliu21rooted-trees}.

\paragraph{\boldmath \SLOCAL model.}

In the \SLOCAL model \cite{slocal}, we have got adversarial unique identifiers similar to the \LOCAL model, but the nodes are processed sequentially with respect to an adversarial input sequence $ \sigma = v_1,v_2,v_3,\ldots, v_n$. Each node $v$ is equipped with an unbounded local memory; initially, all local memories are empty. When a node $v$ is processed, it can query the local memories of the nodes in its radius-$T$ neighborhood, and based on this information, it has to decide what is its own final output and what to store in its own local memory.

The \SLOCAL model has been used as a tool to e.g.\ better understand the role of randomness in the \LOCAL model \cite{slocal,DerandomizingLocal}. It is also well-known that \SLOCAL is strictly stronger than \LOCAL. For example, it is trivial to find a maximal independent set greedily in the \SLOCAL model, while this is a nontrivial problem in the general case in the \LOCAL model \cite{kuhn16local,balliu21mm}. There are many LCL problems with \LOCAL-locality $\Theta(\log^* n)$ \cite{linial92,cole86deterministic}, and all of them have \SLOCAL-locality $O(1)$. There are also LCL problems (e.g.\ the so-called \emph{sinkless orientation} problem), where the locality in the (deterministic) \LOCAL model is $\Theta(\log n)$, while the locality in the (deterministic) \SLOCAL model is $\Theta(\log \log n)$ \cite{chang16,DerandomizingLocal}.

\paragraph{\boldmath \DynamicLOCAL model.}

To our knowledge, there is no standard definition or name for what we call \dynamicLOCAL here; however, the idea has appeared implicitly in a wide range of work. For example, many efficient dynamic algorithms for graph problems such as vertex or edge coloring, maximal independent set, and maximal matching also satisfy the property that the solution is only modified in the (immediate) local neighborhood of a point of change \cite{10.1007/3-540-57899-4_44,10.1145/2700206,10.1145/3188745.3188922,Barenboim17,DBLP:journals/corr/abs-1804-01823,doi:10.1137/1.9781611975031.1,du2018improved}, and hence all of them fall in the class \dynamicLOCAL.

We use the following definition for \dynamicLOCAL: Computation starts with an empty graph $G_0$. In step $i$, the adversary constructs a supergraph $G_i$ of $G_{i-1}$ such that $G_i$ and $G_{i-1}$ differ in only one edge or one node; let $C_i$ denote the set of nodes $v$ in $G_i$ with $G_i[B(v,T)] \ne G_{i-1}[B(v,T)]$, i.e., nodes that are within distance at most $T$ from the point of change. In each step, the algorithm has to produce a feasible labeling $L_i$ for problem $\Pi$ in graph $G_i$, and the labeling can only be modified in the local neighborhood of a point of change, i.e., $L_i(v) = L_{i-1}(v)$ for all $v \notin C_i$.

Note that we defined the \dynamicLOCAL model for the \emph{incremental} case, where nodes and edges are only added.
If we do not require that $G_i$ is a supergraph of $G_{i-1}$, we arrive at what we call the $\dynamicLOCAL^\pm$ model with both additions and deletions.

\paragraph{Online graph algorithms.}

In online graph algorithms, nodes are processed sequentially with respect to an adversarial input sequence $\sigma = v_1,v_2,\dotsc,v_n$. Let $\sigma_i = v_1,v_2,\dotsc,v_i$ denote the first $i$ nodes of the sequence, and let $G_i = G[\{v_1,v_2,\dotsc,v_i\}]$ be the subgraph induced by these nodes. When the adversary presents a node $v_i$, the algorithm has to label $v_i$ based on  $\sigma_i$ and $G_i$.

Online algorithms on graphs have been studied for many problems such as matching~\cite{onlineMatching} and independent set~\cite{10.1016/S0304-3975(01)00411-X}, but closest to our work is the extensive literature on online graph coloring \cite{10.2307/2272247,albers2021tight,doi:10.1002/jgt.3190120212,HALLDORSSON1994163,HALLDORSSON1997265,LOVASZ1989319,VISHWANATHAN1992657}. There is also prior work that has considered various ways to strengthen the notion of online algorithms; the performance of online algorithms can be improved by letting the algorithm know the input graph~\cite{OnlineColoringKnownGraphs,10.1007/978-3-642-38016-7_2}, by giving it an advice string~\cite{onlineProblemsAdvice,10.1007/978-3-642-32241-9_44,10.1007/978-3-662-49192-8_19} with knowledge about the request sequence, or allowing the algorithm to delay decisions~\cite{onlineMatchingDelays}. The \onlineLOCAL model can be interpreted as online graph algorithms with spatial advice, and it can also be interpreted as a model where the online algorithm can delay its decision for node~$v$ until it has seen the whole neighborhood around $v$ (this interpretation is equivalent to the definition we give next).

\paragraph{\boldmath \OnlineLOCAL model.}

We define the \onlineLOCAL model as follows. The nodes are processed sequentially with respect to an adversarial input sequence $\sigma = v_1,v_2,\dotsc,v_n$. Let $\sigma_i = v_1,v_2,\dotsc,v_i$ denote the first $i$ nodes of the sequence, and let
$
    G_i = G\bigl[\bigcup_{j=1}^{i}B(v_j,T)\bigr]
$
be the subgraph induced by the radius-$T$ neighborhoods of these nodes. When the adversary presents a node $v_i$, the algorithm has to label $v_i$ based on  $\sigma_i$ and $G_i$.

Observe that any online graph algorithm is an \onlineLOCAL algorithm with locality $0$. Further note that in the \onlineLOCAL model, unique identifiers would not give any additional information. This is because the nodes can always be numbered with respect to the point in time when the algorithm first sees the them in some $G_i$.

Yet another way to interpret the \onlineLOCAL model is that it is an extension of the \SLOCAL model, where the algorithm is equipped with unbounded \emph{global memory} where it can store arbitrary information on what has been revealed so far. When they introduced the \SLOCAL model, \citet*{slocal} mentioned the possibility of such an extension but pointed out that it would make the model ``too powerful'', as just one bit of global memory would already make it possible to solve e.g.\ leader election (and this observation already shows that  the \onlineLOCAL model is indeed strictly stronger than the \SLOCAL model). In our work we will see that even though \onlineLOCAL can trivially solve e.g.\ leader election thanks to the global memory, it is not that easy to exploit this extra power in the context of LCL problems. Indeed, \onlineLOCAL turns out to be as weak as \SLOCAL when we look at LCL problems in paths, cycles, and rooted trees.

\paragraph{Local computation algorithms.}

We do not discuss local computation algorithms (LCAs)~\cite{rubinfeld11,10.1007/978-3-642-31594-7_55,doi:10.1137/1.9781611973099.89,even2014deterministic,mansour2013local,rubinfeld11} in this work in more detail, but we briefly point out a direct connection between the \onlineLOCAL model and LCAs. It is known that for a broad family of graph problems (that includes LCLs), we can w.l.o.g.\ assume that whenever the adversary queries a node $v$, the LCA will make probes to learn a \emph{connected} subgraph around node $v$ \cite{goos16nonlocal}. For such problems, an \onlineLOCAL algorithm with locality $T$ is at least as strong as an LCA that makes $T$ probes per query: an LCA can learn some \emph{subgraph} of the radius-$T$ neighborhood of $v$ and, depending on the size of the state space, remember some part of that, while in the \onlineLOCAL model we can learn the entire radius-$T$ neighborhood of $v$ and remember all of that.
We will leave a more detailed exploration of the distinction between distance (how far to see) and volume (how much to see), in the spirit of e.g.\ \cite{rosenbaum20lcl-volume,melnyk22mending-volume}, for future work.

\section{Landscape of models}\label{sec:landscape}

As an introduction to the models, we will first check that all relations in \cref{fig:diagram} indeed hold. In each case, we are interested in \emph{asymptotic} equivalence: for example, when we claim that $A \subsetneq B$, the interpretation is that locality $T$ in model $A$ implies locality $O(T)$ in model $B$, but the converse is not true. Note that the relation between the $\onlineLOCAL$ problems and the online graph algorithms has already been discussed in \cref{sec:intro,sec:def-related}.

\paragraph{Inclusions.}

Let us first argue that the subset relations in \cref{fig:diagram} hold. These cases are trivial:
\begin{itemize}
    \item Any \LOCAL algorithm can be simulated in the \SLOCAL model, and any \SLOCAL algorithm can be simulated in the \onlineLOCAL model (this is easiest to see if one interprets \onlineLOCAL as an extension of \SLOCAL with the global memory).
    \item Any $\dynamicLOCAL^\pm$ algorithm can be directly used in the $\dynamicLOCAL$ model (an algorithm that supports both additions and deletions can handle additions).
\end{itemize}
These are a bit more interesting cases:
\begin{itemize}
    \item To simulate a \LOCAL algorithm $A$ in the $\dynamicLOCAL^\pm$ model, we can simply recompute the entire output with $A$ after each change. If the locality of $A$ is $T$, then the output of $A$ only changes within distance $T$ from a point of change.
    \item To simulate a \dynamicLOCAL algorithm $A$ in the \onlineLOCAL model, we proceed as follows. When the adversary reveals a node~$v$, we feed $v$ along with the new nodes in its radius-$O(T)$ neighborhood to $A$ edge by edge. Now there will not be any further changes within distance $T$ from $v$, and hence $A$ will not change the label $L(v)$ of $v$ any more. Hence the \onlineLOCAL algorithm can also label $v$ with $L(v)$.
\end{itemize}

\paragraph{Separations.}

To prove the separations of \cref{fig:diagram}, we make use of the classic distributed graph problem of \emph{$3$-coloring paths}, as well as the following problems that are constructed to highlight the differences between the models:
\begin{itemize}
    \item \emph{Weak reconstruction:} in each connected component $C$ there has to be at least one node $v$ such that its label $L(v)$ is an encoding of a graph isomorphic to $C$.
    \item \emph{Cycle detection:} for each cycle at least one node of the cycle has to correctly report that it is part of a cycle (more precisely, each node that outputs `yes' has to be part of at least one cycle, and for each cycle there has to be at least one node that outputs `yes').
    \item \emph{Component-wise leader election:} in each connected component exactly one node has to be marked as the leader.
    \item \emph{Nested orientation:} find an acyclic orientation of the edges and label each node recursively with its own identifier, the identifiers of its neighbors, and the labels of its in-neighbors (see \cref{def:nested-orientation} for the precise definition).
\end{itemize}
We can prove the bounds shown in \cref{tab:separations} for the locality of these problems in the five models; see \cref{app:separations,sec:incomparability-dynLOCAL-SLOCAL} for the details. Now each separation in \cref{fig:diagram} follows from one of the rows of \cref{tab:separations}.

\begin{table}
\centering
\begin{tabular}{@{}llllll@{}}
\toprule
Problem
& \LOCAL
& \SLOCAL
& dynamic-
& dynamic-
& online-
\\
&
&
& $\LOCAL^\pm$
& \LOCAL
& \LOCAL
\\
\midrule
$3$-coloring paths & $\Omega(\log^* n)$ & $O(1)$ & $O(1)$ & $O(1)$ & $O(1)$ \\
weak reconstruction & $\Omega(n)$ & $\Omega(n)$ & $O(1)$ & $O(1)$ & $O(1)$ \\
cycle detection & $\Omega(n)$ & $\Omega(n)$ & $\Omega(n)$ & $O(1)$ & $O(1)$ \\
component-wise leader election & $\Omega(n)$ & $\Omega(n)$ & $\Omega(n)$ & $\Omega(n)$ & $O(1)$ \\
nested orientation & $\omega(1)$ & $O(1)$ & $\omega(1)$ & $\omega(1)$ & $O(1)$ \\
\bottomrule
\end{tabular}
\caption{Problems that we use to separate the models, and the bounds that we show for their locality.}\label{tab:separations}
\end{table}

\section{3-coloring bipartite graphs in the \onlineLOCAL model}\label{sec:bipartite-separation}

We will next present our Contribution~\hyperref[ssec:contribution3]{3}: we design an algorithm for $3$-coloring bipartite graphs in the \onlineLOCAL model.
This section will also serve as an introduction into the algorithmic techniques that work in this model.
Equipped with this understanding, we will then in \cref{sec:collapse} start to develop more technical tools that we will need for our Contribution~\hyperref[ssec:contribution2]{2}.

The $3$-coloring algorithm shows the advantage of allowing online algorithms to look around: while the best online coloring algorithm on bipartite graphs is $\Theta(\log n)$-competitive, our algorithm in the \onlineLOCAL model achieves a competitive ratio of $1.5$. Moreover, this algorithm provides an exponential separation between the \SLOCAL and \onlineLOCAL models for a natural LCL problem.

Note that an optimal solution would be to color a bipartite graph with $2$ colors. In all models that we consider here, we know it is not possible to solve $2$-coloring with locality $o(n)$, the worst case being a path with $n$ nodes.
We will show that allowing an \onlineLOCAL algorithm to use only one extra color makes it possible to find a valid coloring with locality $O(\log n)$:
\bipartitecoloring*

In \cref{ssec:bip-coloring}, we will introduce the $3$-coloring algorithm, and we will use the special case of grids in order to visualize it.
Later, in \cref{ssec:grid-coloring}, we will discuss the implications of our algorithm for coloring grids.

\subsection{Algorithm for 3-coloring bipartite graphs}\label{ssec:bip-coloring}

\paragraph{Algorithm overview.}

The high-level idea of our \onlineLOCAL algorithm is to color the presented nodes of the graph with $2$ colors until we see two areas where the $2$-colorings are not compatible. In essence, when the adversary presents a node far from any other node we have seen, we blindly start constructing a $2$-coloring. When the adversary presents nodes in the neighborhood of already colored nodes, we simply expand the $2$-colored ``blob''. We keep expanding such properly $2$-colored blobs until, eventually, two blobs with incompatible $2$-colorings meet (i.e., blobs that have different \emph{parities}). Then, we will use the third color in order to create a \emph{barrier} around one of the blobs, effectively flipping its parity. Our algorithm thereby makes use of the knowledge of previously queried neighborhoods that is given to us by the \onlineLOCAL model: the algorithm is \emph{committing} to colors for nodes in the revealed subgraphs before they are queried.

\paragraph{Algorithm in detail.}

\arxivfigurepageon

\begin{figure}
\centerline{\includegraphics[scale=.45]{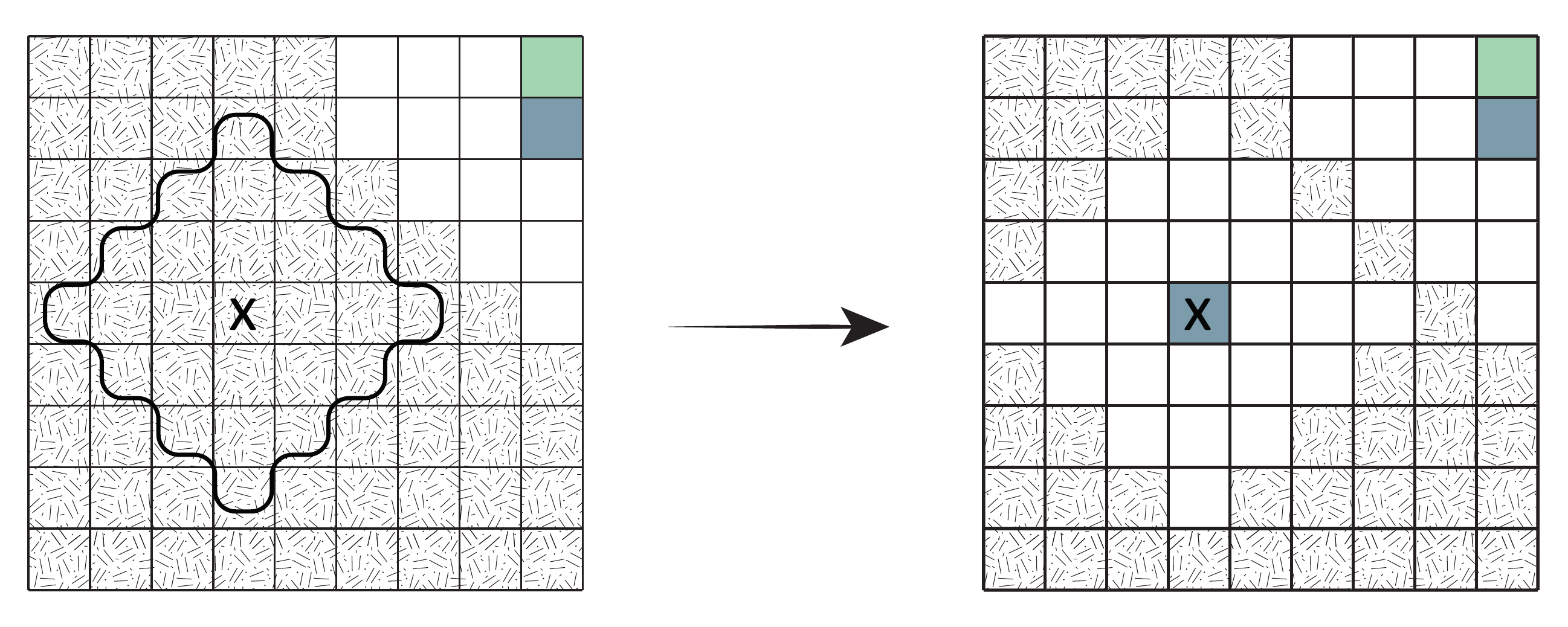}}
\caption{$3$-coloring algorithm, case 1/3. The adversary presents node $x$. Here node $x$ is in the middle of an unseen region (shaded). It will create a new group (white) and we will fix the color of node $x$ arbitrarily.}
\label{fig:3coloring2dgrid-case1}
\Description{}
\end{figure}

\begin{figure}
\centerline{\includegraphics[scale=.45]{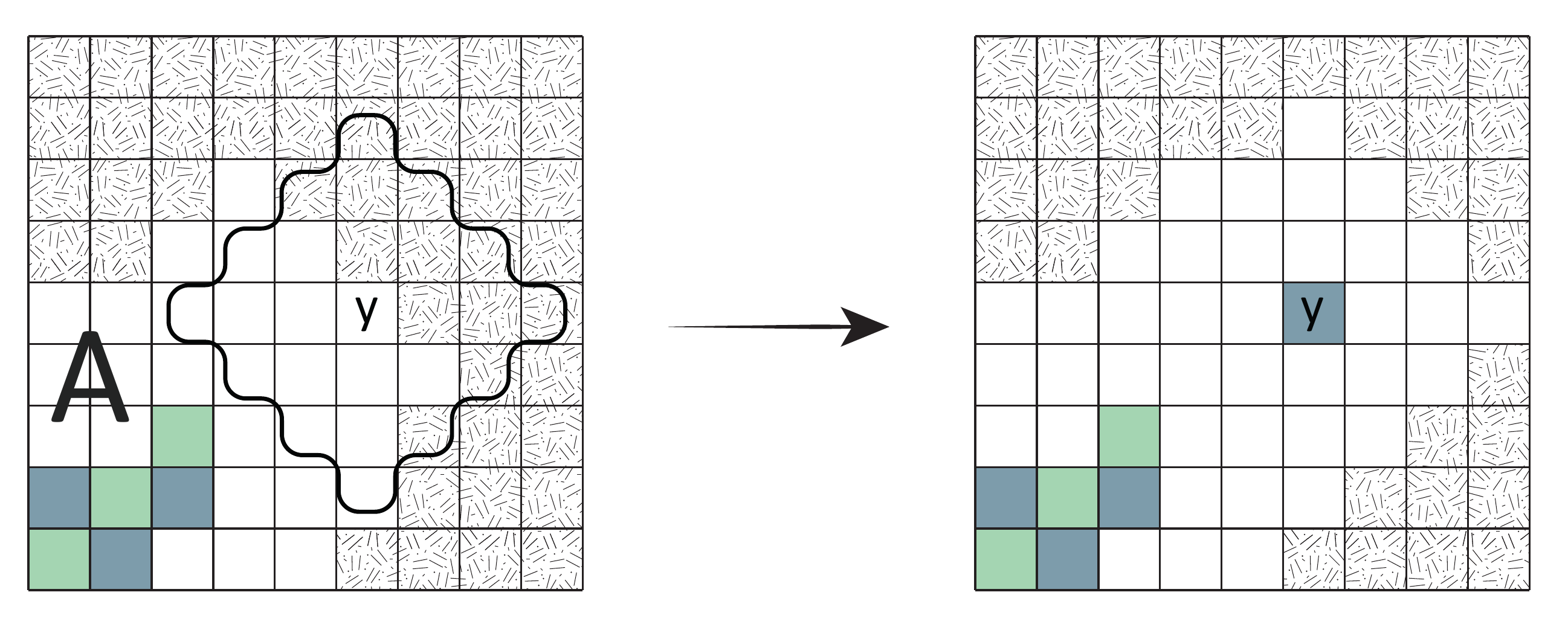}}
\caption{$3$-coloring algorithm, case 2/3. The adversary presents node $y$. Some nodes in the local neighborhood of $y$ are already part of a group (white), and hence $y$ joins this group. We will fix the color of node $y$ so that it is consistent with the coloring of the group.}
\label{fig:3coloring2dgrid-case2}
\Description{}
\end{figure}

\begin{figure}
\centerline{\includegraphics[scale=.45]{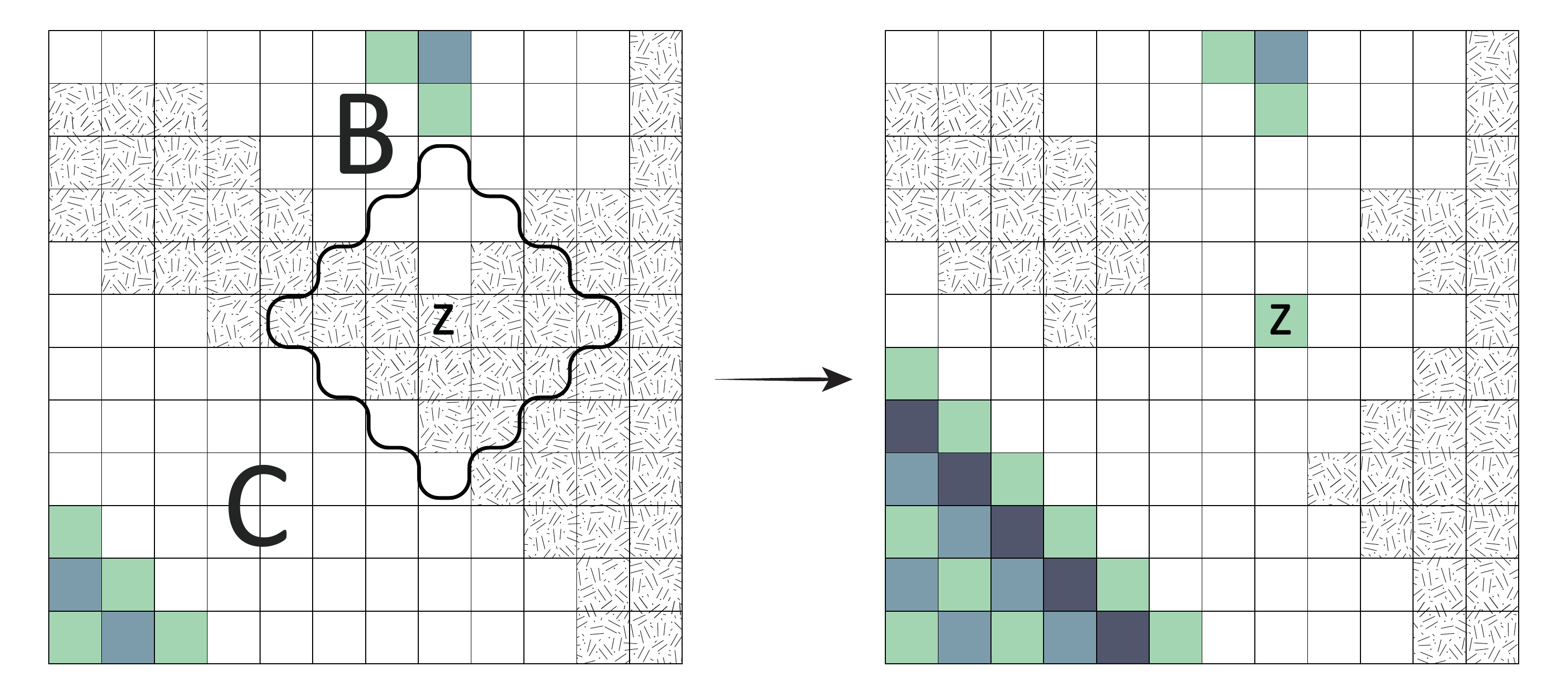}}
\caption{$3$-coloring algorithm, case 3/3. The adversary presents node $z$. Some nodes in the local neighborhood of $z$ belong to two different groups, $B$ and $C$. Hence we merge the groups. As they had incompatible parities, we add a new border around one of the groups, in this case $C$, as both groups had the same number of borders around them and we can choose arbitrarily. Nodes in the local neighborhood of $z$ join the group, and $z$ is colored in a way compatible with the coloring of the newly created group.}
\label{fig:3coloring2dgrid-case3}
\Description{}
\end{figure}

\arxivfigurepageoff

At the beginning, no nodes are revealed to the algorithm, and we therefore say that all nodes are \emph{unseen}. We will refer to connected components of the subgraph $G_i$ of $G$ as \emph{groups}. With each of these groups, we associate a \emph{border count}, which is a natural number that is initially $0$. We will use colors $0$ and $1$ for the $2$-coloring, reserving color $2$ as the barrier color. Each time the adversary points at a node $v_i$, we get to see the radius-$T$ neighborhood $B(v_i, T)$ of this node. Now consider different types of nodes in $B(v_i, T+1)$. There are three different cases that we need to address (we will visualize them in \cref{fig:3coloring2dgrid-case1,fig:3coloring2dgrid-case2,fig:3coloring2dgrid-case3} using grids as an example):
\begin{enumerate}
    \item \emph{All nodes in $B(v_i, T+1)$ are unseen.} In this case, the nodes in $B(v_i, T) $ form a new connected component, i.e. a new \emph{group}. This group has a \emph{border count} of $0$. We will color $v_i$ with $0$, thus fixing the parity for this group (see \cref{fig:3coloring2dgrid-case1}).
    \item \emph{We have already seen some nodes in $B(v_i, T+1)$, but all of them belong to the same group.} In this case, we have been shown an area next to an existing group. If $v_i$ was already committed to a color, we will use that color. Otherwise, we will color $v_i$ according to the $2$-coloring of the group. All nodes in $B(v_i, T)$ are now considered to be in this group (see \cref{fig:3coloring2dgrid-case2}).
    \item \emph{There are nodes in $B(v_i, T+1)$ that belong to different groups.} In this case, we will have to join groups. Here, we will only define the join of two groups $A$ and $B$; if there are more groups, this join can be applied iteratively.
    
    If $A$ and $B$ have different parities (i.e., the $2$-colorings at their boundaries are not compatible), we will take the group with the smaller border count and use a layer of nodes of color $2$ to create a barrier that changes its parity, and then we increase its border count; see \cref{alg:join} for the details. Then, we will join the groups, that are now compatible, and set the border count of the newly created group to the maximum of the border counts of $A$ and~$B$.
    
    By merging all groups in the local neighborhood of $v_i$, we eventually end up in a situation where $v_i$ only sees nodes in a single group, and we are in a scenario similar to case~2 above: nodes in the local neighborhood of $v_i$ also join the newly created group, and if $v_i$ has not already committed to a color, we will color it according to the $2$-coloring of this group (see \cref{fig:3coloring2dgrid-case3}).
\end{enumerate}

\RestyleAlgo{ruled}
\LinesNumbered
\begin{algorithm}[t]
\caption{join\_groups(A, B)}\label{alg:join}
\KwIn{Groups $A, B$}
\KwOut{Group $X$}

\If{A and B have different parities}{
    Let $S$ be the group with the smaller border count. If they are equal, $S=A$\;
    For all nodes of color $0$ in $S$, commit all uncolored neighbors to color $1$\;
    \label{step:join1}
    For all nodes of color $1$ in $S$, commit all uncolored neighbors to color $2$\;
    \label{step:join2}
    For all nodes of color $2$ in $S$, commit all uncolored neighbors to color $0$\;
    \label{step:join3}
    Increase border count of $S$ by $1$.
}
Set all nodes in groups $A, B$ to be in group $X$\;
Set the border count for $X$ to be the maximum of border counts for $A$, $B$ and $S$\;
\Return{$X$}

\end{algorithm}

\subsection{Algorithm analysis}

In order to show correctness of the presented algorithm, we will first prove that this process creates a valid $3$-coloring provided that all our commitments remain within the visible area, that is, inside subgraph $G_i$. Next, we will show that by choosing $T(n)= O(\log n)$, all our commitments will indeed remain inside the visible area. Together, these parts prove \cref{thm:bipartitecoloring}.

\paragraph{Validity of the 3-coloring.}

We will first prove that our algorithm will always continue a valid $3$-coloring, as long as we do not have to make commitments to unseen nodes.
We will consider all three cases of the algorithm individually.
\begin{enumerate}
    \item \emph{All nodes in $B(v_i, T+1)$ are unseen.} In this case, we will color $v_i$ with $0$. As all neighboring nodes were unseen, they have not committed to any color, and thus this case will cause no errors.
    \item \emph{We have already seen some nodes in $B(v_i, T+1)$, but all of them belong to the same group.} In this case, we would either use the committed color or the parity of the group. As previously committed colors do not cause errors, and the group has consistent parity, this case cannot cause any errors.
    \item \emph{There are nodes in $B(v_i, T+1)$ that belong to different groups.} In this case, we want to join groups without breaking the coloring. If the two groups had the same parity, clearly, no errors can be caused by continuing the $2$-coloring. The interesting case is when the two groups had different parities. Then, we need to show that the new commitments made by \cref{alg:join} do not create any errors. 
    
    Let $S$ be the group with the smaller border count. By examining \cref{alg:join}, we can see that all colored nodes that have uncolored neighbors are either of color $0$ or color $1$: only in \cref{step:join2}, nodes can be colored with color $2$, and all of those nodes' neighbors are then colored in \cref{step:join3}. Thus, in order for an error to occur, there either needs to be two nodes of colors $0$ and $1$ that have uncolored neighbors and different parities in $S$, or the algorithm commits to a color of a node that it has not yet seen. This could cause an error, as two groups could commit a single node to two different colors.
    
    As for the first case, we assume that all nodes in $S$ that have uncolored neighbors also have consistent parity. This trivially holds for a group that has border count $0$, as all colored nodes in it have the same parity. From the assumption, it follows that all nodes colored with $1$ in \cref{step:join1} have the same parity, so they cannot create an error. After this, all colored nodes with uncolored neighbors in the group have the same parity, and are colored with $1$. Thus all nodes colored with $2$ in \cref{step:join2} also have the same parity, as do the nodes colored with $0$ in \cref{step:join3}. As these are the only lines where nodes are colored, this procedure cannot create any errors. It also ensures that, after the procedure, the only colored nodes in the group that have uncolored neighbors are the nodes colored in \cref{step:join3}, which have the same parity. Therefore, our assumption holds for all groups. Those nodes also have a parity different from the nodes in $S$ that had uncolored neighbors before this procedure, so in essence, we have flipped the parity of group $S$ to match the parity of the other group.
    
    As for the second case, this can be avoided by choosing a large enough $T$, so that all commitments remain within the visible area of $G_i$. Next, we will discuss how to choose such a~$T$.
\end{enumerate}

\paragraph{Locality of the 3-coloring algorithm.}

In this part, we need to prove that by choosing locality $T(n) = 3\lceil \log_2 n\rceil = O(\log n)$, no nodes outside the visible area of $G_i$ need to be committed.

We first make the observation that a group with border count $b$ contains at least $2^b$ nodes; this is a simple induction:
\begin{enumerate}[leftmargin=10ex]
    \item[$b=0$:] A newly created group contains at least $1$ node.
    \item[$b>0$:] Consider the cases in which \cref{alg:join} returns a group $X$ with border count $b$. One possibility is that $A$ or $B$ already had border count $b$, and hence by assumption it already contained at least $2^b$ nodes. The only other possibility is that both $A$ and $B$ had border count exactly $b-1$, they had different parities, one of the border counts was increased, and hence $X$ has now got a border count of $b$. But, in this case, both $A$ and $B$ contained at least $2^{b-1}$ nodes each.
\end{enumerate}
Hence the border count is bounded by $b \le \log_2 n$ in a graph with $n$ nodes.

We will next consider the maximum distance between a node that the adversary has pointed at and a node with a committed color.
Note that the only place where the algorithm commits a color to a node that the adversary has not revealed yet is when building a border around a group.
There are three steps (\cref{step:join1,step:join2,step:join3}) where the algorithm commits to the color of a neighbor of a committed node, and thus effectively extends the distance by at most one in each step. Therefore, if the border count is $b$, we will in the worst case commit the colors within distance $3b$ from a node that was selected by the adversary.

As we have $b \le \log_2 n$, a locality of $3 \lceil\log_2 n\rceil \ge 3b$ suffices to ensure that all of our commitments are safely within the visible region. This concludes the proof of \cref{thm:bipartitecoloring}.

\subsection{Locality of vertex coloring in grids}
\label{ssec:grid-coloring}

In this section, we give an overview of the locality of coloring in the \LOCAL, \SLOCAL and \onlineLOCAL models in two-dimensional grid graphs with $n = a \times b$ nodes. A two-dimensional grid $G$ is the Cartesian product of a path with $a$ nodes and a path with $b$ nodes.

\paragraph{2-coloring in grids is non-local.}

A grid graph can be colored with $2$ colors, but $2$-coloring is an inherently global problem; in all models that we consider here we know that it is impossible to solve this problem with locality $o(n)$. To see this, observe that we can have a grid with dimensions  $n \times 1$, that is, a path. For a path, it is easy to verify that by coloring just one node, we define the colors of all other nodes along the path. Therefore,  $2$-coloring a path and consequently a grid requires locality $\Omega(n)$. 

\paragraph{5-coloring in grids is trivially local.}

Looking at the other extreme, $5$-coloring is trivial, as one can construct such a coloring greedily (note that in a grid all nodes have degree at most $4$ and can therefore always choose a ``free'' color). The locality of $5$-coloring is therefore $0$ in the \onlineLOCAL, $1$ in the SLOCAL, and $O(\log^* n)$ in the LOCAL model.

\paragraph{4-coloring in grids is also local.}

It turns out that $4$-coloring is also doable with locality $O(\log^* n)$ in the LOCAL model~\cite{brandt17grid-lcl,balliu22mending}. These algorithms can be translated into a constant-locality algorithm in \onlineLOCAL and SLOCAL models.

This problem is also a good illustration of the power of the \onlineLOCAL model. A small constant locality is sufficient for \onlineLOCAL to solve a natural graph problem, while the problem is not solvable in the usual online graph algorithms setting.

\paragraph{3-coloring separates the models.}

So far, we have covered coloring with $2$ colors and coloring with at least $4$ colors, and in each of these cases we have gotten approximately the \emph{same} locality in the \onlineLOCAL and \LOCAL models. However, $3$-coloring turns out to be a much more interesting problem.

By prior work \cite{brandt17grid-lcl}, it is known that the locality of $3$-coloring in $\sqrt{n} \times \sqrt{n}$ grids is at least $\Omega(\sqrt{n})$ in the \LOCAL model. The aforementioned paper considered the case of \emph{toroidal} grid graphs, but the same argument can be applied for non-toroidal grids (in essence, if you could color locally anywhere in the middle of a non-toroidal grid, you could also apply the same algorithm to color a toroidal grid).
We can easily extend this result to show a polynomial lower bound for $3$-coloring grids in the \SLOCAL model:
\begin{restatable}{theorem}{thmSlocalGrids}\label{thm:slocalGrids}
    There is no \SLOCAL algorithm that finds a $3$-coloring in $2$-dimensional grids with locality $o(n^{1/10})$.
\end{restatable}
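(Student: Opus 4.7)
The plan is to reduce this SLOCAL lower bound to the LOCAL-model lower bound of $\Omega(\sqrt{n})$ for 3-coloring grids due to Brandt et al.\ \cite{brandt17grid-lcl}, which (as observed above) extends to the non-toroidal case as well. Concretely, I would show that any SLOCAL algorithm $A$ with locality $T$ can be simulated by a LOCAL algorithm $B$ with locality $\operatorname{poly}(T) \cdot \log^{*} n$, so that $T = o(n^{1/10})$ would make $B$'s locality $o(\sqrt{n})$, contradicting the lower bound.

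The simulation first computes a distance-$(2T)$ coloring of the grid. Since the grid has polynomially bounded ball growth, the $2T$-th power of the grid has maximum degree $\Delta = O(T^{2})$; a standard deterministic distributed $(\Delta + 1)$-coloring algorithm produces such a coloring in $O(\Delta \log^{*} n)$ rounds of the power graph, which costs $O(T^{3} \log^{*} n)$ rounds in $G$ itself. The key property is that any two nodes of the same color have disjoint radius-$T$ neighborhoods. Now $B$ processes the $K = O(T^{2})$ color classes sequentially, emulating an adversarial SLOCAL order that presents all nodes of class $1$ first, then all of class $2$, and so on; within a single class, $A$'s computations are independent and can be carried out in parallel. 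To compute the output of a node in class $i$, $B$ must reconstruct the current memory states of the radius-$T$ neighbors that were processed in earlier classes, each of which recursively requires reading radius-$T$ information from its own earlier-class neighbors. Unrolling the recursion across all $K$ classes shows that the output of any node depends only on information within distance $K \cdot T = O(T^{3})$ in $G$. Combining the two steps, $B$ has locality $O(T^{3} \log^{*} n)$, which is $o(\sqrt{n})$ whenever $T = o(n^{1/10})$.

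The main technical obstacle is to verify that this simulation is faithful: one has to check that the SLOCAL adversary is allowed to present nodes in the class-by-class order chosen by $B$, that parallel processing within a class is well-defined because nodes in the same class never see each other's local memory, and that $B$ can in fact reconstruct the entire chain of memory states from only the graph input and unique identifiers within its locality budget. Once these bookkeeping details are settled, plugging $T = o(n^{1/10})$ into the simulated locality $O(T^{3} \log^{*} n)$ contradicts the LOCAL lower bound $\Omega(\sqrt{n})$ and proves the theorem. The gap between $n^{1/10}$ in the statement and the $n^{1/6}$ one gets from this crude accounting is intentional slack, leaving room for a looser but cleaner proof in the appendix.
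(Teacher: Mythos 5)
Your proposal is correct and follows essentially the same route as the paper's proof: simulate the \SLOCAL algorithm in the \LOCAL model by first computing a distance-$\Theta(T)$ coloring in $O(\log^* n)$-type time and then using the color classes as a processing schedule, so that a locality-$o(n^{1/10})$ \SLOCAL algorithm yields a \LOCAL algorithm with locality $o(n^{1/2})$, contradicting the $\Omega(\sqrt{n})$ lower bound of Brandt et al.\ for $3$-coloring grids. The only differences are bookkeeping choices (you use a $(\Delta+1)$-coloring of the power graph and get $O(T^3\log^* n)$, the paper uses Linial's $O(\Delta^2)$-coloring and a looser polynomial), and both comfortably fit under the $o(\sqrt{n})$ budget.
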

To prove the result, we show that we can simulate \SLOCAL algorithms sufficiently efficiently in the \LOCAL model. We use the standard technique of first precomputing a distance-$o(n^{1/10})$ coloring, and then using the colors as a schedule for applying the \SLOCAL algorithm. Such a simulation can be done efficiently and would lead to a \LOCAL algorithm running in $o(\sqrt{n})$ time, which is a contradiction.
In \cref*{app:slocal-3-coloring}, we present a full proof of the lower bound.

As grids are bipartite graphs, \cref{thm:bipartitecoloring} gives the following corollary:
\begin{corollary}
    There is an \onlineLOCAL algorithm that finds a $3$-coloring in $2$-dimensional grids with locality $O(\log n)$.
\end{corollary}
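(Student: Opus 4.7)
The plan is to derive the statement as an immediate corollary of \cref{thm:bipartitecoloring}. The only fact I would need to invoke is that every two-dimensional grid is bipartite: the partition of the vertices by the parity of the sum of their two coordinates is a proper $2$-coloring, so the grid contains no odd cycles. Consequently the \onlineLOCAL algorithm guaranteed by \cref{thm:bipartitecoloring} is directly applicable to any $a \times b$ grid and produces a proper $3$-coloring with locality $O(\log n)$.

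Since the corollary is a pure specialization of \cref{thm:bipartitecoloring} with no extra hypotheses, there is essentially no obstacle to overcome: the algorithm of \cref{ssec:bip-coloring} and its analysis only use bipartiteness, through the notion of the two parities of the $2$-colored blobs, together with the doubling property of the border counts; neither uses any structural feature specific to grids. Hence the proof reduces to a one-line invocation of \cref{thm:bipartitecoloring}, and the grid-specific interest of this part of the paper lies in the contrast with the $\Omega(n^{1/10})$ \SLOCAL lower bound of \cref{thm:slocalGrids}, not in the upper bound stated here.
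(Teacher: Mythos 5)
Your proof is correct and matches the paper exactly: the paper also derives this corollary in one line by noting that grids are bipartite and invoking \cref{thm:bipartitecoloring}. The parity-of-coordinates remark is a fine (if routine) justification of bipartiteness, and nothing further is needed.
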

\noindent In other words, the problem of $3$-coloring in grids already gives an exponential separation between the \SLOCAL and \onlineLOCAL models. We have summarized the known locality bounds in \cref{tab:coloring}.

\section{LCL problems in paths, cycles, rooted regular trees}\label{sec:collapse}

We just showed that the \onlineLOCAL model is much more powerful than \LOCAL and \SLOCAL for an LCL on bipartite graphs and grids.
In this section, we will discuss what happens when we restrict our attention to LCL problems in paths, cycles, and trees. We will start by defining LCL problems more formally.

We say that $\Pi$ is a \emph{locally verifiable problem} with verification radius $r$ if the following holds: there is a collection of labeled local neighborhoods $\mathcal{T}$ such that $L$ is a feasible solution for input $(G,I)$ if and only if for all nodes $v$, the radius-$r$ neighborhood of $v$ in $(G,I,L)$ is in $\mathcal{T}$. Informally, a solution is feasible if it looks good in all radius-$r$ neighborhoods.

\begin{definition}[Locally checkable labeling \cite{naor95}]\label{def:lcl}
    A locally verifiable problem $\Pi$ is a \emph{locally checkable labeling (LCL)} problem if the set of the input labels $\Sigma$ is finite, the set of the output labels $\Gamma$ is finite, and there is a natural number $\Delta$ such that maximum degree of any graph $G\in\mathcal{G}$ is at most~$\Delta$.
\end{definition}
\noindent Note that in LCL problems, $\mathcal{T}$ is also finite since there are only finitely many possible non-isomorphic labeled local neighborhoods.

It turns out that in the case of paths, cycles, and rooted regular trees, the \LOCAL, \SLOCAL, \dynamicLOCAL, and \onlineLOCAL models are all approximately equally expressive for LCL problems.
In particular, all classification and decidability results related to LCLs in paths, cycles, and rooted regular trees in the \LOCAL model \cite{balliu21rooted-trees,chang21automata-theoretic,balliu19lcl-decidability} directly apply also in the \onlineLOCAL model and both versions of the \dynamicLOCAL model.

We are first going to show that the \LOCAL and \onlineLOCAL models are equivalent in the case of paths and cycles, even when the LCL problems can have inputs.
We will then prove that the models are equivalent also in the more general case of rooted regular trees, but in this case we do not consider the possibility of having input labels.

Formally, we will prove the following theorem for cycles and paths:
\begin{restatable}{theorem}{thmcyclepath}\label{thm:cyclepaths}
    Let $\Pi$ be an LCL problem in paths or cycles (possibly with inputs). If the locality of $\Pi$ is $T$ in the \onlineLOCAL model, then its locality is $O(T + \log^* n)$ in the \LOCAL model.
\end{restatable}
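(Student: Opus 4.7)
The plan follows the two-step strategy sketched in the introduction: first, ``pump down'' the given \onlineLOCAL algorithm so that it has only constant locality in paths and cycles, then extract a canonical rulebook from this constant-locality algorithm and execute it in the \LOCAL model after $O(\log^* n)$ rounds of symmetry breaking. Any algorithm with locality $\Omega(n)$ can be simulated trivially in \LOCAL in $O(n)$ rounds, so I may assume $T = o(n)$. For the speed-up step, I would play against the given \onlineLOCAL algorithm $A$ on long input-labeled paths, present the nodes in a carefully chosen adversarial order, and track, for each input-labeled radius-$T$ fragment, which output labels $A$ commits to at the center. Because the input and output alphabets of an LCL are finite and $T$ is sublinear, a Ramsey/pigeonhole argument in the spirit of the pumping lemmas of~\cite{Chang2017ATH} extracts a pumpable pattern and yields a derived \onlineLOCAL algorithm $A'$ with constant locality $c$ that still solves $\Pi$.

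With $A'$ in hand, I would define a canonical labeling function $f$ that takes as input an input-labeled path fragment $F$ of radius $\Theta(c + r)$ (equipped with an orientation, where $r$ is the verification radius of $\Pi$) and returns an output label for its central node. To define $f(F)$, the adversary presents $A'$ with many disjoint copies of $F$, placed pairwise at distance greater than $2c + 2r$, so that within the view of $A'$ the copies look genuinely disconnected. Because $A'$ must commit to outputs before learning how the copies might later be connected, and the adversary is free to extend the host graph to any consistent completion, each committed center label must be locally feasible for $\Pi$. Pigeonhole over the infinitely many copies of each fragment type then pins down a single output label that recurs, and I set $f(F)$ to this stable value. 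The \LOCAL algorithm is now straightforward: compute a proper $3$-coloring of the path or cycle in $O(\log^* n)$ rounds via Cole--Vishkin/Linial and derive from it a consistent local orientation (this substitutes for the order-based symmetry breaking implicit in $A'$); then each node $v$ inspects its radius-$\Theta(c+r)$ neighborhood $F_v$ with the induced orientation and outputs $f(F_v)$. The total locality is $O(c + r + \log^* n) = O(T + \log^* n)$.

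Correctness reduces to showing that pointwise application of $f$ yields a feasible solution. For any fixed node $v$ in any input $(G,I)$, the LCL constraint at $v$ can be verified by considering an adversarial presentation to $A'$ that first surfaces the enlarged fragment around $v$ as an isolated piece (exactly as in the construction of $f$), forcing $A'$ to commit labels on the radius-$r$ neighborhood of $v$ that agree with $f(F_v)$; because $A'$ is a valid \onlineLOCAL algorithm, these labels satisfy the LCL constraint at $v$. The main obstacle in the argument is precisely this canonical-extraction step: $A'$ carries unbounded global memory, so its commitments across fragments may in principle depend on the entire presentation history, and a naive ``read off one output per fragment'' does not produce a well-defined local function. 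The Ramsey-type argument that isolates a stable output across sufficiently many copies of each fragment is what tames this global dependence, and the choice of fragment radius $\Theta(c+r)$ is what keeps the pointwise outputs jointly consistent on overlapping neighborhoods. A secondary subtlety is aligning the orientation induced by the \LOCAL-side $3$-coloring with the orientation used when building $f$; this is routine for paths but requires a small extra step on cycles, where both global orientations of the cycle must be accommodated.
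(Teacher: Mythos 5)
Your first half (speeding up a sublinear-locality \onlineLOCAL algorithm to constant locality via segment-equivalence/pumping in the style of Chang--Kopelowitz--Pettie) is sketchy but in the same spirit as the paper's \cref{lemma:1dspeedup}. The genuine gap is in the second half. You propose to extract a per-node canonical function $f$ and then have \emph{every} node $v$ output $f(F_v)$ pointwise, arguing correctness node by node (``each committed center label must be locally feasible for $\Pi$''). This does not establish feasibility: the values $f(F_u)$ and $f(F_v)$ at adjacent nodes are read off from commitments that $\algoa'$ made on \emph{disconnected} copies in separate (or at least separately-centered) experiments, and nothing forces these outputs to be jointly consistent where the radius-$r$ neighborhoods overlap in the real graph; ``each label could appear in some valid solution'' is much weaker than ``the combined assignment is valid.'' Worse, with your choice of fragment content (structure plus an orientation, no input labels), all interior nodes of a long path have isomorphic views, so an isomorphism-invariant $f$ assigns the \emph{same} label to adjacent nodes---already a contradiction for $3$-coloring, which has \onlineLOCAL locality $O(1)$ but would then get a constant-locality \LOCAL algorithm, violating Linial's bound. (Also, a globally consistent orientation of a cycle cannot be computed with locality $O(\log^* n)$, so the ``derive a consistent local orientation from the $3$-coloring'' step is not available as stated.) The Ramsey/pigeonhole step you invoke to ``tame'' the global memory only stabilizes the value of $f$ within one experiment; it does not create compatibility between outputs of different centers.

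The paper's \cref{lemma:1dsimulation} avoids exactly this trap, in line with the stated key insight that one \emph{cannot} directly simulate \onlineLOCAL pointwise in \LOCAL. There, the canonical map $f$ is defined once, from a single fixed virtual path containing $|\Gamma|^{2r+1}+1$ disjoint copies of each input neighborhood, and it outputs the whole radius-$r$ \emph{output neighborhood} of the center, not a single label. The \LOCAL algorithm applies $f$ only at the nodes of a $(2\beta,5\beta)$-ruling set computed in $O(\log^* n)$ rounds, so the canonically labeled regions are pairwise disjoint; the constant-length gaps between anchors are then filled by \emph{resuming} the simulation of $\algoa'$ on the virtual path, identifying the two anchors of a gap with two distinct virtual centers that received the same canonical output (this is why the pigeonhole is over output neighborhoods and why two matching centers are needed) and connecting them there. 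Validity of $\algoa'$ on that completed virtual instance is what guarantees that each gap can be filled compatibly with the fixed anchor labels. Your proposal has no analogue of this gap-filling/extendability step, and without it the pointwise construction cannot be repaired; you would need to either adopt the ruling-set-plus-resumed-simulation scheme or prove an extendability property of the canonical partial labelings, which is precisely what the paper's construction provides.
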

For the case of rooted trees, we will prove the following two theorems:
\begin{theorem}[restate = thmtreesuper, name = ]\label{thm:rooted-tree-superlogarithmic}
    Let $\Pi$ be an LCL problem in rooted regular trees (without inputs).
    Problem $\Pi$ has locality $n^{\Omega(1)}$ in the \LOCAL model if and only if it has locality $n^{\Omega(1)}$ in the \onlineLOCAL model.
\end{theorem}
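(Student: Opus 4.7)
The plan is to prove the two implications of the equivalence separately.

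\textbf{Easy direction.} Any $\LOCAL$ algorithm is immediately an $\onlineLOCAL$ algorithm of the same locality, by the inclusion $\LOCAL \subsetneq \onlineLOCAL$ established in Section~3. Hence the $\onlineLOCAL$ locality of $\Pi$ is at most its $\LOCAL$ locality up to constants, and so an $n^{\Omega(1)}$ lower bound in $\onlineLOCAL$ transfers directly to $\LOCAL$. The converse is where the real work lies.

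\textbf{Hard direction.} Assume $\Pi$ has $\LOCAL$ locality $n^{\Omega(1)}$ on rooted regular trees. I would invoke the classification of \cite{balliu21rooted-trees}, by which $\Pi$'s complexity class is $O(1)$, $\Theta(\log n)$, or $n^{\Theta(1)}$, with membership in the two efficient classes characterized by the existence of a certain ``certificate'' --- a bounded-depth labeling that can be propagated locally along the tree. A problem in the $n^{\Theta(1)}$ class admits no such certificate, and I would distill this into a concrete rigidity lemma: there is a finite family $\mathcal{F}$ of bounded-depth labeled subtree gadgets such that, for every assignment of root-label commitments to gadgets drawn from $\mathcal{F}$, sufficiently many of these gadgets can always be joined into a rooted regular tree on $n$ nodes with no valid completion consistent with the committed root labels. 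Given any $\onlineLOCAL$ algorithm $\algoa$ with claimed locality $T = n^{o(1)}$, the adversary then reveals a sequence of pairwise-far gadgets drawn from $\mathcal{F}$, each of internal depth slightly exceeding $T$, and queries the center of each. Since any two gadgets are separated by more than $2T$, $\algoa$'s local view while labeling one gadget contains none of the others, so each committed root label depends only on the gadget's isomorphism type together with $\algoa$'s global history. A pigeonhole step using the rigidity lemma then forces, within $\poly(n)$ queries, a jointly unrealizable tuple of root commitments, which the adversary concretizes by revealing an appropriate joining of the gadgets into one $n$-node rooted regular tree, producing a local violation.

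\textbf{Main obstacle.} The principal challenge, compared with standard $\LOCAL$ lower bounds, is that $\onlineLOCAL$ is adaptive and has unbounded global memory, so the commitment on each gadget may depend on the entire revealed prefix. A one-shot indistinguishability argument of the kind typically used for $\LOCAL$ lower bounds is therefore inadequate. The technical core of the proof is to upgrade the no-certificate property of \cite{balliu21rooted-trees} --- originally phrased in terms of $\LOCAL$-style round-elimination objects --- into a robust rigidity statement that survives this adaptivity: no history-dependent strategy on $\poly(n)$ gadgets of the same isomorphism type can avoid eventually producing a jointly unrealizable tuple. I expect this to require iterating the structural obstruction along the online schedule, analogous to pumping arguments in the $\LOCAL$ setting but reformulated for the sequential-reveal model, rather than invoking it only once.
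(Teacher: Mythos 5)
Your easy direction is fine, and your overall template (use the structural characterization of \cite{balliu21rooted-trees}, present far-apart fragments, force commitments, then join adversarially) is the right family of ideas, but the ``rigidity lemma'' you plan to distill is false as stated, and this is precisely where the real work lies. The issue: an $n^{\Omega(1)}$-hard problem need not have \emph{all} of its labels path-inflexible in $\Pi$ itself; by \cite{balliu21rooted-trees} what you get is a \emph{hierarchy} of restricted problems $\Pi = \Pi_0 \supseteq \Pi_1 \supseteq \cdots \supseteq \Pi_k$, obtained by repeatedly deleting the path-inflexible labels, with $\Pi_k$ empty. The \onlineLOCAL algorithm may commit every gadget center to a label that is path-flexible in $\Pi$ (i.e., one only removed at a later level of the hierarchy), and then \emph{no} one-shot joining of the committed fragments yields an unrealizable instance: flexible label pairs can be connected at every sufficiently large distance, and the unseen subtrees give enough slack to complete the labeling. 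The $\twohalf$-coloring problem $\probtwohalf$ in the paper is a concrete counterexample to your lemma: if the algorithm labels all centers with \texttt{A}, any static joining of the fragments admits a valid completion (fix parity by inserting an \texttt{X} on a connecting path and hanging a \texttt{1}/\texttt{2}-alternating subtree below it), so your ``pigeonhole then concretize the joining'' step never produces a violation.

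The missing ingredient is that the adversary must be \emph{interactive across levels of the hierarchy}, not just adaptive in which gadgets it reveals: after merging two fragments whose commitments are path-inflexible in some restricted problem $\Pi_{i-1}$ (choosing among joinings of lengths $p$, $p+1$ so that $\Pi_{i-1}$ cannot label the connecting path), the adversary must \emph{re-query the algorithm on the connecting path and its children}, thereby forcing it to emit a label from some earlier set $\Gamma_j$, $j<i$; the merged object then re-enters the pool one level down, and this descent is repeated $k$ times until two fragments carry labels path-inflexible in $\Pi_0=\Pi$, at which point the final joining kills all completions. Making this iteration work needs a further structural device your sketch does not supply: the fragments must be layered trees $T^x_k$ so that, below the node where the new low-level label appears, there is still an unseen connector at a strictly lower layer at distance $\gg T(n)$, guaranteeing the merged fragment can be attached again without the algorithm ever having seen its root or connector. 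Your closing remark that one may need to ``iterate the structural obstruction along the online schedule'' gestures at this, but the mechanism you actually describe (pigeonhole over $\poly(n)$ same-type gadgets, then a single final joining) does not reach it; note also that the paper needs no such pigeonhole --- $2^{k+1}$ fragments suffice, since the descent is driven by the hierarchy rather than by repetition of histories. Finally, the ``certificate'' of \cite{balliu21rooted-trees} you invoke is the tool for the sub-logarithmic theorem (\cref{thm:rooted-tree-sublogarithmic}); for this theorem the relevant structure is the path-inflexible decomposition, and identifying and exploiting it is the content of the proof.
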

\begin{theorem}[restate = thmtreesub, name = ]\label{thm:rooted-tree-sublogarithmic}
    Let $\Pi$ be an LCL problem in rooted regular trees (without inputs).
    Problem $\Pi$ has locality $\Omega(\log n)$ in the \LOCAL model if and only if it has locality $\Omega(\log n)$ in the \onlineLOCAL model.
\end{theorem}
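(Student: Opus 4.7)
The implication ``\onlineLOCAL locality $\Omega(\log n) \Rightarrow$ \LOCAL locality $\Omega(\log n)$'' is immediate from the inclusion $\LOCAL \subseteq \onlineLOCAL$ established in \cref{sec:landscape}: every \LOCAL algorithm is an \onlineLOCAL algorithm of the same locality, so any problem satisfies $T_O(\Pi) \le O(T_L(\Pi))$, and hence any $\Omega(\log n)$ lower bound against \onlineLOCAL forces $T_L(\Pi) = \Omega(\log n)$.

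\textbf{Hard direction, plan.} For the converse I would prove the stronger contrapositive: any \onlineLOCAL algorithm $A$ for $\Pi$ with locality $T(n) = o(\log n)$ yields a \LOCAL algorithm with locality $O(\log^* n)$. The key tool is the classification of LCLs on rooted regular trees from \cite{balliu21rooted-trees}, which characterises the $O(\log^* n)$ regime via the existence of a purely combinatorial \emph{certificate} for $\Pi$---a labelled family of label sets with compatibility relations satisfying certain closure conditions---whose existence directly yields an $O(\log^* n)$ \LOCAL algorithm by the classification's constructive upper bound. The task thus reduces to extracting such a certificate from the behaviour of $A$.

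\textbf{Extraction via adversarial input.} Pick a depth $D = D(n)$ with $D = \omega(T)$ but $\Delta^D = n^{o(1)}$, so that an $n$-node instance can accommodate a large number $N = n^{1 - o(1)}$ of pairwise far-apart (distance $\gg T$) complete $\Delta$-ary rooted subtrees of depth $D$. The adversary presents these subtrees one at a time, leaves first and root last, only moving on once the current subtree has been fully labelled by $A$; because $T \ll D$ and the subtrees are well separated, $A$ must commit to the labelling of each subtree in isolation, producing $N$ independent valid labellings of the same depth-$D$ rooted $\Delta$-ary tree. Since the number of distinct labellings of any constant-radius rooted subtree is bounded by a constant depending only on $\Pi$, a pigeonhole argument isolates, at each level $i$, the sets of labels $\mathcal{L}_i$ and the parent--child transitions that recur across many of the $N$ copies. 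A pumping argument in the spirit of \cite{Chang2017ATH} then verifies that $(\mathcal{L}_i)_i$ together with the recurring transitions satisfies the closure conditions required by the \cite{balliu21rooted-trees} certificate; invoking the classification's constructive upper bound gives the desired $O(\log^* n)$ \LOCAL algorithm and closes the contrapositive.

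\textbf{Main obstacle.} The crux is ensuring that the labellings produced by $A$ on the $N$ independent subtrees are consistent enough to assemble into a single certificate with the requisite closure properties, rather than spreading across mutually incompatible patterns. Making this work requires a careful reveal order---so that $A$ is forced to commit before observing the global structure---together with quantitative bookkeeping that balances $D$, $N$, and the pigeonhole thresholds so that a stable, closure-satisfying core survives the filtering. A direct black-box simulation of \onlineLOCAL in \LOCAL is impossible (since \onlineLOCAL is strictly stronger than \LOCAL in general, as shown in \cref{sec:landscape}), so the detour through the \cite{balliu21rooted-trees} certificate framework appears essential.
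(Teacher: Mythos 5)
Your overall route is the same as the paper's: the easy direction via $\LOCAL \subseteq \onlineLOCAL$, and the hard direction by using a locality-$o(\log n)$ \onlineLOCAL algorithm to extract a certificate for $O(\log^* n)$ solvability in the sense of \cite{balliu21rooted-trees}, whose existence yields the $O(\log^* n)$ \LOCAL algorithm. However, there is a genuine gap exactly at the step you yourself flag as the ``main obstacle'': the certificate (\cref{def:rooted-tree-logstar-certificate}) is not a family of label sets with closure conditions, but two concrete sequences of correctly labeled complete $\delta$-ary trees with \emph{coprime} depths, with \emph{identical} leaf labelings within each sequence, and with one tree per root label of $\Gamma_{\mathcal{T}}$. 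Your plan---reveal $N$ far-apart depth-$D$ trees leaves-first, have $\algoa$ label each one completely, then pigeonhole per-level label sets and parent--child transitions and ``pump''---does not produce such an object. In \onlineLOCAL the algorithm has global memory, so the $N$ labelings need not agree with one another; recurring parent--child transitions do not assemble into a valid labeling of a complete tree, because correctness of an LCL configuration constrains a node together with the whole multiset of its children's labels (this is precisely the gap between $\Pi$ and its path-form $\pathform{\Pi}$), and pumping arguments in the style of \cite{Chang2017ATH} are tailored to paths; moreover nothing in your scheme delivers the coprime depth pair or the common leaf labeling shared by all certificate trees.

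The paper closes this gap with a different reveal strategy: it never lets $\algoa$ finish a tree in isolation. It builds $\delta^{T+2}+|\Gamma|$ complete trees of depth $2T+2$ and queries only their \emph{middle} nodes, so roots and leaves remain unseen and the adversary keeps the freedom to glue. It then fixes one consistently ordered pool $\mathcal{L}$ of these pre-labeled trees, and for each label $\gamma_i$ it takes the subtree below a middle node labeled $\gamma_i$, attaches the roots of the first trees of $\mathcal{L}$ either by identification (depth $2T+2$) or as children of its leaves (depth $2T+3$), and lets $\algoa$ itself label the nodes in between. Correctness of $\algoa$ guarantees each glued tree is validly labeled, reusing the same ordered pool $\mathcal{L}$ guarantees identical leaf labelings across all certificate trees, the two gluing modes give the coprime depths $2T+2$ and $2T+3$, and $T=o(\log n)$ ensures the whole construction has $\ll n$ nodes. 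Some mechanism of this kind---forcing partial commitments and then using $\algoa$ as the oracle that completes each glued tree---is what your proposal is missing; without it the ``stable, closure-satisfying core'' you hope to filter out by pigeonhole need not exist.
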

These two theorems show that all LCL problems in rooted regular trees belong to one of the known complexity classes $O(\log* n)$, $\Theta(\log n)$ and $n^{\Omega(1)}$ in all of the models we study. In what follows, we will introduce the high-level ideas of the proofs of these theorems.
For full proofs, we refer the reader to \cref{sec:full-proof-paths-cycles,sec:full-proof-rooted-trees}.

\subsection{Cycles and paths}\label{ssec:collapse-paths-cycles}

We will prove \cref{thm:cyclepaths} by first showing that any LCL problem in cycles and paths has either locality $O(1)$ or $\Omega(n)$ in the \onlineLOCAL model.
Next, we will show that if a problem is solvable with locality~$O(1)$ in the \onlineLOCAL model, then it is also solvable in locality~$O(\log* n)$ in the \LOCAL model. These steps are described by the following two lemmas:
\begin{restatable}{lemma}{lemmacyclepathspeedup}\label{lemma:1dspeedup}
    Let $\Pi$ be an LCL problem in paths or cycles (possibly with inputs), and let $\algoa$ be an \onlineLOCAL algorithm solving $\Pi$ with locality $o(n)$. Then, there exists an \onlineLOCAL algorithm $\algoa'$ solving $\Pi$ with locality $O(1)$.
\end{restatable}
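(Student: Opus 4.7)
The plan is to adapt the pumping/pigeonhole speedup arguments of Chang--Kopelowitz--Pettie from the \LOCAL setting to the \onlineLOCAL setting, exploiting the fact that when $T(n) = o(n)$ there is enough room to fit many pairwise-disjoint radius-$T$ balls inside a single path or cycle. The adversary will use this room to force $\algoa$ to commit to labels inside isolated radius-$T$ fragments before it learns how they connect, and the finiteness of the output alphabet $\Gamma$ will then let me distill a constant-locality labeling rule $\lambda$ that $\algoa'$ can apply.

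The first step is to fix a cycle $C_n$ (paths are analogous) with $n$ large enough that $T(n) \le n/k$, where $k$ is a constant depending only on $|\Sigma|$, $|\Gamma|$, and the verification radius $r$ of $\Pi$. For each labeled input pattern $\tau$ of some constant radius $r'$ (to be chosen), the adversary will reveal a sequence of nodes $v_1,\dots,v_k$ whose radius-$T$ neighborhoods are pairwise disjoint and each carry $\tau$ near the center together with a fixed filler labeling further out. Because the balls are disjoint and $\algoa$ has locality $T$, the only information available to $\algoa$ when committing to $L(v_i)\in \Gamma$ is the labeled ball around $v_i$ together with the global state accumulated from previous queries.

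The second step is to distill a canonical label $\lambda(\tau)\in \Gamma$ for each such $\tau$. For a single pattern, since $|\Gamma|$ is finite and $k$ is large, some label must be output at many of the $v_i$ by pigeonhole; take this as $\lambda(\tau)$. To make $\lambda$ simultaneously canonical across all relevant patterns, I would iterate the construction: feed $\algoa$ many identical blocks of isolated $\tau$-queries until its response stabilizes (this stabilization must occur because only finitely many labels are available per query), and only then move on to the next pattern. After all reveals, the adversary completes $C_n$ and $\algoa$ produces a full labeling; since $\algoa$ solves $\Pi$, this labeling is feasible. Consequently, for any two patterns $\tau,\tau'$ that can co-occur at adjacent positions in a valid input, the pair $(\lambda(\tau),\lambda(\tau'))$ must satisfy the local constraints of $\Pi$ at the interface.

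Finally, I would define $\algoa'$ with locality $r'$: on query of node $v$, read the labeled radius-$r'$ neighborhood $\tau(v)$ and output $\lambda(\tau(v))$. Correctness reduces to the local compatibility established in step two, so the labeling produced by $\algoa'$ passes every radius-$r$ check of $\Pi$. The main obstacle is precisely the construction of $\lambda$ as a well-defined, globally consistent map, rather than a per-pattern pigeonhole: because $\algoa$'s committed output can depend on accumulated global state, I have to argue that, after enough repetitions of isolated $\tau$-queries, $\algoa$'s response to such a fragment becomes independent of history. This is the delicate point where the sublinear locality $T(n)=o(n)$ is used in full force, since it guarantees that the cycle $C_n$ has room for enough repetitions to carry out the iterated pigeonhole simultaneously over all relevant patterns $\tau$.
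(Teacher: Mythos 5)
There is a genuine gap, and it lies exactly where you flag it---but it is not a repairable technicality; the overall shape of the argument is wrong. Your algorithm $\algoa'$ is a pointwise rule: it outputs $\lambda(\tau(v))$, a function only of the (isomorphism class of the) constant-radius \emph{input} neighborhood of $v$. Such a rule cannot solve problems that the lemma must cover. Take $3$-coloring of paths with no input labels: it is an LCL solvable in \onlineLOCAL with locality $0$ (greedy), so the hypothesis ``locality $o(n)$'' holds, yet every interior node has an identical radius-$r'$ input neighborhood, so your $\lambda$ would give all interior nodes the same color and fail on an edge. The whole point of the constant-locality \onlineLOCAL algorithm one must produce is that it may use its sequential, adaptive power---memory of previously committed outputs and of the revealed structure---and no distillation of $\algoa$ into a static map $\tau \mapsto \lambda(\tau)$ can retain that power. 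Two further steps also do not go through as stated: the ``stabilization'' of $\algoa$'s responses after repeated isolated $\tau$-queries is not forced by anything (pigeonhole gives a repeated label, not history-independence, and an adversarial $\algoa$ may cycle its answers), and the claimed interface compatibility of $(\lambda(\tau),\lambda(\tau'))$ at adjacent positions does not follow from feasibility of $\algoa$'s actual run, since in that run the labels near the interfaces were chosen \emph{after} $\algoa$ saw the connections, not by pasting canonical fragments.

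The paper's proof is structured quite differently. It defines an equivalence $\pequiv$ on path segments via a tripartition of each segment, recording the input-labeled boundary regions together with which boundary output labelings extend to the whole segment; finiteness of $\Sigma$, $\Gamma$, $r$ gives a constant $\alpha$ such that every segment of length $\ge\alpha$ lies in an infinite class, so any long segment can be replaced by an arbitrarily longer equivalent one without affecting extendability (a replacement lemma). The constant-locality algorithm $\algoa'$ is then genuinely online: it greedily maintains an $(\alpha,\alpha)$-ruling set as nodes are revealed, and around each ruling-set node it builds a \emph{single} virtual path $P'$ in which the segments to the neighboring ruling-set nodes are pumped to length $\approx N/\alpha$ with $T(N)\ll N/\alpha$; it simulates the black-box $\algoa$ on $P'$, so $\algoa$'s view never crosses a pumped segment and its commitments stay globally consistent, and finally the gaps in the real path are filled by brute force, which is possible precisely because extendability in the pumped segment transfers back to the real segment by the replacement lemma. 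Your canonical-labeling idea is closer in spirit to the paper's \cref{lemma:1dsimulation} (constant-locality \onlineLOCAL to $O(\log^* n)$ \LOCAL), where canonical outputs are extracted from disjoint fragments---but even there they are applied only at well-separated ruling-set nodes, with the gaps filled by resuming the simulation of $\algoa$, never densely at every node as in your step three.
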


The high-level idea of the proof of \cref{lemma:1dspeedup} is to construct a large virtual graph $P'$ such that when the original algorithm runs on the virtual graph $P'$, the labeling produced by the algorithm is locally compatible with the labeling in the original graph $P$.
We ensure this by applying a pumping-lemma-style argument on the LCL problem.
The proof uses similar ideas as the ones presented by \citet*{Chang2017ATH}.
\begin{restatable}{lemma}{lemmacyclepathsimulation}\label{lemma:1dsimulation}
    Let $\Pi$ be an LCL problem in paths or cycles (possibly with inputs), and let $\algoa$ be an \onlineLOCAL algorithm solving $\Pi$ with locality $O(1)$. Then, there exists a \LOCAL algorithm $\algoa'$ solving $\Pi$ with locality $O(\log* n)$.
\end{restatable}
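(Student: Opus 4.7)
The plan is to extract from $\algoa$ a canonical local labeling rule $f$ by feeding it disconnected input-labeled path fragments, then to implement $f$ in the \LOCAL model after an $O(\log^* n)$-round preprocessing that provides symmetry breaking.

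Concretely, let $T = O(1)$ be the locality of $\algoa$ and $r$ the verification radius of $\Pi$, and fix a sufficiently large constant $L = \Theta(T + r)$. Build an auxiliary graph $H$ as the disjoint union of one copy of every length-$L$ $\Sigma$-labeled path fragment, and run $\algoa$ on $H$ in an order we fix. Because the fragments are pairwise disconnected and $L > 2T$, the radius-$T$ view of any node of $H$ stays inside its own fragment, so the labels that $\algoa$ commits to in each fragment form a valid LCL labeling of that fragment in isolation. Let $f$ be the function that sends each length-$L$ input window to the label $\algoa$ assigns to its central node. For $f$ to be able to produce different outputs at different nodes that share an identical local input (e.g.\ when $3$-coloring an unlabeled path, where the input is trivial), we enrich the construction by attaching a proper $c$-coloring of each fragment (for some small constant $c$, e.g.\ $c = 3$) and using this coloring to determine the order in which $\algoa$ reveals the nodes; the rule $f$ then becomes a function of the input-and-coloring window.

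The \LOCAL algorithm first computes a proper $c$-coloring of the given path or cycle in $O(\log^* n)$ rounds via Linial's algorithm, and then each node reads its length-$L$ input-and-coloring window and outputs $f$ applied to that window in $O(L) = O(1)$ additional rounds. The total locality is $O(\log^* n)$, as required.

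The main obstacle is to prove that this pointwise application of $f$ yields a globally valid LCL labeling. This uses a pumping-style argument analogous to the proof of \cref{lemma:1dspeedup}: for $L$ sufficiently large, the output of $\algoa$ at the central node of a length-$L$ fragment is determined only by the input and coloring in a constant-sized window around the center, independent of the fragment's boundary and of any global memory inherited by $\algoa$ from earlier fragments. Once this locality-of-$f$ property is established, every $(2r+1)$-output window in the new path or cycle coincides with a corresponding window in some fragment of $H$ on which $\algoa$ produced a valid labeling, so the LCL constraints of $\Pi$ are met everywhere.
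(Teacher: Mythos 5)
Your overall strategy starts out like the paper's (force $\algoa$ to commit on disconnected fragments and extract a canonical rule $f$), but the way you then use $f$ has a genuine gap. You apply $f$ \emph{pointwise at every node} and justify global validity by asserting that ``the output of $\algoa$ at the central node of a length-$L$ fragment is determined only by the input and coloring in a constant-sized window around the center, independent of the fragment's boundary and of any global memory.'' This property is not provable and is in general false: an \onlineLOCAL algorithm has unbounded global memory and its output at the center of the $i$-th fragment may depend arbitrarily on everything it processed before (e.g.\ a correct greedy $3$-coloring algorithm that breaks ties using a counter of previously revealed nodes). A pumping-style argument in the spirit of \cref{lemma:1dspeedup} is a statement about the \emph{problem} (replaceability of equivalent segments) and cannot constrain the \emph{algorithm's} behavior in this way. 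Concretely, the consistency you need is shift-consistency: for adjacent nodes $u,v$ in the real graph, $f$ applied to $u$'s window must equal the label $\algoa$ gave to the corresponding off-center node in $v$'s fragment; but those two values are extracted from different fragments processed with different memory states, so overlapping output windows produced by pointwise $f$ need not coincide with any window $\algoa$ ever produced, and the LCL constraints can be violated even after your $c$-coloring enrichment (distinct windows do not force non-conflicting outputs). This is exactly the ``cannot directly simulate \onlineLOCAL in \LOCAL'' obstacle the construction has to get around.

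The paper's proof avoids this by never applying the canonical rule at overlapping windows: it fixes the canonical radius-$r$ output windows only at a $(2\beta,5\beta)$-ruling set (computed in $O(\log^* n)$ rounds), where $\beta=T+r+1$, so the windows where $f$ is used are pairwise disjoint, and it fills the bounded-length gaps between them by a separate mechanism---resuming the simulation of $\algoa$ on the still under-specified virtual path, where the endpoints of the gap are identified with two center nodes of the virtual construction. Two further ingredients that your proposal is missing make this work: (i) a pigeonhole step guaranteeing that each canonical output window is realized by at least \emph{two} distinct center nodes (needed so that both endpoints of a gap can be identified with distinct centers carrying the canonical labels), and (ii) the observation that because $\algoa$ committed its labels before knowing how the fragments are connected, correctness of $\algoa$ forces those committed windows to be extendable across \emph{any} later connection, which is what guarantees the gap-filling step always succeeds. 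To repair your argument you would have to either prove the locality/shift-consistency property of $f$ (which fails for some correct $\algoa$) or switch to the sparse-anchor-plus-gap-filling structure above.
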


The high-level idea of the proof of \cref{lemma:1dsimulation} is to use the constant locality \onlineLOCAL algorithm to construct a canonical output labeling for each possible neighborhood of input labels.
The fast LOCAL algorithm can then use these canonical labelings in disjoint neighborhoods of the real graph, and the construction of the canonical labelings ensures that the labeling also extends to the path segments between these neighborhoods.

The full proofs of these lemmas can be found in \cref{sec:full-proof-paths-cycles}. In order to prove \cref{thm:cyclepaths}, it is sufficient to combine these lemmas with the fact that the possible localities on paths and cycles in the \LOCAL model are $O(1)$, $\Theta(\log*n)$ and $\Theta(n)$.

\subsection{Rooted regular trees}\label{ssec:collapse-rooted-trees}

We prove the equivalence of the \LOCAL and the \onlineLOCAL models in two parts.
We start out with \cref{thm:rooted-tree-superlogarithmic} and show that if an LCL problem requires locality~$n^{\Omega(1)}$ in the \LOCAL model, then for every locality-$n^{o(1)}$ \onlineLOCAL algorithm we can construct an input instance which the algorithm must fail to solve.
To prove \cref{thm:rooted-tree-sublogarithmic}, we show that a locality-$o(\log n)$ \onlineLOCAL algorithm for solving an LCL problem implies that there exists a locality-$O(\log* n)$ \LOCAL algorithm for solving that same problem. In the following, we will outline the proofs of both theorems, the full proofs can be found in \cref{sec:full-proof-rooted-trees}. Before considering the full proof, we advise the reader to look at the example in~\cref{sec:rooted-tree-example}, where we show that the $\twohalf$-coloring problem requires locality~$\Omega(\sqrt{n})$ in the $\onlineLOCAL$ model.

\begin{proof}[Proof outline of \cref{thm:rooted-tree-superlogarithmic}]\let\qed\relax
Our proof is based on the fact that any LCL problem requiring locality~$n^{\Omega(1)}$ in the \LOCAL model has a specific structure.
In particular, the problem can be decomposed into a sequence of \emph{path-inflexible} labels and the corresponding sequence of more and more restricted problems~\cite{balliu21rooted-trees}.
Informally, a label is path-inflexible if two nodes having that label can exist only at specific distances apart from each other.
For example, when 2-coloring a graph, two nodes having label \texttt{1} can exist only at even distances from each other.
The problems in the path-inflexible decomposition are formed by removing the path-inflexible labels from the previous problem in the sequence until an empty problem is reached.

This decomposition of the problem into restricted problems with path-inflexible labels allows us to construct an input graph for any locality-$n^{o(1)}$ \onlineLOCAL algorithm.
In particular, we force the algorithm to commit labels in disjoint fragments of the graph.
Any label that the algorithm uses must be a path-inflexible label in some problem of the sequence of restricted problems.
By combining two fragments containing labels that are path-inflexible in the same problem, we can ensure that the algorithm cannot solve that problem in the resulting graph.
Hence the algorithm must use a label from a problem earlier in the sequence.
At some point, the algorithm must use labels that are path-inflexible in the original problem.
At that point, we can combine two fragments having path-inflexible labels in the original problem in such a way that no valid labeling for the original problem exists, and hence the algorithm must fail to solve the problem on the resulting graph.
\end{proof}

\begin{proof}[Proof outline of \cref{thm:rooted-tree-sublogarithmic}]\let\qed\relax
Here, we show that a locality-$o(\log n)$ \onlineLOCAL algorithm solving an LCL problem implies that there exists a \emph{certificate for $O(\log* n)$ solvability} for that problem.
It is known that the existence of such a certificate for a problem implies that there exists a locality-$O(\log* n)$ \LOCAL algorithm for solving the problem~\cite{balliu21rooted-trees}.

Informally, the certificate for $O(\log* n)$ solvability for LCL problem $\Pi$ with label set $\Gamma$ and arity $\delta$ consists of a subset $\Gamma_{\mathcal{T}} = \{ \gamma_1, \ldots, \gamma_t \}$ of labels $\Gamma$, and two sequences of correctly labeled complete $\delta$-ary trees $\mathcal{T}^1$ and $\mathcal{T}^2$.
The leaves of each tree in the sequence $\mathcal{T}^1$ (resp. $\mathcal{T}^2$) are labeled in the same way using only labels from set $\Gamma_{\mathcal{T}}$.
For every label of set $\Gamma_\mathcal{T}$, there exists a tree in both of the sequences having a root labeled with that label.

We can use the \onlineLOCAL algorithm to construct such a certificate.
We do this by constructing exponentially many deep complete $\delta$-ary trees and using the algorithm to label nodes in the middle of those trees.
We then glue these trees together in various ways.
When the trees are glued together, we use the \onlineLOCAL algorithm to label the rest of the nodes to form one tree of the sequence.
We repeat this procedure until all trees of both sequences have been constructed.
\end{proof}

\begin{acks}
We would like to thank Alkida Balliu, Sameep Dahal, Chetan Gupta, Fabian Kuhn, Dennis Olivetti, Jan Studený, and Jara Uitto for useful discussions. We would also like to thank the anonymous reviewers for the very helpful feedback they have provided for previous versions of this work. This work was supported in part by the Academy of Finland, Grant 333837.
\end{acks}

\bibliography{references}

\begin{thebibliography}{50}
\providecommand{\natexlab}[1]{#1}
\providecommand{\url}[1]{\texttt{#1}}
\expandafter\ifx\csname urlstyle\endcsname\relax
  \providecommand{\doi}[1]{doi: #1}\else
  \providecommand{\doi}{doi: \begingroup \urlstyle{rm}\Url}\fi

\bibitem[Albers and Schraink(2021)]{albers2021tight}
Susanne Albers and Sebastian Schraink.
\newblock Tight bounds for online coloring of basic graph classes.
\newblock \emph{Algorithmica}, 83\penalty0 (1):\penalty0 337--360, 2021.
\newblock \doi{10.1007/s00453-020-00759-7}.

\bibitem[Alon et~al.(2012)Alon, Rubinfeld, Vardi, and
  Xie]{doi:10.1137/1.9781611973099.89}
Noga Alon, Ronitt Rubinfeld, Shai Vardi, and Ning Xie.
\newblock Space-efficient local computation algorithms.
\newblock In \emph{Proc.\ 23rd Annual ACM-SIAM Symposium on Discrete Algorithms
  (SODA 2012)}, pages 1132--1139. SIAM, 2012.
\newblock \doi{10.1137/1.9781611973099.89}.

\bibitem[Assadi et~al.(2018)Assadi, Onak, Schieber, and
  Solomon]{10.1145/3188745.3188922}
Sepehr Assadi, Krzysztof Onak, Baruch Schieber, and Shay Solomon.
\newblock Fully dynamic maximal independent set with sublinear update time.
\newblock In \emph{Proc.\ 50th Annual ACM SIGACT Symposium on Theory of
  Computing (STOC 2018)}, page 815–826, 2018.
\newblock \doi{10.1145/3188745.3188922}.

\bibitem[Balliu et~al.(2019{\natexlab{a}})Balliu, Brandt, Chang, Olivetti,
  Rabie, and Suomela]{balliu19lcl-decidability}
Alkida Balliu, Sebastian Brandt, Yi-Jun Chang, Dennis Olivetti, Mika{\"e}l
  Rabie, and Jukka Suomela.
\newblock The distributed complexity of locally checkable problems on paths is
  decidable.
\newblock In \emph{Proc.\ 38th ACM Symposium on Principles of Distributed
  Computing (PODC 2019)}, pages 262--271. ACM Press, 2019{\natexlab{a}}.
\newblock \doi{10.1145/3293611.3331606}.

\bibitem[Balliu et~al.(2019{\natexlab{b}})Balliu, Hirvonen, Olivetti, and
  Suomela]{balliu19weak-col}
Alkida Balliu, Juho Hirvonen, Dennis Olivetti, and Jukka Suomela.
\newblock Hardness of minimal symmetry breaking in distributed computing.
\newblock In \emph{Proc.\ 38th ACM Symposium on Principles of Distributed
  Computing (PODC 2019)}, pages 369--378. ACM Press, 2019{\natexlab{b}}.
\newblock \doi{10.1145/3293611.3331605}.

\bibitem[Balliu et~al.(2020)Balliu, Brandt, Efron, Hirvonen, Maus, Olivetti,
  and Suomela]{balliu20binary-labeling}
Alkida Balliu, Sebastian Brandt, Yuval Efron, Juho Hirvonen, Yannic Maus,
  Dennis Olivetti, and Jukka Suomela.
\newblock Classification of distributed binary labeling problems.
\newblock In \emph{Proc.\ 34th International Symposium on Distributed Computing
  (DISC 2020)}, pages 17:1--17:17. Schloss Dagstuhl--Leibniz-Zentrum f{\"u}r
  Informatik, 2020.
\newblock \doi{10.4230/LIPIcs.DISC.2020.17}.

\bibitem[Balliu et~al.(2021{\natexlab{a}})Balliu, Brandt, Hirvonen, Olivetti,
  Rabie, and Suomela]{balliu21mm}
Alkida Balliu, Sebastian Brandt, Juho Hirvonen, Dennis Olivetti, Mika{\"e}l
  Rabie, and Jukka Suomela.
\newblock Lower bounds for maximal matchings and maximal independent sets.
\newblock \emph{Journal of the ACM}, 68\penalty0 (5), 2021{\natexlab{a}}.
\newblock \doi{10.1145/3461458}.

\bibitem[Balliu et~al.(2021{\natexlab{b}})Balliu, Brandt, Olivetti,
  Studen{\'y}, Suomela, and Tereshchenko]{balliu21rooted-trees}
Alkida Balliu, Sebastian Brandt, Dennis Olivetti, Jan Studen{\'y}, Jukka
  Suomela, and Aleksandr Tereshchenko.
\newblock Locally checkable problems in rooted trees.
\newblock In \emph{Proc.\ 40th ACM Symposium on Principles of Distributed
  Computing (PODC 2021)}, pages 263--272. ACM Press, 2021{\natexlab{b}}.
\newblock \doi{10.1145/3465084.3467934}.

\bibitem[Balliu et~al.(2022)Balliu, Hirvonen, Melnyk, Olivetti, Rybicki, and
  Suomela]{balliu22mending}
Alkida Balliu, Juho Hirvonen, Darya Melnyk, Dennis Olivetti, Joel Rybicki, and
  Jukka Suomela.
\newblock Local mending.
\newblock In \emph{Proc.\ 29th International Colloquium on Structural
  Information and Communication Complexity (SIROCCO 2022)}. Springer, 2022.

\bibitem[Barenboim and Maimon(2019)]{Barenboim17}
Leonid Barenboim and Tzalik Maimon.
\newblock Fully dynamic graph algorithms inspired by distributed computing:
  Deterministic maximal matching and edge coloring in sublinear update-time.
\newblock \emph{ACM Journal of Experimental Algorithmics}, 24, 2019.

\bibitem[Bean(1976)]{10.2307/2272247}
Dwight~R. Bean.
\newblock Effective coloration.
\newblock \emph{The Journal of Symbolic Logic}, 41\penalty0 (2):\penalty0
  469--480, 1976.
\newblock \doi{10.2307/2272247}.

\bibitem[Bhattacharya et~al.(2018)Bhattacharya, Chakrabarty, Henzinger, and
  Nanongkai]{doi:10.1137/1.9781611975031.1}
Sayan Bhattacharya, Deeparnab Chakrabarty, Monika Henzinger, and Danupon
  Nanongkai.
\newblock Dynamic algorithms for graph coloring.
\newblock In \emph{Proc.\ 29th Annual ACM-SIAM Symposium on Discrete Algorithms
  (SODA 2018)}, pages 1--20. SIAM, 2018.
\newblock \doi{10.1137/1.9781611975031.1}.

\bibitem[Bianchi et~al.(2012)Bianchi, B{\"o}ckenhauer, Hromkovi{\v{c}}, and
  Keller]{10.1007/978-3-642-32241-9_44}
Maria~Paola Bianchi, Hans-Joachim B{\"o}ckenhauer, Juraj Hromkovi{\v{c}}, and
  Lucia Keller.
\newblock Online coloring of bipartite graphs with and without advice.
\newblock In \emph{Computing and Combinatorics}, pages 519--530, 2012.
\newblock \doi{10.1007/978-3-642-32241-9_44}.

\bibitem[Bollob{\'a}s(2004)]{bollobas2004extremal}
B{\'e}la Bollob{\'a}s.
\newblock \emph{Extremal graph theory}, chapter~3.
\newblock Courier Corporation, 2004.

\bibitem[Brandt et~al.(2017)Brandt, Hirvonen, Korhonen, Lempi{\"a}inen,
  {\"O}sterg{\aa}rd, Purcell, Rybicki, Suomela, and
  Uzna{\'n}ski]{brandt17grid-lcl}
Sebastian Brandt, Juho Hirvonen, Janne~H. Korhonen, Tuomo Lempi{\"a}inen,
  Patric R.~J. {\"O}sterg{\aa}rd, Christopher Purcell, Joel Rybicki, Jukka
  Suomela, and Przemys{\l}aw Uzna{\'n}ski.
\newblock {LCL} problems on grids.
\newblock In \emph{Proc.\ 36th ACM Symposium on Principles of Distributed
  Computing (PODC 2017)}, pages 101--110. ACM Press, 2017.
\newblock \doi{10.1145/3087801.3087833}.

\bibitem[Burjons et~al.(2016)Burjons, Hromkovi{\v{c}}, Mu{\~{n}}oz, and
  Unger]{10.1007/978-3-662-49192-8_19}
Elisabet Burjons, Juraj Hromkovi{\v{c}}, Xavier Mu{\~{n}}oz, and Walter Unger.
\newblock Online graph coloring with advice and randomized adversary.
\newblock In \emph{Proc.\ 42nd International Conference on Current Trends in
  Theory and Practice of Computer Science (SOFSEM 2016)}, pages 229--240.
  Springer, 2016.
\newblock \doi{10.1007/978-3-662-49192-8_19}.

\bibitem[Chang(2020)]{chang2020complexity}
Yi-Jun Chang.
\newblock The complexity landscape of distributed locally checkable problems on
  trees.
\newblock In \emph{Proc.\ 34th International Symposium on Distributed Computing
  (DISC 2020)}. Schloss Dagstuhl-Leibniz-Zentrum f{\"u}r Informatik, 2020.
\newblock \doi{10.4230/LIPIcs.DISC.2020.18}.

\bibitem[Chang and Pettie(2019)]{Chang2017ATH}
Yi-Jun Chang and Seth Pettie.
\newblock {A Time Hierarchy Theorem for the LOCAL Model}.
\newblock \emph{{SIAM} Journal on Computing}, 48\penalty0 (1):\penalty0 33--69,
  2019.
\newblock \doi{10.1137/17M1157957}.

\bibitem[Chang et~al.(2016)Chang, Kopelowitz, and Pettie]{chang16}
Yi-Jun Chang, Tsvi Kopelowitz, and Seth Pettie.
\newblock {An Exponential Separation between Randomized and Deterministic
  Complexity in the LOCAL Model}.
\newblock In \emph{Proc.\ 57th IEEE Symposium on Foundations of Computer
  Science (FOCS 2016)}, pages 615--624. IEEE, 2016.
\newblock \doi{10.1109/FOCS.2016.72}.

\bibitem[Chang et~al.(2021)Chang, Studen{\'y}, and
  Suomela]{chang21automata-theoretic}
Yi-Jun Chang, Jan Studen{\'y}, and Jukka Suomela.
\newblock Distributed graph problems through an automata-theoretic lens.
\newblock In \emph{Proc.\ 28th International Colloquium on Structural
  Information and Communication Complexity (SIROCCO 2021)}, pages 31--49.
  Springer, 2021.
\newblock \doi{10.1007/978-3-030-79527-6_3}.

\bibitem[Cole and Vishkin(1986)]{cole86deterministic}
Richard Cole and Uzi Vishkin.
\newblock Deterministic coin tossing with applications to optimal parallel list
  ranking.
\newblock \emph{Information and Control}, 70\penalty0 (1):\penalty0 32--53,
  1986.
\newblock \doi{10.1016/S0019-9958(86)80023-7}.

\bibitem[Dobrev et~al.(2013)Dobrev, Kr{\'a}lovi{\v{c}}, and
  Kr{\'a}lovi{\v{c}}]{10.1007/978-3-642-38016-7_2}
Stefan Dobrev, Rastislav Kr{\'a}lovi{\v{c}}, and Richard Kr{\'a}lovi{\v{c}}.
\newblock Independent set with advice: The impact of graph knowledge.
\newblock In \emph{Proc.\ 10th Workshop on Approximation and Online Algorithms
  (WAOA 2012)}. Springer, 2013.
\newblock \doi{10.1007/978-3-642-38016-7_2}.

\bibitem[Du and Zhang(2018)]{du2018improved}
Yuhao Du and Hengjie Zhang.
\newblock Improved algorithms for fully dynamic maximal independent set, 2018.

\bibitem[Emek et~al.(2009)Emek, Fraigniaud, Korman, and
  Ros{\'e}n]{onlineProblemsAdvice}
Yuval Emek, Pierre Fraigniaud, Amos Korman, and Adi Ros{\'e}n.
\newblock Online computation with advice.
\newblock In \emph{Proc.\ 36th edition of the International Colloquium on
  Automata, Languages and Programming (ICALP 2009)}, pages 427--438. Springer,
  2009.
\newblock \doi{10.1007/978-3-642-02927-1_36}.

\bibitem[Emek et~al.(2016)Emek, Kutten, and Wattenhofer]{onlineMatchingDelays}
Yuval Emek, Shay Kutten, and Roger Wattenhofer.
\newblock Online matching: Haste makes waste!
\newblock In \emph{Proc.\ 48th Annual ACM Symposium on Theory of Computing
  (STOC 2016)}, pages 333--344, 2016.
\newblock \doi{10.1145/2897518.2897557}.

\bibitem[Even et~al.(2014)Even, Medina, and Ron]{even2014deterministic}
Guy Even, Moti Medina, and Dana Ron.
\newblock {Deterministic stateless centralized local algorithms for bounded
  degree graphs}.
\newblock In \emph{Proc.\ 22nd European Symposium on Algorithms (ESA 2014)},
  pages 394--405. Springer, 2014.
\newblock \doi{10.1007/978-3-662-44777-2_33}.

\bibitem[Ghaffari et~al.(2017)Ghaffari, Kuhn, and Maus]{slocal}
Mohsen Ghaffari, Fabian Kuhn, and Yannic Maus.
\newblock On the complexity of local distributed graph problems.
\newblock In \emph{Proc.\ 49th Annual ACM SIGACT Symposium on Theory of
  Computing (STOC 2017)}, pages 784--797. ACM Press, 2017.
\newblock \doi{10.1145/3055399.3055471}.

\bibitem[Ghaffari et~al.(2018)Ghaffari, Harris, and Kuhn]{DerandomizingLocal}
Mohsen Ghaffari, David~G. Harris, and Fabian Kuhn.
\newblock On derandomizing local distributed algorithms.
\newblock In \emph{Proc.\ 59th IEEE Annual Symposium on Foundations of Computer
  Science (FOCS 2018)}, pages 662--673. IEEE, 2018.
\newblock \doi{10.1109/FOCS.2018.00069}.

\bibitem[G{\"o}{\"o}s et~al.(2016)G{\"o}{\"o}s, Hirvonen, Levi, Medina, and
  Suomela]{goos16nonlocal}
Mika G{\"o}{\"o}s, Juho Hirvonen, Reut Levi, Moti Medina, and Jukka Suomela.
\newblock Non-local probes do not help with many graph problems.
\newblock In \emph{Proc.\ 30th International Symposium on Distributed Computing
  (DISC 2016)}. Springer, 2016.
\newblock \doi{10.1007/978-3-662-53426-7_15}.

\bibitem[Gupta and Khan(2018)]{DBLP:journals/corr/abs-1804-01823}
Manoj Gupta and Shahbaz Khan.
\newblock Simple dynamic algorithms for maximal independent set and other
  problems, 2018.

\bibitem[Gyárfás and Lehel(1988)]{doi:10.1002/jgt.3190120212}
András Gyárfás and Jenő Lehel.
\newblock On-line and first fit colorings of graphs.
\newblock \emph{Journal of Graph Theory}, 12\penalty0 (2):\penalty0 217--227,
  1988.
\newblock \doi{h10.1002/jgt.3190120212}.

\bibitem[Halld\'{o}rsson et~al.(2002)Halld\'{o}rsson, Iwama, Miyazaki, and
  Taketomi]{10.1016/S0304-3975(01)00411-X}
Magn\'{u}s~M. Halld\'{o}rsson, Kazuo Iwama, Shuichi Miyazaki, and Shiro
  Taketomi.
\newblock Online independent sets.
\newblock \emph{Theoretical Computer Science}, 289\penalty0 (2):\penalty0
  953–--962, 2002.
\newblock \doi{10.1016/S0304-3975(01)00411-X}.

\bibitem[Halldórsson(1997)]{HALLDORSSON1997265}
Magnús~M. Halldórsson.
\newblock Parallel and on-line graph coloring.
\newblock \emph{Journal of Algorithms}, 23\penalty0 (2):\penalty0 265--280,
  1997.
\newblock \doi{10.1006/jagm.1996.0836}.

\bibitem[Halldórsson(2000)]{OnlineColoringKnownGraphs}
Magnús~M. Halldórsson.
\newblock Online coloring known graphs.
\newblock \emph{Electronic Journal of Combinatorics}, 7, 2000.
\newblock \doi{10.37236/1485}.

\bibitem[Halldórsson and Szegedy(1994)]{HALLDORSSON1994163}
Magnús~M. Halldórsson and Mario Szegedy.
\newblock Lower bounds for on-line graph coloring.
\newblock \emph{Theoretical Computer Science}, 130\penalty0 (1):\penalty0
  163--174, 1994.
\newblock \doi{10.1016/0304-3975(94)90157-0}.

\bibitem[Ivkovi{\'{c}} and Lloyd(1994)]{10.1007/3-540-57899-4_44}
Zoran Ivkovi{\'{c}} and Errol~L. Lloyd.
\newblock Fully dynamic maintenance of vertex cover.
\newblock In \emph{Proc.\ 19th International Workshop on Graph-Theoretic
  Concepts in Computer Science (WG 1993)}. Springer, 1994.

\bibitem[Karp et~al.(1990)Karp, Vazirani, and Vazirani]{onlineMatching}
Richard~M. Karp, Umesh~V. Vazirani, and Vijay~V. Vazirani.
\newblock An optimal algorithm for on-line bipartite matching.
\newblock In \emph{Proc.\ 22nd Annual ACM Symposium on Theory of Computing
  (STOC 1990)}, pages 352--358, 1990.
\newblock \doi{10.1145/100216.100262}.

\bibitem[Kuhn et~al.(2016)Kuhn, Moscibroda, and Wattenhofer]{kuhn16local}
Fabian Kuhn, Thomas Moscibroda, and Roger Wattenhofer.
\newblock Local computation: Lower and upper bounds.
\newblock \emph{Journal of the ACM}, 63\penalty0 (2):\penalty0 1--44, 2016.
\newblock \doi{10.1145/2742012}.

\bibitem[Linial(1992)]{linial92}
Nathan Linial.
\newblock Locality in distributed graph algorithms.
\newblock \emph{SIAM Journal on Computing}, 21\penalty0 (1):\penalty0 193--201,
  1992.
\newblock \doi{10.1137/0221015}.

\bibitem[Lovász et~al.(1989)Lovász, Saks, and Trotter]{LOVASZ1989319}
László Lovász, Michael Saks, and W.T. Trotter.
\newblock An on-line graph coloring algorithm with sublinear performance ratio.
\newblock \emph{Discrete Mathematics}, 75\penalty0 (1):\penalty0 319--325,
  1989.
\newblock \doi{10.1016/0012-365X(89)90096-4}.

\bibitem[Mansour and Vardi(2013)]{mansour2013local}
Yishay Mansour and Shai Vardi.
\newblock A local computation approximation scheme to maximum matching.
\newblock In \emph{Proc.\ 16th International Workshop on Approximation
  Algorithms for Combinatorial Optimization Problems (APPROX 2013) and 17th
  International Workshop on Randomization and Computation (RANDOM 2013)}, pages
  260--273. Springer, 2013.
\newblock \doi{10.1007/978-3-642-40328-6_19}.

\bibitem[Mansour et~al.(2012)Mansour, Rubinstein, Vardi, and
  Xie]{10.1007/978-3-642-31594-7_55}
Yishay Mansour, Aviad Rubinstein, Shai Vardi, and Ning Xie.
\newblock Converting online algorithms to local computation algorithms.
\newblock In \emph{Proc.\ 39th International Colloquium on Automata, Languages
  and Programming (ICALP 2012)}, pages 653--664. Springer, 2012.
\newblock \doi{10.1007/978-3-642-31594-7_55}.

\bibitem[Melnyk et~al.(2022)Melnyk, Suomela, and
  Villani]{melnyk22mending-volume}
Darya Melnyk, Jukka Suomela, and Neven Villani.
\newblock Mending partial solutions with few changes.
\newblock In \emph{Proc.\ 25th International Conference on Principles of
  Distributed Systems (OPODIS 2022)}, Leibniz International Proceedings in
  Informatics (LIPIcs). Schloss Dagstuhl--Leibniz-Zentrum f{\"u}r Informatik,
  2022.

\bibitem[Naor and Stockmeyer(1995)]{naor95}
Moni Naor and Larry Stockmeyer.
\newblock What can be computed locally?
\newblock \emph{SIAM Journal on Computing}, 24\penalty0 (6):\penalty0
  1259--1277, 1995.
\newblock \doi{10.1137/S0097539793254571}.

\bibitem[Neiman and Solomon(2015)]{10.1145/2700206}
Ofer Neiman and Shay Solomon.
\newblock Simple deterministic algorithms for fully dynamic maximal matching.
\newblock \emph{ACM Transactions on Algorithms}, 12\penalty0 (1), 2015.
\newblock \doi{10.1145/2700206}.

\bibitem[Peleg(2000)]{Peleg2000}
David Peleg.
\newblock \emph{{Distributed Computing: A Locality-Sensitive Approach}}.
\newblock SIAM, 2000.
\newblock \doi{10.1137/1.9780898719772}.

\bibitem[Rosenbaum and Suomela(2020)]{rosenbaum20lcl-volume}
Will Rosenbaum and Jukka Suomela.
\newblock Seeing far vs.\ seeing wide: volume complexity of local graph
  problems.
\newblock In \emph{Proc.\ 39th ACM Symposium on Principles of Distributed
  Computing (PODC 2020)}, pages 89--98. ACM Press, 2020.
\newblock \doi{10.1145/3382734.3405721}.

\bibitem[Rubinfeld et~al.(2011)Rubinfeld, Tamir, Vardi, and Xie]{rubinfeld11}
Ronitt Rubinfeld, Gil Tamir, Shai Vardi, and Ning Xie.
\newblock Fast local computation algorithms.
\newblock In \emph{Proc.\ 2nd Symposium on Innovations in Computer Science (ICS
  2011)}, pages 223--238, 2011.

\bibitem[Shallit(2008)]{10.1007/978-3-540-85780-8_5}
Jeffrey Shallit.
\newblock The {F}robenius problem and its generalizations.
\newblock In \emph{Proc.\ 12th International Conference on Developments in
  Language Theory (DLT 2008)}, pages 72--83. Springer, 2008.
\newblock \doi{10.1007/978-3-540-85780-8_5}.

\bibitem[Vishwanathan(1992)]{VISHWANATHAN1992657}
Sundar Vishwanathan.
\newblock Randomized online graph coloring.
\newblock \emph{Journal of Algorithms}, 13\penalty0 (4):\penalty0 657--669,
  1992.
\newblock \doi{10.1016/0196-6774(92)90061-G}.

\end{thebibliography}

\newpage

\appendix

\section{Simple separations}\label{app:separations}

We show the upper and lower bounds for the first four rows of \cref{tab:separations} here; the final row is left for \cref{sec:incomparability-dynLOCAL-SLOCAL}.

\paragraph{3-coloring paths.}
This problem is easy to solve greedily in \SLOCAL and maintain greedily in $\dynamicLOCAL^\pm$ with locality $O(1)$.
However, the locality of this problem in the \LOCAL model is known to be $\Omega(\log^* n)$ \cite{linial92}.

\paragraph{Weak reconstruction.} Recall that in this problem at least one node has to correctly output the structure of its own connected component.

In the $\dynamicLOCAL^\pm$ model we can solve it with locality $O(1)$: after each change, let the nodes located next to the point of change report the current structure of their own connected components.

In the $\SLOCAL$ model this cannot be solved with locality $o(n)$. To see this, assume that an $o(n)$-locality \SLOCAL algorithm $\algoa$ exists, and fix a sufficiently large $n$. Let $\mathcal{S} = \{S_1,S_2,\dotsc,S_k\}$ be a family of non-isomorphic graphs with $n/6$ nodes each; we assume that in each graph $S_i$ there is a unique \emph{root node} that we can distinguish (e.g., the only node with degree $2$). We can choose $k \gg n$.
\begin{center}
\includegraphics[page=1]{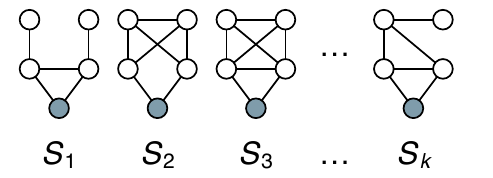}
\end{center}
Now for each $1 \le i < j \le k$ let $G_{i,j}$ be a graph that consists of $S_i$ and $S_j$ such that their root nodes are connected with a path $P$ of length $2n/3$. We split $P$ in three parts: \emph{head} ($n/6$ nodes closest to $S_i$), \emph{tail} ($n/6$ nodes closest to $S_j$), and \emph{middle} (the remaining $n/3$ nodes far from $S_i$ and $S_j$).
\begin{center}
\includegraphics[page=2]{figures/figs.pdf}
\end{center}
Now we consider what happens when we present the middle part to $\algoa$ first, then the head and $S_i$, and finally the tail and $S_j$. Each node $v$ outputs some label $L(v)$; if $\algoa$ works correctly, one of the labels encodes a graph isomorphic to $G_{i,j}$. Let $M$ be the set of labels produced by the middle part, let $H_i$ be the set of labels produced by the nodes in the head and $S_i$, and let $T_j$ be the set of labels produced by the nodes in the tail and $S_j$. The key observation is that set $M$ is independent of $i$ and $j$, set $H_i$ only depends on $i$, and set $T_j$ only depends on $j$. This is because $\algoa$ produces the same output if we present the middle part first, then the tail and $S_j$, and only after that the head and $S_i$. Furthermore, we know that $|M| \le n/3$, $|H_i| \le n/3$, and $|T_j| \le n/3$.

Let $X = M \cup \bigcup_i H_i \cup \bigcup_j T_j$ be the union of all labels that $\algoa$ may produce in the graph family $G_{i,j}$ for the processing order described above. We have $|X| \le (1 + k + k)\cdot n/3 \ll k^2$. As there are $k^2$ nonisomorphic graphs $G_{i,j}$, there has to exist a pair $(i,j)$ such that no label in $X$ is an encoding of $G_{i,j}$. Hence if we apply $\algoa$ to $G_{i,j}$ with the above processing order, none of the nodes will correctly output the structure of $G_{i,j}$.

\paragraph{Cycle detection.} Recall that in this problem for each cycle at least one node of the cycle has to correctly report that it is part of a cycle.

In the \dynamicLOCAL we can solve it with constant locality: the first node that encounters a closed cycle reports it.

In the $\dynamicLOCAL^\pm$ model the problem cannot be solved with locality~$o(n)$. To see this, first construct a long cycle; at least one node $v$ will report that it is part of a cycle. Then break the cycle by deleting an edge that is far from $v$; node $v$ cannot respond to this change, and will hence still incorrectly report that it is part of a cycle.

In the $\SLOCAL$ model the problem cannot be solved with locality~$o(n)$, either. To see this, first consider a long cycle $C$, and split the cycle in four parts of the same length: $C_1, C_2, C_3, C_4$ (numbered in the positive direction along the cycle). Consider the following two processing orders:
(i) $C_1, C_3, C_2, C_4$ and
(ii) $C_1, C_3, C_4, C_2$.
Note that both orders will result in exactly the same output. Furthermore, at least one node has to report that it is part of a cycle. Thus we can always find a processing order in which a node in one of the first three parts reports that it is part of a cycle. But then we can leave out an edge in the fourth part to derive a contradiction (the algorithm claims that there is a cycle while we have a path).

\paragraph{Component-wise leader election.} Recall that in this problem in each connected component exactly one node has to be marked as the leader.

In the \onlineLOCAL model we can solve this problem with constant locality: the algorithm can see the last node of a connected component and assign this node to be a leader.

In the \dynamicLOCAL model this problem cannot be solved with locality~$o(n)$: start with two components with large diameters, both of which have leaders, and then connect the components with an edge that is far from the leaders. At least one of the leaders would need to be removed, but both of them are far from the point of change.

In the \SLOCAL model this problem cannot be solved with locality~$o(n)$, either. To see this, consider two long paths, $P$ and $Q$. Split $P$ in three parts of the same length: $P_1, P_2, P_3$ (numbered in this order along the path). Consider these two processing orders:
(i) $P_2, P_1, P_3$ and
(ii) $P_2, P_3, P_1$.
Note that both orders will result in exactly the same output. Furthermore, exactly one node has to be elected as the leader. Hence in one of these processing orders, the algorithm will commit to a leader that is located within the first two parts, before seeing the final part. Similarly in $Q$ we can find a processing order in which the algorithm commits to a leader early. Therefore we can add an edge between $P$ and $Q$ in parts that the algorithm has not seen yet. We have a contradiction: the algorithm marked are at least two nodes as leaders in one component.

\section{Nested orientation}
\label{sec:incomparability-dynLOCAL-SLOCAL}

In this section, we show that the \dynamicLOCAL model cannot simulate the \SLOCAL model.
We do this by defining \emph{nested orientation} and showing that finding it is trivial in the \SLOCAL model with constant locality, but impossible in the \dynamicLOCAL model with constant locality.

We start by defining the nested orientation:
\begin{definition}[Nested orientation]\label{def:nested-orientation}
    Let $G = (V, E)$ be a simple undirected graph, and let $f \colon V \to \{1, 2, \ldots, \poly(|V|)\}$ be a mapping of the nodes to unique identifiers.
    A \emph{nested orientation} $(O, h)$ of $G$ consists of an \emph{acyclic orientation}~$O$ of the edges~$E$ and a \emph{nesting}~$h$ of nodes~$V$.
    For a node $v \in V$, we set~$h(v) = (f(v), F(v), H(v))$, where $f(v)$ is the unique identifier of $v$,
    \[
        F(v) = \bigl\{ f(u) : \{v,u\} \in E \bigr\}
    \]
    is the set of unique identifiers of all neighbors of $v$, and
    \[
        H(v) = \bigl\{ h(u) : (u,v) \in O \bigr\}
    \]
    is the set of outputs of all in-neighbors of $v$ according to orientation~$O$.
\end{definition}
Note that a nested orientation exists for any graph $G$, and the orientation can be trivially found in the \SLOCAL model with locality $T=1$:
When a node~$v$ is processed, it is sufficient to orient all undirected adjacent edges away from $v$, and to inspect the identifiers and outputs of all neighbors of $v$ to compute $F(v)$ and $H(v)$, respectively.

We start by proving that the length of the nestings of a constant-locality \dynamicLOCAL algorithm is bounded.
More formally, we prove the following lemma:
\begin{lemma}
    \label{lemma:nestedupperboundpath}
    Let $\algoa$ be a \dynamicLOCAL algorithm that finds a nested orientation with constant locality $T$, and let $G$ be a graph with a girth of at least $2T+2$.
    Let $O$ be the orientation produced by $\algoa$ when run on graph~$G$.
    Then orientation~$O$ does not induce a directed walk of length $T+1$ in $G$.
\end{lemma}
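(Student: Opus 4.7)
The plan is to proceed by contradiction. I will assume that the orientation $O$ produced by $\algoa$ on $G$ contains a directed walk $v_0 \to v_1 \to \cdots \to v_{T+1}$ of length $T+1$, and derive an unsatisfiable label-update obligation.

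First I will use the girth condition $g(G) \geq 2T+2$ to pin down distances along the walk: any two walk vertices $v_i, v_k$ satisfy $d_G(v_i, v_k) = |i-k|$, because any shorter alternative path would combine with the walk to close a cycle of length less than $2T+2$. So the walk is an induced path of length $T+1$ in $G$, and the radius-$T$ ball around every walk vertex is tree-like. Next I will examine the adversarial input sequence producing $O$ and locate the step $t^\star$ at which the final walk-edge $\{v_j, v_{j+1}\}$ is inserted. Immediately before $t^\star$ this edge was absent, so the nested chain demanded by the walk was broken at position $j$: the label $h(v_{j+1})$ did not contain $h(v_j)$ in its $H$-component, and consequently none of $h(v_{j+2}), \ldots, h(v_{T+1})$ encoded (via their nested structure) the sub-chain reaching back to $f(v_0)$. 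For the post-step-$t^\star$ labeling to remain a valid nested orientation, each of $h(v_{j+1}), h(v_{j+2}), \ldots, h(v_{T+1})$ must therefore be modified at step $t^\star$ so as to insert the missing nested prefix.

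The \dynamicLOCAL constraint permits such modifications only inside $C_{t^\star}$, i.e., nodes whose radius-$T$ neighborhood actually changed, which for a single edge insertion is exactly the set of nodes whose $T$-ball contains both endpoints of the new edge. Combining this with the girth-derived distances gives $v_{T+1} \in C_{t^\star}$ only if $T+1-j \leq T$, i.e., $j \geq 1$. The matching upper bound on $j$ will come from a symmetric history argument: define $t_k$ as the first step at which some label along the walk first contains the identifier $f(v_0)$ at nesting depth $k$, and observe that each increment $t_k \to t_{k+1}$ can advance the nested chain along the walk by at most $T$ positions, because per-step updates are confined to a $T$-ball around the edit and, by the girth condition, the walk itself is the only short route. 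Accumulating these per-step reach bounds will rule out every feasible position $j$ for the last walk-edge and so contradict the assumption.

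The main obstacle, and the step that I expect to demand the most care, is formalizing the backward/historical bound cleanly: in the forward-oriented walk the cascade at $t^\star$ only propagates toward $v_{T+1}$, so the complementary constraint has to be extracted by looking further back in the adversary's sequence and showing that a chain of length $T+1$ cannot have been accumulated without some intermediate step requiring a label change at a node lying outside its own $C_{t_k}$. I expect to handle this by tracking the sequence of steps at which each successive nested layer first appeared and applying the same distance-plus-locality analysis at each such step, using the girth condition to exclude shortcuts that could otherwise reduce the effective distance of a required propagation.
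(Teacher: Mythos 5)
There is a genuine gap, and it sits exactly where you anticipated trouble: the backward/historical bound you need does not exist. Your plan confines the contradiction to the insertion sequence that builds $G$ itself, resting on the claim that a nested chain of depth $T+1$ ``cannot have been accumulated without some intermediate step requiring a label change at a node lying outside its own $C_{t_k}$''. That claim is false. If the adversary inserts the walk edges in order $\{v_0,v_1\}, \{v_1,v_2\}, \dots, \{v_T,v_{T+1}\}$, then each insertion only forces updates at the two endpoints of the new edge (the enlarged neighbor set at $v_i$ and the new in-neighbor entry at $v_{i+1}$; nothing downstream of $v_{i+1}$ exists yet), so every step is serviced strictly inside its own update region while the nested chain grows by one level per step. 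Arbitrarily long chains, and in particular a directed walk of length $T+1$, can therefore be present at the end of a history in which no step ever violates the \dynamicLOCAL constraint. Your forward analysis at $t^\star$ excludes only the single case $j=0$, and no accumulation of per-step reach bounds can exclude the remaining cases, because there is no obstruction in the history to accumulate against. (A secondary issue: the orientation of an edge at its insertion time need not agree with its final orientation in $O$, so reasoning about ``the step at which the final walk-edge is inserted'' is itself shaky.)

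The lemma is not a statement about the history at all; it is a consequence of correctness under \emph{future} modifications, which the incremental \dynamicLOCAL model lets the adversary make after $G$ is complete. The paper's proof is a one-move argument: assuming the walk $v_1 \to \dots \to v_{T+2}$ exists in the final output, the adversary adds a fresh node $x$ and the edge $\{x, v_1\}$. Then $F(v_1)$ must change, hence $h(v_1)$ changes, and since the nesting propagates along the directed walk, $h(v_{T+2})$ must change as well; but girth at least $2T+2$ makes the walk a shortest path, so $v_{T+2}$ is at distance $T+1 > T$ from the point of change and lies outside the permitted update region, contradicting the correctness of $\algoa$ on the extended sequence. Any repair of your proof must similarly step outside the sequence that builds $G$; staying inside it cannot succeed.
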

\begin{proof}
    Assume for contradiction that orientation~$O$ induces a walk~$P = v_1 v_2 \dots v_{T+1}v_{T+2}$ of length $T+1$ in graph~$G$.
    In particular, this implies that the nesting $h(v_{T+2})$ is dependent on the output of~$v_1$.

    We can modify the graph $G$ by introducing a new node $x$ and connecting it to $v_1$ by an edge. We then show this change to algorithm~$\algoa$.
    This means that the algorithm needs to change the output of node~$v_1$ to include the identifier of its new neighbor.
    This implies that also the output of $v_{T+2}$ needs to be changed as it is dependent on the output of $v_1$.
    Since the girth of $G$ is at least $2T+2$ (path~$P$ is the shortest path between $v_1$ and $v_{T+2}$), node~$v_{T+2}$ does not belong to the radius-$T$ neighborhood of the change. This is a contradiction.
\end{proof}

We are now ready to show that finding a nested orientation is not possible with constant locality in the \dynamicLOCAL model, and therefore it separates the \SLOCAL model from the \dynamicLOCAL model:
\begin{theorem}\label{thm:nested-orientation-dynlocal-unsolvablity}
    Finding a nested orientation in the \dynamicLOCAL model requires locality~$\omega(1)$.
\end{theorem}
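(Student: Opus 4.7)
The plan is to reach a contradiction by combining Lemma~\ref{lemma:nestedupperboundpath} with two classical results from graph theory. Suppose for contradiction that some $\dynamicLOCAL$ algorithm $\algoa$ finds a nested orientation with constant locality~$T$. By Erdős' theorem on the existence of graphs with arbitrarily large girth and chromatic number, for every fixed $T$ there is a graph $G$ with girth at least $2T+2$ and chromatic number at least $T+2$. Let the adversary construct $G$ by adding its nodes and edges one by one, starting from the empty graph; after the final step, algorithm $\algoa$ must commit to a valid nested orientation $(O,h)$ of the whole graph $G$.

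The contradiction then comes from bounding the length of the longest directed path in $O$ in two incompatible ways. First, since the girth of $G$ is at least $2T+2$, Lemma~\ref{lemma:nestedupperboundpath} applies and tells us that $O$ contains no directed walk of length $T+1$. Second, by the Gallai--Roy theorem, every acyclic orientation of a graph $H$ contains a directed path of length at least $\chi(H)-1$; applied to the acyclic orientation $O$ of $G$, this gives a directed path of length at least $\chi(G)-1 \ge T+1$. These two conclusions cannot both hold, so the assumed constant-locality algorithm cannot exist, and the locality of nested orientation in the $\dynamicLOCAL$ model must be $\omega(1)$.

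The main obstacle is conceptual rather than technical: the separation must exploit a \emph{global} property of $G$ (its chromatic number) while Lemma~\ref{lemma:nestedupperboundpath} only forbids a \emph{local} structure (a long directed path), so the key insight is to choose a graph family whose chromatic number and girth are simultaneously large. Once Erdős' theorem and Gallai--Roy are recalled, the rest is straightforward bookkeeping. If one preferred to avoid citing Erdős' probabilistic existence result, an explicit high-girth high-chromatic-number construction (for instance based on Ramanujan graphs or generalized polygons) would work equally well, but this extra machinery is unnecessary for the asymptotic $\omega(1)$ lower bound stated here.
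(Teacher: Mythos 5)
Your proof is correct and follows essentially the same route as the paper: the paper also fixes a high-girth, high-chromatic-number graph, applies Lemma~\ref{lemma:nestedupperboundpath} to bound the longest directed walk, and then derives a proper coloring with too few colors by iteratively peeling off sinks of the acyclic orientation—which is precisely the standard proof of the Gallai--Roy bound you invoke as a black box. Your choice of girth at least $2T+2$ even matches the hypothesis of Lemma~\ref{lemma:nestedupperboundpath} more carefully than the paper's own write-up, so no gap remains.
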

\begin{proof}
    Assume for contradiction that there exists a \dynamicLOCAL algorithm~$\algoa$ that finds a nested orientation with locality $T = O(1)$.
    Let $G = (V, E)$ be a graph with chromatic number and girth of at least $2T$; such graphs are know to exist~\cite{bollobas2004extremal}.
    We will show that running algorithm~$\algoa$ on $G$ produces a coloring with much fewer colors than $2T$, which would contradict our assumption.

    We start by executing algorithm~$\algoa$ on graph~$G$ in an arbitrary order to produce an orientation~$O$ and a nesting~$h$.
    Let $D$ be the directed graph counterpart of $G$ where all edges have been oriented according to $O$; by definition of $O$, graph~$D$ is acyclic.
    This implies that some nodes of $D$ are sinks with only incoming edges.
    We will use these sinks to recursively partition the nodes of $D$, and therefore also $G$, into $L$ independent sets~$C_1, C_2, \ldots, C_L$ as follows:
    \begin{align*}
        D_1 &= D, \\
        C_i &= \text{sink nodes of $D_i$} & \forall i \ge [1, L], \\
        D_{i+1} & = D_i[V_{D_i} \setminus C_i] & \forall i \ge [1, L],
    \end{align*}
    where $D_{L+1}$ is the empty graph.
    See \cref{fig:nested-orientation} for an illustration of this process.
    Note that each set $C_i$ is indeed an independent set of nodes. This is because two adjacent nodes cannot simultaneously be sinks since the edge between them is oriented towards one of the nodes.
    Note also that the longest walk in $D_{i+1}$ is one edge shorter than in $D_i$.
    This, together with \cref{lemma:nestedupperboundpath}, implies that $L \le T$.
    This is however a contradiction as the independent sets~$C_1, C_2, \ldots, C_L$ induce a coloring with $L \le T$ colors for $G$, but the chromatic number of $G$ is at least $2T$ by assumption.
\end{proof}

\begin{figure}
    \centering
    \includegraphics[scale=0.3]{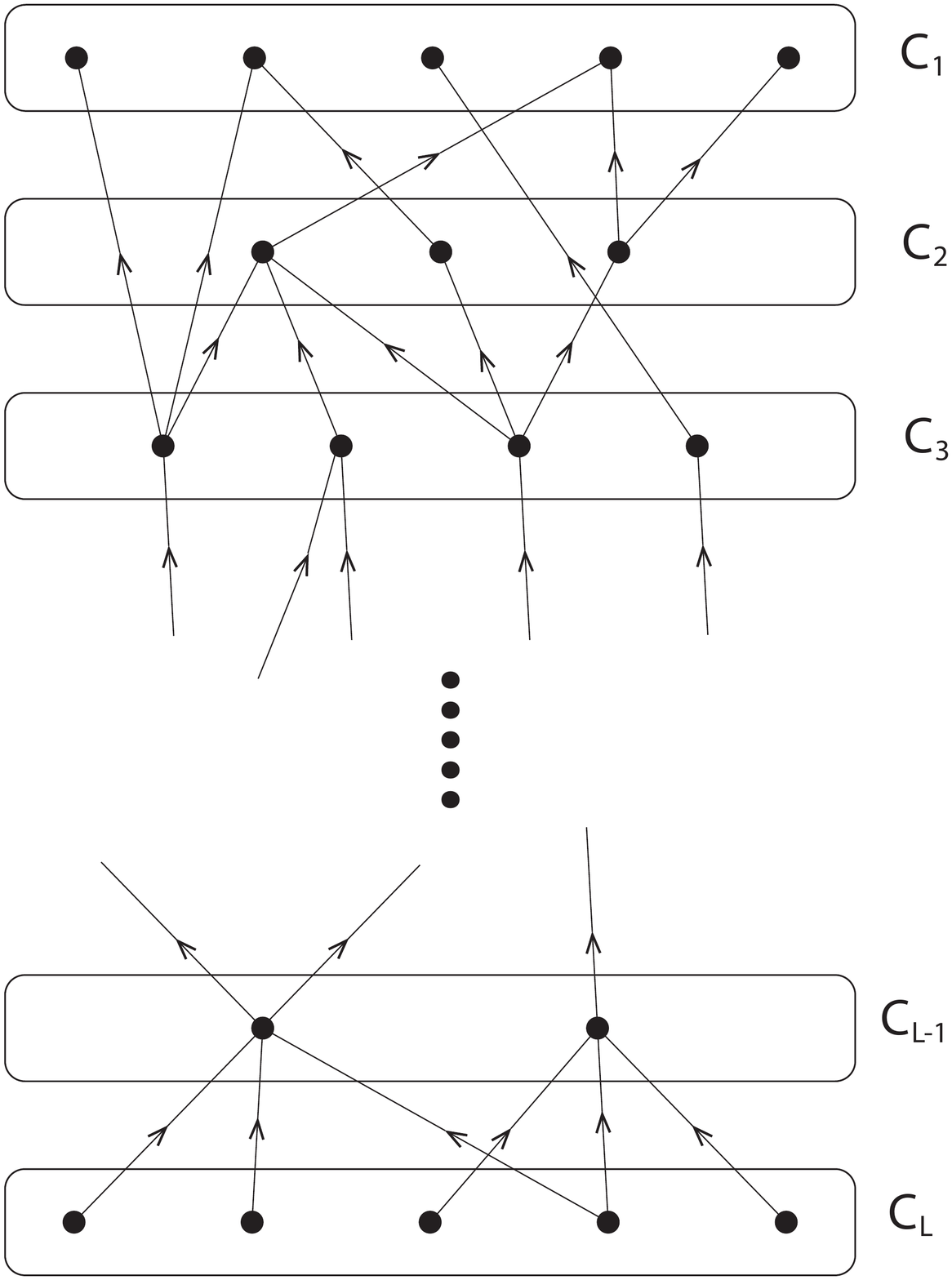}
    \caption{Visualization of partitioning the nodes of $D$ into $L$ independent sets in the proof of \cref{thm:nested-orientation-dynlocal-unsolvablity}. The arrows represent orientation~$O$ of nodes. Note that after we remove nodes of set~$C_1$, that is the sinks in the original directed graph~$D$, then the nodes in set~$C_2$ become sinks, and so on, until $C_L$ contains all the rest of the nodes.}
    \label{fig:nested-orientation}
\end{figure}

\section{Hardness of 3-coloring in SLOCAL}\label{app:slocal-3-coloring}

In this section, we prove the following theorem:
\thmSlocalGrids*
\begin{proof}
Suppose there is an \SLOCAL algorithm $\algoa$ with locality $T(n) = o(n^{1/10})$. We design a \LOCAL algorithm $A'$ that uses $A$ to solve the same problem with locality $T'(n) = o(n^{1/2})$, which is known to be impossible \cite{brandt17grid-lcl}.

Algorithm $\algoa'$ proceeds as follows. Let $k > 2T(n)+1$. Let $G'$ be the $k$th power of the input graph $G$, i.e., nodes within distance $k$ in the input graph are neighbors in $G'$. Now the maximum degree of $G'$ is $d = O(k^2)$. Let $c = d^2$. We then use the graph coloring algorithm by \citet{linial92} to find a proper $c$-coloring of $G'$; this results in a distance-$k$ coloring of $G$. Then we simply proceed by color classes: we first apply $\algoa$ in parallel to all nodes of color $1$, then to all nodes of color $2$, etc. As nodes in one color class are sufficiently far from each other, from the perspective of the output of $\algoa$ it does not matter in which order we process these nodes.

This way $\algoa'$ has simulated $\algoa$. The locality that we need for the coloring phase is $O(d \log^* n) = o(n^{1/2})$, and the locality of the final phase is $O(dc) = o(n^{1/2})$.
\end{proof}

\begin{corollary}
There is no \SLOCAL algorithm that finds a $3$-coloring in bipartite graphs with locality $o(n^{1/10})$.
\end{corollary}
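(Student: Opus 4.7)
The plan is to derive the corollary immediately from \cref{thm:slocalGrids} by the obvious containment of graph classes. Every two-dimensional grid is bipartite (the standard checkerboard partition, given by the parity of the sum of coordinates, is a valid bipartition), so the family of $\sqrt{n}\times\sqrt{n}$ grids on $n$ nodes is a sub-family of the class of $n$-node bipartite graphs. Hence any \SLOCAL algorithm that solves $3$-coloring on \emph{all} bipartite graphs also solves it on every grid, with exactly the same locality guarantee as a function of $n$.

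Concretely, I would argue by contrapositive. Assume for contradiction that there exists an \SLOCAL algorithm $\algoa$ that computes a proper $3$-coloring of every bipartite input graph with locality $T(n) = o(n^{1/10})$. Instantiating $\algoa$ on an arbitrary $\sqrt{n}\times\sqrt{n}$ grid $G$, which has $n$ nodes and is bipartite, produces a valid $3$-coloring of $G$; this would be an \SLOCAL algorithm for $3$-coloring two-dimensional grids with locality $o(n^{1/10})$, directly contradicting \cref{thm:slocalGrids}. Therefore no such $\algoa$ can exist, proving the corollary.

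I do not anticipate any real obstacle in executing this proof: the locality parameter is a function of the total node count $n$ and does not depend on any structural quantity (such as the maximum degree) that would differ between grids and arbitrary bipartite graphs, so the reduction is completely transparent. The only minor point to record in the proof is the explicit verification that grids are bipartite, which is immediate.
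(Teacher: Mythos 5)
Your proposal is correct and matches the paper's (implicit) reasoning exactly: the corollary is stated without further proof precisely because grids are bipartite, so any \SLOCAL algorithm for $3$-coloring all bipartite graphs with locality $o(n^{1/10})$ would contradict \cref{thm:slocalGrids}. Nothing more is needed.
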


\section{LCL problems in paths and cycles}
\label{sec:full-proof-paths-cycles}

In this section, we prove the following theorem:
\thmcyclepath*
We will prove this theorem by first showing that any LCL problem in cycles and paths has either locality $O(1)$ or $\Omega(n)$ in the \onlineLOCAL model.
Next, we will show that problems that require locality~$\Omega(n)$ in the \onlineLOCAL model are exactly the same problems that require locality~$\Omega(n)$ in the \LOCAL model.
In particular, we will be proving the following two lemmas:
\lemmacyclepathspeedup*
\lemmacyclepathsimulation*
In order to prove \cref{thm:cyclepaths}, it is sufficient to combine these lemmas with the fact that the possible localities on paths and cycles in the \LOCAL model are $O(1)$, $\Theta(\log*n)$ and $\Theta(n)$.

\subsection{From sublinear-locality to constant-locality in online-LOCAL}

The high-level idea of the proof of \cref{lemma:1dspeedup} is to construct a large virtual graph $P'$ such that when the original algorithm runs on the virtual graph $P'$, the labeling produced by the algorithm is locally compatible with the labeling in the original graph $P$.
We ensure this by applying a pumping-lemma-style argument on the LCL problem.
The proof uses similar ideas as the ones presented by \citet*{Chang2017ATH}.

We will start by defining a tripartition of vertices of a path, and use it to define an equivalence relation for path segments. Recall that we use $r$ to denote the verification radius of the LCL problem~$\Pi$.
\begin{definition}[tripartition~\cite{Chang2017ATH}]
    Let $S$ be a path segment, and let $s$ and $t$ be the endpoints of $S$.
    A tripartition of vertices $\xi(S, \{s, t\}) = (D_1, D_2, D_3)$ is defined as
    \[
        D_1 = B(s, r-1) \cup B(t, r-1), \quad
        D_2 = \bigcup_{v \in D_1} B(v, r) \setminus D_1, \quad
        D_3 = V(S) \setminus (D_1 \cup D_2).
    \]
\end{definition}
\begin{definition}[equivalence]
    Let $S$ and $S'$ be path segments, and let $s$, $t$, $s'$ and $t'$ be the endpoints of the segments, respectively.
    Let $\xi(S, \{s, t\}) = (D_1, D_2, D_3)$ and $\xi(S', \{s', t'\}) = (D_1', D_2', D_3')$.
    We define that $S$ and $S'$ are equivalent, denoted by $S \pequiv S'$, if the following two statements hold:
    \begin{enumerate}
        \item The graphs induced by $D_1 \cup D_2$ and $D_1' \cup D_2'$ are isomorphic, including the input.
        Moreover, the isomorphism maps $s$ and $t$ in $S$ to $s'$ and $t'$ in $S'$, respectively.
        \item Let $L$ be an output labeling on the graph induced by $D_1 \cup D_2$, and let $L'$ be the corresponding output labeling on the graph induced by $D_1' \cup D_2'$.
        Then $L$ is extendable to a valid labeling for the whole segment $S$ exactly when $L'$ is extendable to a valid labeling for the whole segment~$S'$.
    \end{enumerate}
    We write $\class(S)$ for the equivalence class containing all path segments $S'$ such that $S' \pequiv S$.
\end{definition}
Observe that the equivalence class of a path segment $S$ depends only on the input labeling in the radius-$(2r-1)$ neighborhood of its endpoints, and on the set of possible output label combinations in the same regions.
As there are only finitely many different input and output labelings for the radius-$(2r-1)$ neighborhood, there are only finitely many different classes of path segments.
In particular, there are only finitely many classes which contain only finitely many elements.
Hence there exists a natural number $\alpha$ such that for every path segment $S$ with $|S| \ge \alpha$, the class $\class(S)$ is infinite.
This implies that for every sufficiently long path segment $S$, there exists another \emph{longer} path segment $S'$ that is equivalent to $S$ with respect to the relation $\pequiv$.

One important property of the class of a path segment is that we can replace any path segment with another segment from the same class of path segments without affecting the labeling of the rest of the path.
The following lemma formalizes this idea:
\begin{lemma}\label{lemma:1dreplace}
    Let $P$ be a path, possibly with inputs, let $S$ be a path segment of $P$, and let $L$ be a valid output labeling of $P$.
    Let $S'$ be another path segment such that $\class(S) = \class(S')$, i.e. $S \pequiv S'$, and let $P'$ be a copy of $P$, where segment $S$ is replaced with the segment $S'$.
    Then, there exists a valid labeling $L'$ of $P'$ such that $L(v) = L'(v')$ for each pair of corresponding nodes $v \in V(P) \setminus V(S)$, $v' \in V(P') \setminus V(S')$.
\end{lemma}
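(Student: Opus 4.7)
The plan is to build $L'$ from two pieces and then verify feasibility by a short case analysis on the position of a node in $P'$. First, I restrict $L$ to $D_1 \cup D_2$ of the tripartition $\xi(S, \{s, t\})$; since $L$ itself witnesses that this restriction extends to a valid labeling of $S$, the second clause in the definition of $\pequiv$ applied to $S'$ yields a valid labeling $L^\star$ of the whole segment $S'$ whose restriction to $D_1' \cup D_2'$ is the transport of $L|_{D_1 \cup D_2}$ under the graph-and-input isomorphism provided by the first clause of $\pequiv$. I then define $L'$ by setting $L'|_{S'} := L^\star$ and $L'(v) := L(v)$ for $v \in V(P') \setminus V(S')$, identifying this set with $V(P) \setminus V(S)$ in the natural way.

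To check that $L'$ is feasible for $\Pi$, I would go through every node $v' \in V(P')$ and verify that its radius-$r$ view lies in $\mathcal{T}$. If $v'$ lies outside $S'$ and its radius-$r$ ball does not intersect $S'$, then the view is identical to its view in $(P, L)$, which is valid by assumption. If $v' \in D_2' \cup D_3'$, then $B_{P'}(v', r)$ stays inside $S'$ (any $D_2'$-node is at distance at least $r$ from both endpoints, so the ball cannot leak out of $S'$), and validity follows immediately from $L^\star$ being a valid labeling of the standalone segment $S'$.

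The interesting case is when $v' \in D_1'$, or when $v'$ sits just outside $S'$ within distance $r$ of an endpoint. I would argue that the labeled radius-$r$ neighborhood of $v'$ in $(P', L')$ is isomorphic to the labeled radius-$r$ neighborhood of the corresponding node $v$ in $(P, L)$, which is valid. The portion of $B_{P'}(v', r)$ lying inside $S'$ is contained in $D_1' \cup D_2'$ --- this is exactly what the extra $r$-layer in the definition of $D_2$ is designed to guarantee --- and on that region $L'$ agrees with the transport of $L|_{D_1 \cup D_2}$ by construction. The portion outside $S'$ is identical, as a graph with input and output labels, to the portion outside $S$ in $P$ (where $L' = L$). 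The first clause of $\pequiv$ pins $s', t'$ to $s, t$ under the isomorphism, so the two pieces glue consistently along the boundary.

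The only real obstacle is the bookkeeping at the boundary: one has to keep careful track of the natural identification at the endpoints and verify the containment $B_{P'}(v', r) \cap V(S') \subseteq D_1' \cup D_2'$ for every $v'$ whose view crosses the boundary. Once this containment and the endpoint identification are established, the three cases above cover all nodes of $P'$ and show that $L'$ is a valid labeling agreeing with $L$ outside $S'$, which is exactly the claim of the lemma.
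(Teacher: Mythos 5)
Your proposal is correct and follows essentially the same route as the paper's proof: copy $L$ outside $D_3'$ (transporting it onto $D_1'\cup D_2'$ via the isomorphism from the first clause of $\pequiv$), invoke the second clause to extend to a valid labeling of all of $S'$, and then verify feasibility by observing that boundary nodes see neighborhoods isomorphic to those in $(P,L)$ while nodes in $D_2'\cup D_3'$ are covered by the validity of the extended labeling on the segment. Your case analysis is just a slightly finer-grained version of the paper's two-case check, so no further comment is needed.
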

\begin{proof}
    Let $s$ and $t$ be the endpoints of $S$, and let $s'$ and $t'$ be the endpoints of $S'$.
    Let $\xi(S, \{s, t\}) = (D_1, D_2, D_3)$ and $\xi(S', \{s', t'\}) = (D_1', D_2', D_3')$ be tripartitions of $S$ and $S'$, respectively.
    We can now construct $L'$ as follows:
    For each $v \in V(P) \setminus D_3$ and the corresponding $v' \in V(P') \setminus D_3'$, set $L'(v') = L(v)$.
    Since $S \pequiv S'$, there must exist a valid labeling for the nodes in $D_3'$ since $L$ is valid for $V(S) \supseteq D_3$.

    The labeling $L'$ must be locally valid for every $v' \in V(P') \setminus (D_2' \cup D_3')$ as $B(v', r) \subseteq V(P') \cup D_1' \cup D_2'$, and therefore the neighborhood looks identical to that of the corresponding node in $P$.
    By definition of $S \pequiv S'$ and the fact that $L$ is valid for the whole path segment $S$, the labeling $L'$ must also be valid for every $v' \in D_2' \cup D_3'$.
\end{proof}

To simplify the description of the algorithm, we make use of the following observation:
\begin{observation}[composition]\label{obs:composition}
\OnlineLOCAL algorithms $\algoa$ and $\algob$ with localities $T_1$ and $T_2$ can be composed into an \onlineLOCAL algorithm $\algob \circ \algoa$ with locality $T_1 + T_2$.
\end{observation}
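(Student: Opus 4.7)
The plan is to describe the composed algorithm $\algob \circ \algoa$ explicitly, simulating $\algoa$ internally and then feeding its output to $\algob$ as a (virtual) input labeling. Concretely, $\algob \circ \algoa$ will maintain, on the side, the internal state of $\algoa$ (its unbounded global memory and the set of nodes for which it has already committed a label). When the adversary presents a node $v_i$ together with its radius-$(T_1 + T_2)$ neighborhood $B(v_i, T_1 + T_2)$, the composed algorithm processes $v_i$ in three stages: (1) it enumerates, in an arbitrary but fixed order, the nodes of $B(v_i, T_2)$ that $\algoa$ has not yet seen, and feeds each such node $u$ to the internal simulation of $\algoa$; (2) it then feeds $v_i$ to the internal simulation of $\algob$, viewing the labels produced by $\algoa$ on $B(v_i, T_2)$ as the input labels consumed by $\algob$; (3) it outputs whatever $\algob$ assigns to $v_i$.

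The first thing to verify is that both internal simulations have access to enough of the graph. For stage (1), every node $u$ presented to $\algoa$ lies in $B(v_i, T_2)$, and $\algoa$ is entitled to see $B(u, T_1)$, which by the triangle inequality is contained in $B(v_i, T_1 + T_2)$. So the revealed neighborhood suffices. For stage (2), $\algob$ is entitled to see $B(v_i, T_2)$ together with its (virtual) input labeling; the graph structure is available since $B(v_i, T_2) \subseteq B(v_i, T_1 + T_2)$, and the input labels it needs are exactly the outputs $\algoa$ has just committed on those nodes, which are stored in the maintained state.

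The second thing to verify is that the simulation is consistent with some \emph{legitimate} execution of each of $\algoa$ and $\algob$ against the same graph in some adversarial order. By construction, the sequence of nodes ever fed to $\algoa$ is a prefix of a valid online-LOCAL presentation of the underlying graph, and at the moment each such node is introduced $\algoa$ sees exactly what it is entitled to see; so the labeling $\algoa$ produces is a valid output of $\algoa$ on the input $(G,I)$. Similarly, the sequence of nodes fed to $\algob$ is exactly the adversary's own sequence $v_1, v_2, \dots$, and by the time $v_i$ is presented to $\algob$, every node in $B(v_i, T_2)$ has already received an $\algoa$-label. Hence $\algob$ sees a radius-$T_2$ neighborhood whose input labels come from $\algoa$'s output, which is precisely how we want the composition to be defined.

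There is no real obstacle here, only a small bookkeeping point: one must choose once and for all how $\algob$ interprets the output alphabet of $\algoa$ as its own input alphabet (e.g., by declaring the input alphabet of $\algob$ to be the product of the original input alphabet and the output alphabet of $\algoa$, and passing the pair $(I(u), \algoa\text{-label}(u))$ to $\algob$). With that convention fixed, the argument above shows that $\algob \circ \algoa$ is a well-defined online-LOCAL algorithm whose locality on each query is at most $T_1 + T_2$, as claimed.
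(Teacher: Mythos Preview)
Your argument is correct. The paper does not actually supply a proof of this observation; it merely states it and moves on, treating the composition as evident. Your write-up makes the bookkeeping explicit---feeding all of $B(v_i,T_2)$ to the internal copy of $\algoa$ before querying $\algob$ at $v_i$, and checking via the triangle inequality that the composed algorithm never needs to look beyond radius $T_1+T_2$---and this is exactly the intended meaning. The final paragraph about fixing the input-alphabet convention for $\algob$ is a reasonable clarification, though in the paper's actual uses (the multi-phase algorithms in the proof of \cref{lemma:1dspeedup}) the interface between phases is concrete enough that no such convention needs to be spelled out.
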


We will also make use of the ruling sets:
\begin{definition}[ruling sets]
An $(\alpha, \beta)$-ruling set $R$ of a graph $G$ is a subset of $V(G)$ such that for every $v_i, v_j \in R, v_i \ne v_j$ it holds that $\dist(v_i, v_j) \ge \alpha$, and for every $v \in V(G) \setminus R$ there exists at least one node $u \in R$ such that $\dist(v, u) \le \beta$.
\end{definition}

We are now ready to prove \cref{lemma:1dspeedup}:
\begin{proof}[Proof of \cref{lemma:1dspeedup}]
    Let $\Pi$ be an LCL problem in paths or cycles with checking-radius $r$, let $P$ be a path or a cycle with $n$ nodes, and let $\algoa$ be an \onlineLOCAL algorithm solving $\Pi$ with locality $T(n) = o(n)$.
    Let $\alpha$ be a natural number larger than the length of the longest path belonging to a finite class.
    We can now construct $\algoa'$ solving $\Pi$ with locality $6\alpha$.
    
    Algorithm $\algoa'$ proceeds in three phases (which we can compose using \cref{obs:composition}):
    \begin{enumerate}
        \item In the first phase, the algorithm constructs a $(\alpha, \alpha)$-ruling set $R$ of path $P$.
        This can be done with locality $\alpha$ using the following simple algorithm:
        Whenever the adversary points at a node $v$, check whether any node in $B(v, \alpha)$ already belongs to $R$.
        If not, append $v$ to $R$. Otherwise, $v$ is already ruled by another node of $R$.
        \item\label{step:1dspeedupextend} In the second phase, the algorithm constructs a larger virtual path $P'$ with $N$ nodes and simulates algorithm $\algoa$ on $P'$ in specific neighborhoods.
        Let $N$ be such that
        \begin{equation*}
            \frac{2n}{\alpha}T(N) \ll N .
        \end{equation*}
        Such an $N$ always exists as $T(N) = o(N)$.
        The construction of $P'$ ensures that the labeling produced on the specific neighborhoods of $P'$ are also valid in the corresponding neighborhoods of $P$ in such a way that the labeling is extendable for the whole path $P$.
        Moreover, the construction ensures that the simulation can be done with constant locality.
        
        Virtual path $P'$ can be constructed with the help of the $(\alpha, \alpha)$-ruling set $R$ with locality $3\alpha$ as follows:
        Whenever the adversary points at a node $v$, check whether it belongs to the radius-$2r$ neighborhood of a node $u$ in $R$.
        If this is the case, find at most two other nodes $u_1$ and $u_2$ belonging to $R$ in the radius-$2\alpha$ neighborhood of $u$.
        The only case where there can be less than two nodes in this neighborhood is when the neighborhood contains an endpoint of path $P$.
        In that case, let the corresponding node $u_i$ be the endpoint.
        Let $S_1$ be the path segment between $u$ and $u_1$, and let $S_2$ be the path segment between $u$ and $u_2$.
        
        By the construction of $R$, the lengths of $S_1$ and $S_2$ are in the range $[\alpha, 2\alpha]$, unless an endpoint is reached.
        For each $S_i$, such that $|S_i| \ge \alpha$, we know that $\class(S_i)$ is infinite.
        Hence there exists some $S_i'$ of length at least $N/\alpha$ that satisfies $\class(S_i) = \class(S_i')$.
        Otherwise we set $S_i' = S_i$.
        We construct $P'$ by adding node $u$ and the connections between $u$ and $S_1'$ and $S_2'$ if such connections exist.
        Let $v'$ be the node corresponding to node $v$ in $P'$.
        Algorithm $\algoa'$ uses $\algoa$ to find the label for $v'$ on $P'$ and then uses the same label for $v$.
        By the construction of $P'$, algorithm $\algoa$ never sees further than segments $S_1'$ and $S_2'$.
        Since the labeling is locally consistent on $P'$, the produced labeling is also locally consistent on $P$.
        \cref{fig:1dpumping} visualizes this construction.
        
        \begin{figure}
            \centering
            \includegraphics[scale=0.5]{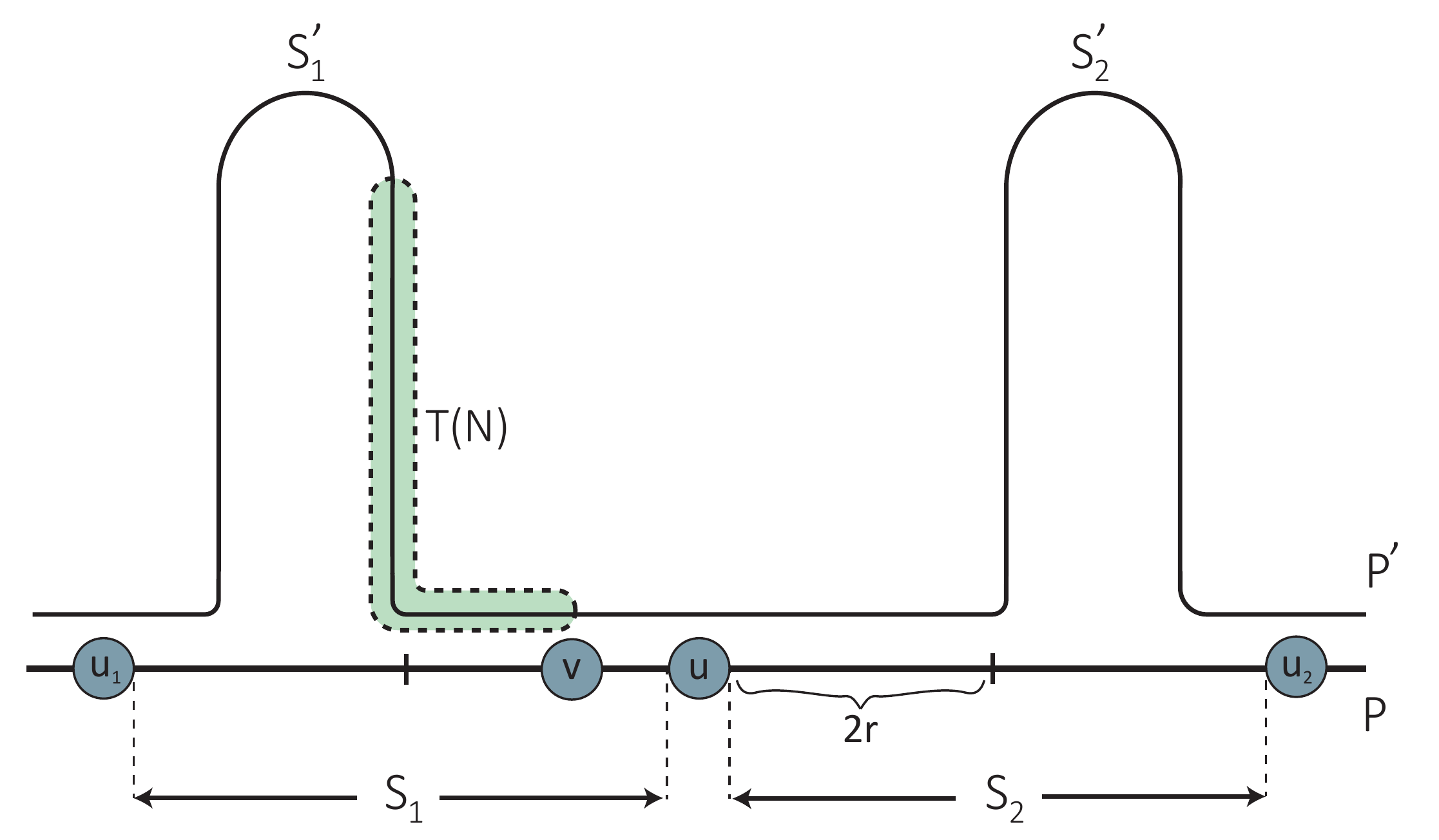}
            \caption{
            Visualization of step~\ref{step:1dspeedupextend} in the proof of \cref{lemma:1dspeedup}.
            The adversary has pointed at node $v$, which is in the radius-$2r$ neighborhood of a node $u$ in the ruling set $R$.
            Algorithm $\algoa'$ finds the neighboring nodes $u_1$ and $u_2$ in $R$, and the corresponding path segments $S_1$ and $S_2$.
            The algorithm then replaces the segments with longer segments $S_1'$ and $S_2'$ from the same class of segments to construct a larger graph $P'$.
            Finally, algorithm $\algoa'$ runs the original algorithm $\algoa$ on the extended path $P'$ to get the label for node $v$.
            Note that by the construction of $P'$, algorithm $\algoa$ does not see further than half of $S_1'$ and $S_2'$.
            }
            \label{fig:1dpumping}
            \Description{}
        \end{figure}
        
        \item\label{itm:1dextend} With the labels around the nodes of $R$ fixed, algorithm $\algoa'$ needs to fix the labeling for the rest of the nodes of $P$.
        By the assumption that $\algoa$ solves the LCL problem $\Pi$, the partial labeling in the radius-$(2r-1)$ neighborhoods of the endpoints of $S_i'$ in the previous step is extendable to the whole path segment $S_i'$.
        \cref{lemma:1dreplace} implies that the same labeling near endpoints is also extendable for the whole path segment $S_i$.
        For each node $v$ that is not inside the radius-$2r$ neighborhood of a node in $R$, algorithm $\algoa'$ finds the corresponding path segment $S_i$ with locality $2\alpha$ and finds a valid label by brute force.
        \qedhere
    \end{enumerate}
\end{proof}

\subsection{From constant-locality in online-LOCAL to log-star-locality in \LOCAL}

Now that we have shown that sublinear-locality \onlineLOCAL algorithms can be sped up to constant-locality algorithms, it remains to prove that constant locality \onlineLOCAL algorithms can be used to construct efficient \LOCAL algorithms.
The high-level idea of the proof of \cref{lemma:1dsimulation} is to use the constant locality \onlineLOCAL algorithm to construct a canonical output labeling for each possible neighborhood of input labels.
The fast \LOCAL algorithm can then use these canonical labelings in disjoint neighborhoods of the real graph, and the construction of the canonical labelings ensures that the labeling extends also to the path segments between these neighborhoods.

\begin{proof}[Proof of \cref{lemma:1dsimulation}]
    Let $\Pi$ be an LCL problem with constant checking-radius $r$, and let $\algoa$ be an \onlineLOCAL algorithm solving $\Pi$ with constant locality $T$.
    Let $\beta = T + r + 1$.

    We first construct an input-labeled path $P$ that, informally, consists of a sufficiently many copies of all possible input neighborhoods. At this point, $P$ is merely a \emph{collection of disjoint path fragments}; we will only later decide how the fragments are connected to each other to form a complete path.
    
    To construct $P$, we first fix a sufficiently large set of \emph{center nodes} $\nu = \{v_1, v_2, \ldots, v_k\}$. Each center node is in the middle of its own path fragment of length $2\beta + 1$. Hence when we finally put the path fragments together, we will satisfy $B(v_i, \beta) \cap B(v_j, \beta) = \emptyset$ for each $i \ne j$.
    
    We choose the input labels of $P$ such that for every possible radius-$\beta$ neighborhood of inputs~$\iota$, there exists some subset of center nodes $M \subseteq \nu$ for which the following holds: $|M| = |\Gamma|^{2r+1}+1$ and $I(B(v, \beta)) = \iota$ for each $v \in M$.
    As there are only finitely many distinct input neighborhoods $\iota$, the size of $P$ can be bounded by a constant.
    
    We will next execute $\algoa$ on every node in the radius-$r$ neighborhood of the center nodes $\nu$ in some arbitrary order and then stop. Note that at this point $P$ is under-specified, but nevertheless, the execution of $\algoa$ at the center nodes is well-defined, as we have already fixed the structure of $P$ in the local neighborhoods of the center nodes. The key idea is that we force $\algoa$ to \emph{commit early} to some outputs around the center nodes, and if $\algoa$ indeed works correctly for any path, then these outputs must be compatible with \emph{any} concrete realization of $P$, no matter how we connect the center nodes to each other. In particular, $\algoa$ must be able to fill the gaps between the center nodes.
    
    We define function $f \colon \Sigma_\Pi^{2\beta+1} \rightarrow \Gamma_\Pi^{2r+1}$ to be a mapping from radius-$\beta$ neighborhoods of input labels to radius-$r$ neighborhoods of output labels such that there are at least two center nodes $v_i$ and $v_j$ in $P$ with $f(B(v_i, \beta)) = f(B(v_j, \beta))$.
    Such nodes exist for every neighborhood of inputs by the pigeon hole principle as there are only $|\Gamma_\Pi|^{2r+1}$ possible output neighborhoods but there are $|\Gamma_\Pi|^{2r+1} + 1$ copies of each input neighborhood by construction. The intuition here is that $f(\iota)$ is the \emph{canonical} way to label neighborhood $\iota$, and we can extract such a canonical labeling by simulating~$\algoa$.
    
    Using the function $f$, we can now construct a locality-$O(\log* n)$ \LOCAL algorithm $\algoa'$ for solving $\Pi$ as follows:
    \begin{enumerate}
        \item Algorithm $\algoa'$ constructs a distance-$2\beta$ coloring of the path or cycle; this can be done with locality~$O(\log* n)$~\cite{cole86deterministic}.
        The algorithm then uses this coloring to construct a $(2\beta, 2\beta)$-ruling set $R'$ of the input graph by using the above-constructed coloring as a schedule for the simple greedy algorithm; this can be done with constant locality.
        The algorithm then prunes the nodes that are at distance less than $\beta$ from the endpoint of the path from set $R'$ to get a set $R$.
        Note that after the pruning, set $R$ is still a $(2\beta, 5\beta)$-ruling set of the input path or cycle.
        
        \item Algorithm $\algoa'$ uses $f$ to label the radius-$r$ neighborhood of each node $v \in R$ by $f(B(v, \beta))$; this can be done with locality $\beta+r$.
        Note that by the construction of $R$, each pair of nodes in $R$ is at least at a distance of $2\beta$ from each other. Thus, the radius-$\beta$ neighborhoods of the nodes are disjoint.
        
        \item The unlabeled nodes now reside in disjoint segments of length at most $5\beta$, sandwiched between radius-$r$ neighborhoods of nodes of $R$ (or the endpoints of the path).
        Algorithm $\algoa'$ can then use $\algoa$ to construct a locally valid output labeling for each disjoint segment in parallel as follows:
        
        Let $u_i$ and $u_j$ be nodes of $R$ marking the endpoints of the segment.
        Run $\algoa$ on the virtual path $P$ like in the construction of function $f$.
        Identify node $u_i$ (resp.\ $u_j$) with a node $v_i \in V(P)$ (resp.\ $v_j \in V(P)$) such that the radius-$\beta$ neighborhoods of inputs are the same.
        Connect the neighborhoods of $v_i$ and $v_j$ with new nodes on $P$ such that the path $u_i \leadsto u_j$ is isomorphic to the path $v_i \leadsto v_j$.
        Algorithm $\algoa'$ can now resume the execution of $\algoa$, and present the unlabeled nodes of path $v_i \leadsto v_j$ to $\algoa$ in some order to see how $\algoa$ would construct a valid labeling for this segment of the path.
        Since the labeling is locally valid on $v_i \leadsto v_j$ in path $P$, it must also be locally valid on $u_i \leadsto u_j$ in the original graph.
        
        We emphasize that the experiment of modifying $P$ and resuming the execution of $\algoa$ there is fully localized; all nodes along the path can do a similar thought experiment independently of each other.
    \end{enumerate}
    Composing all of the phases, we see that the locality of the \LOCAL algorithm $\algoa'$ is $O(\log* n)$.
\end{proof}

\section{LCL problems in rooted trees}
\label{sec:full-proof-rooted-trees}

In this section, we prove the following theorems:
\thmtreesuper*
\thmtreesub*

We restrict our attention to LCL problems in rooted trees without inputs.
We also restrict the family of graphs to be regular rooted trees. If the trees were not regular, the degrees of nodes could be used to encode input labels.
More formally, we use the following definition for LCL problems in rooted trees:
\begin{definition}[LCL problem in rooted trees \cite{balliu21rooted-trees}]
    \label{def:lcl-rooted-trees}
    An LCL problem $\Pi$ is a triple $(\delta, \Gamma, C)$ where:
    \begin{itemize}[noitemsep]
        \item $\delta$ is the number of allowed children,
        \item $\Gamma$ is a finite set of (output) labels,
        \item $C$ is a set of tuples of size $\delta + 1$ from $\Gamma^{\delta + 1}$ called \emph{allowed configurations}.
    \end{itemize}
    \noindent Each allowed configuration $(a \colon b_1 b_2 \ldots b_\delta) \in C$ states that a node with label $a$ is allowed to have children with labels $b_1, b_2, \ldots b_\delta$ in some order.
\end{definition}

In \cref{sec:rooted-tree-path-flexibility}, we define path-flexibility for labels of LCL problems that we need for proving $n^{\Omega(1)}$ lower bounds.
We then introduce the \twohalf-coloring problem in \cref{sec:rooted-tree-example} as an example of an LCL problem requiring locality~$\Omega(\sqrt{n})$ in the \LOCAL model, and show that it does so also in the \onlineLOCAL model.
In \cref{ssec:rooted-tree-equivalence-super}, we generalize this technique for all LCL problems requiring locality~$n^{\Omega(1)}$ in the \LOCAL model.
In \cref{ssec:rooted-tree-equivalence-sub}, we prove \cref{thm:rooted-tree-sublogarithmic} by showing how a locality-$o(\log n)$ \onlineLOCAL algorithm can be used to construct a locality-$O(\log* n)$ \LOCAL algorithm.

\subsection{LCL problems as automata: path-forms and path-flexibility}
\label{sec:rooted-tree-path-flexibility}

The concept of path-flexibility is based on previous work on LCL problems on paths and cycles~\cite{chang21automata-theoretic}.
Even though we are interested in rooted trees, we can still use results on directed paths by defining path-form for LCL problems.
The path-form of an LCL problem can be seen as a relaxation of the original problem:
Instead of requiring all children to have a compatible combination of labels, the path-form relaxes this condition by requiring that every child individually needs to have a valid label regardless of which labels its siblings have.
\begin{definition}[path-form \cite{balliu21rooted-trees}]
    Let $\Pi = (\delta, \Gamma, C)$ be an LCL problem in rooted trees.
    The path-form of $\Pi$ is an LCL problem $\pathform{\Pi} = (1, \Gamma, C')$ on directed paths.
    A configuration $(a : b)$ belongs to $C'$ if and only if there exists a configuration $(a : b_1, b_2, \ldots, b_\delta)$ in $C$ with $b = b_i$ for some $i$.
\end{definition}

Path-flexibility of label $\gamma$ in problem $\Pi$ is defined using an automaton $\machine(\Pi)$ associated with the path-form $\pathform{\Pi}$ of problem $\Pi$.
The intuition with the automaton is that a sequence of labels on a directed path is valid for a problem $\pathform{\Pi}$ exactly when there exists a sequence of transitions in $\machine(\Pi)$ visiting the states corresponding to the labels in the same order.
\begin{definition}[automaton associated with path-form \cite{chang21automata-theoretic}]
    Let $\Pi = (\delta, \Gamma, C)$ be an LCL problem, and let $\pathform{\Pi} = (1, \Gamma, C')$ be its path-form.
    The automaton $\machine(\Pi)$ associated with the path-form $\pathform{\Pi}$ is a non-deterministic unary semiautomaton.
    The set of states is $\Gamma$, and the automaton has a transition $a \rightarrow b$ if the configuration $(a : b)$ exists in $C'$.
\end{definition}

Flexibility is a property of a state of an automaton, and is directly related to path-flexibility of a label.
The intuition with path-flexibility is that, if a label is path-flexible, then it can be used to label far-away nodes on a path without knowing the exact distance between the nodes.
If, on the other hand, a state is path-inflexible, then the labels can exists only at certain distances from each other.
For example, consider the problem of 2-coloring a path with labels \texttt{1} and \texttt{2}.
Two nodes having label \texttt{1} can appear on a correctly colored path only if their distance is even, otherwise the path must be ill-colored.
Therefore, the label \texttt{1} is path-inflexible.
\begin{definition}[flexible state \cite{brandt17grid-lcl}]
    Let $\gamma$ be a state of the automaton $\machine(\Pi)$.
    The state $\gamma$ is flexible if there exists a constant $K$ such that for every $d \ge K$, there exists a walk $\gamma \leadsto \gamma$ in $\machine(\Pi)$ with length $d$.
\end{definition}
\begin{definition}[path-flexibility \cite{balliu21rooted-trees}]
    Let $\Pi$ be an LCL problem, and let $\gamma$ be one of its labels.
    The label $\gamma$ is called path-flexible if the state $\gamma$ is flexible in the automaton $\machine(\Pi)$.
    Otherwise, the label is called path-inflexible.
\end{definition}

Path-flexibility can also be generalized for multiple labels.
The intuition is the same as with individual path-flexible labels:
Nodes having labels from a path-inflexible pair can reside only at certain distances apart from each other.
\begin{definition}[path-flexible pair]
    Let $\Pi$ be an LCL problem, and let $(\gamma_1, \gamma_2)$ be a pair of labels of $\Pi$.
    The pair is path-flexible if there exists a constant $K$ such that for every $d \ge K$, there exists walks $\gamma_1 \leadsto \gamma_2$ and $\gamma_2 \leadsto \gamma_1$ of length $d$ in the automaton $\machine(\Pi)$.
\end{definition}

Path-flexibility of a pair is directly connected to path-flexibility of the individual labels that it consists of, as formalized in the following lemma:
\begin{lemma}
    Let $\Pi$ be an LCL problem.
    A label pair $(\gamma_1, \gamma_2)$ is path-flexible if and only if both $\gamma_1$ and $\gamma_2$ belong to the same strongly connected component of $\machine(\Pi)$ and are separately path-flexible.
\end{lemma}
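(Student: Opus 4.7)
The plan is to prove the two directions of the if-and-only-if separately, using concatenation of walks in the automaton $\machine(\Pi)$ as the only technical ingredient.

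For the forward direction, I would assume the pair $(\gamma_1,\gamma_2)$ is path-flexible with constant $K$. Existence of walks $\gamma_1\leadsto\gamma_2$ and $\gamma_2\leadsto\gamma_1$ (of any sufficiently large length) immediately places both labels in the same strongly connected component of $\machine(\Pi)$. To show that $\gamma_1$ is individually path-flexible, I would fix any walk $\gamma_2\leadsto\gamma_1$ of length $d_2\geq K$ and, for each $d\geq K+d_2$, concatenate a walk $\gamma_1\leadsto\gamma_2$ of length $d-d_2$ with this fixed walk, yielding a walk $\gamma_1\leadsto\gamma_1$ of length exactly $d$. The same argument with the roles of $\gamma_1$ and $\gamma_2$ swapped handles $\gamma_2$.

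For the backward direction, I would assume $\gamma_1$ and $\gamma_2$ lie in a common SCC of $\machine(\Pi)$ and that each is flexible with constants $K_1$ and $K_2$, respectively. Since they are in the same SCC, there exist walks $\gamma_1\leadsto\gamma_2$ and $\gamma_2\leadsto\gamma_1$ of \emph{some} fixed lengths $\ell_{12}$ and $\ell_{21}$. To build a walk $\gamma_1\leadsto\gamma_2$ of length $d$ for every sufficiently large $d$, I would prepend a loop $\gamma_1\leadsto\gamma_1$ of length $d-\ell_{12}$ to the fixed $\gamma_1\leadsto\gamma_2$ walk; this requires $d-\ell_{12}\geq K_1$, so the construction works for all $d\geq K_1+\ell_{12}$. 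The symmetric construction, using the flexibility of $\gamma_2$, produces walks $\gamma_2\leadsto\gamma_1$ of every length $d\geq K_2+\ell_{21}$. Taking $K=\max(K_1+\ell_{12},\, K_2+\ell_{21})$ then witnesses path-flexibility of the pair.

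I do not expect a real obstacle here: the proof is essentially a routine ``pumping'' argument on walks in a finite digraph, and both definitions (flexible state and flexible pair) are tailored precisely to make these concatenations go through. The only subtle point is to make sure that in the backward direction one prepends the loop to the fixed connecting walk (rather than, say, attempting to also pump the connecting walk), so that the constants line up cleanly.
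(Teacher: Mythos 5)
Your proof is correct and follows essentially the same route as the paper: the backward direction is the routine concatenation of a pumped self-loop with a fixed connecting walk, and the forward direction is exactly the paper's observation that combining the two opposite-direction walks yields self-walks of every sufficiently large length (plus the immediate SCC membership). The paper states this argument only in sketch form; your write-up just fills in the same bookkeeping with explicit constants.
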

\begin{proof}
    It is easy to see that if the labels belong to the same strongly connected component and both are path-flexible, then also the pair is path-flexible.
    The other direction is almost as easy:
    If the label pair is path-flexible, then there exists path-flexible walks from $\gamma_1$ to $\gamma_2$, and vice versa.
    Hence the labels must belong to the same strongly connected component. The labels must also be path-flexible as we can combine two walks in opposite directions to get a walk of an arbitrary length from both labels to themselves.
\end{proof}

If a pair of labels is path-inflexible, then it is easy to find a way to combine two fragments containing those labels such that no labeling exists for the resulting graph.
This happens when the distance between the labeled nodes is such that the corresponding walk does not exist in $\machine(\Pi)$.
This is formalized in the following lemma.
\begin{lemma}
    \label{lemma:rooted-tree-combinations}
    Let $(\gamma_1, \gamma_2)$ be a path-inflexible pair in $\machine(\Pi)$.
    Then, for every pair $p_1, p_2 \in \mathbb{N}$, not all of the following walks can exist in $\machine(\Pi)$:
    \begin{enumerate}[noitemsep, label=(\roman*)]
        \item\label{walk-p10} a walk $\gamma_1 \leadsto \gamma_2$ of length $p_1$,
        \item\label{walk-p11} a walk $\gamma_1 \leadsto \gamma_2$ of length $p_1 + 1$,
        \item\label{walk-p20} a walk $\gamma_2 \leadsto \gamma_1$ of length $p_2$, and
        \item\label{walk-p21} a walk $\gamma_2 \leadsto \gamma_1$ of length $p_2 + 1$.
    \end{enumerate}
\end{lemma}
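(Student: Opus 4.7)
The plan is to prove the contrapositive: assume all four walks (i)--(iv) exist simultaneously for some fixed $p_1, p_2$, and show that then the pair $(\gamma_1, \gamma_2)$ must be path-flexible, contradicting the hypothesis.

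The first step is to build closed walks at $\gamma_1$ by concatenation. Specifically, combining (i) with (iii) yields a closed walk $\gamma_1 \leadsto \gamma_1$ of length $p_1+p_2$, while combining (ii) with (iv) yields a closed walk $\gamma_1 \leadsto \gamma_1$ of length $p_1+p_2+2$. Concatenating (i) with (iv) (or equivalently (ii) with (iii)) gives a closed walk at $\gamma_1$ of the intermediate length $p_1+p_2+1$. So we obtain three closed walks at $\gamma_1$ of consecutive lengths $L, L+1, L+2$ where $L = p_1 + p_2$, and in particular of the two consecutive (hence coprime) lengths $L$ and $L+1$.

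The second step is a standard Sylvester--Frobenius / Chicken McNugget argument: since $\gcd(L, L+1) = 1$, every sufficiently large integer $d$ can be written as $d = aL + b(L+1)$ with $a,b \in \mathbb{N}$; concatenating $a$ copies of the first loop with $b$ copies of the second produces a closed walk at $\gamma_1$ of length exactly $d$. Hence $\gamma_1$ is a flexible state. The same reasoning applied to $\gamma_2$ (via concatenating (iii)+(i), (iv)+(ii), (iii)+(ii)) shows $\gamma_2$ is flexible as well.

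The third step upgrades individual flexibility of $\gamma_1$ and $\gamma_2$ into flexibility of the pair. To realize a walk $\gamma_1 \leadsto \gamma_2$ of an arbitrary sufficiently large length $d$, take a closed walk at $\gamma_1$ of length $d - p_1$ (which exists for all large $d$ by the previous step) and concatenate it with walk (i); this yields a walk $\gamma_1 \leadsto \gamma_2$ of length exactly $d$. The symmetric construction, using walk (iii) and long loops at $\gamma_2$, produces walks $\gamma_2 \leadsto \gamma_1$ of every sufficiently large length. Choosing $K$ large enough to cover both constructions shows the pair $(\gamma_1, \gamma_2)$ is path-flexible, the desired contradiction.

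The only subtle point is the Frobenius argument in step two — one must be careful to state it with $L$ and $L+1$ (which are automatically coprime, so no extra number theory is needed) rather than trying to use $L$ and $L+2$. Apart from this, the proof is essentially a direct manipulation of walk concatenations in $\machine(\Pi)$, so I expect no significant obstacles; the work is mostly bookkeeping the lengths correctly.
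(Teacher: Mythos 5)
Your proposal is correct and follows essentially the same route as the paper's proof: build the two closed walks at $\gamma_1$ of consecutive lengths $p_1+p_2$ and $p_1+p_2+1$ from walks (i), (iii), (iv), invoke the Frobenius/coprimality argument to get loops of every sufficiently large length at both $\gamma_1$ and $\gamma_2$, and then append walks (i) and (iii) to obtain $\gamma_1 \leadsto \gamma_2$ and $\gamma_2 \leadsto \gamma_1$ walks of every sufficiently large length, contradicting path-inflexibility of the pair. No gaps; the bookkeeping matches the paper's choice of $K' $ and $K = K' + \max(p_1,p_2)$ up to inessential constants.
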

\begin{proof}
    Assume for contradiction that all of the walks existed.
    Then the claim is that there exists some $K$ such that for every $d \ge K$ there exists walks $\gamma_1 \leadsto \gamma_2$ and $\gamma_2 \leadsto \gamma_1$ of length exactly $d$, and therefore $(\gamma_1, \gamma_2)$ is a path-flexible pair.

    Such walks can be constructed.
    Consider the following walks from $\gamma_1$ back to $\gamma_1$:
    \begin{enumerate}[noitemsep]
        \item[$W_1$] Walk along \ref{walk-p10} to go from $\gamma_1$ to $\gamma_2$, and then take walk \ref{walk-p20} to go back to $\gamma_1$.
        \item[$W_2$] Walk along \ref{walk-p10} to go from $\gamma_1$ to $\gamma_2$, and then take walk \ref{walk-p21} to go back to $\gamma_1$.
    \end{enumerate}
    The walk $W_1$ has a length of $p_1 + p_2$, and $W_2$ has a length of $p_1 + p_2 + 1$.
    Combining the walks $W_1$ and $W_2$ repeatedly makes it possible to construct a walk from $\gamma_1$ back to itself for any length of at least $K' = (p_1 + p_2)(p_1 + p_2 - 1)$ \cite{10.1007/978-3-540-85780-8_5}.

    A similar construction can be used to construct a self-walk from $\gamma_2$ back to itself for any length of at least $K'$.
    Combining these with walks \ref{walk-p10} and \ref{walk-p20}, it is possible to construct walks $\gamma_1 \leadsto \gamma_2$ and $\gamma_2 \leadsto \gamma_1$ for any length $d$ of at least $K = K' + \max(p_1, p_2)$.
    Hence the pair $(\gamma_1, \gamma_2)$ is path-flexible.
\end{proof}

\subsection{\texorpdfstring{\boldmath\twohalf-coloring takes $\Omega(\sqrt{n})$ in \onlineLOCAL}{2½-coloring takes Ω(√n) in \onlineLOCAL}}
\label{sec:rooted-tree-example}

In this section, we will present an LCL problem that requires locality~$\Omega(\sqrt{n})$ in the \onlineLOCAL model.
The problem is called \twohalf-coloring and was first introduced by \citet*{Chang2017ATH}.

Informally, the \twohalf-coloring problem requires 2-coloring the tree near the root with labels \texttt{A} and \texttt{B}, and near the leaves with \texttt{1} and \texttt{2}.
The different 2-colored parts can be combined using an intermediate label \texttt{X}.

Formally, the \twohalf-coloring problem $\probtwohalf$ is defined as follows:
\begin{definition}[\twohalf-coloring problem $\probtwohalf$]
    Problem $\probtwohalf$ has five labels: \texttt{A}, \texttt{B}, \texttt{X}, \texttt{1}, and \texttt{2}.
    Nodes with label \texttt{A} can have any combination of children labeled with \texttt{B} and \texttt{X}.
    Similarly nodes with label \texttt{B} can have any combination of children labeled with \texttt{A} and \texttt{X}.
    Nodes with label \texttt{X} need to have at least one child with label \texttt{1}, and the other child can have any label other than \texttt{X}.
    Nodes with label \texttt{1} need to have two children labeled with \texttt{2}, and nodes with label \texttt{2} need to have two children labeled with~\texttt{1}.
\end{definition}

The reason for naming this problem \twohalf-coloring is that any valid 2-coloring is also a valid \twohalf-coloring, and a valid \twohalf-coloring can be transformed into a valid 3-coloring by mapping $\texttt{A} \mapsto \texttt{1}$ and $\texttt{B} \mapsto \texttt{2}$.
Hence the \twohalf-coloring problem is at least as hard as the 3-coloring problem, but at most as hard as the 2-coloring problem.

\begin{figure}
    \centering
    \resizebox*{0.5\textwidth}{!}{\subimport*{figures/}{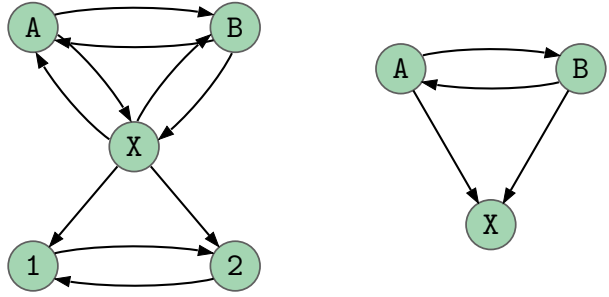}}
    \caption[Automata associated with path forms of $\probtwohalf$ and $\probtwohalf'$.]{
        The automaton associated with the path-form of problem $\probtwohalf$ on the left, and the automaton associated with the path from of $\probtwohalf'$ on the right.
        On the left, the states \texttt{A}, \texttt{B} and \texttt{X} are path-flexible as there exists walks from them to back themselves for all lengths of at least 4.
        The states \texttt{1} and \texttt{2} are path-inflexible as only self walks with even length are possible.
        On the right, all states \texttt{A}, \texttt{B} and \texttt{X} are path-inflexible.
        States \texttt{A} and \texttt{B} have only even-length self walks.
        The state \texttt{X} has no self walks.
    }
    \Description{}
    \label{fig:twohalf-automata}
\end{figure}

By analyzing the problem description of $\probtwohalf$, and the automaton associated with its path-form shown in \cref{fig:twohalf-automata}, the following observation can be made:
\begin{observation}
    \label{obs:twohalf-inflexibility}
    The labels \textup{\texttt{1}} and \textup{\texttt{2}} are path-inflexible in problem $\probtwohalf$.
    This is because any path between two nodes labeled with labels \textup{\texttt{1}} and \textup{\texttt{2}} must form a valid 2-coloring.
    Denote the set of path-inflexible labels by
    $\Gamma_1 = \{ \textup{\texttt{1}},\ \textup{\texttt{2}} \}$.
\end{observation}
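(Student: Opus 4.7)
The plan is to verify the definition of path-inflexibility directly by inspecting the transitions out of labels \texttt{1} and \texttt{2} in the automaton $\machine(\probtwohalf)$ associated with the path-form $\pathform{\probtwohalf}$. By \cref{def:lcl-rooted-trees}, the only configuration allowed at a node labeled \texttt{1} is $(\texttt{1} : \texttt{2}\,\texttt{2})$, so the path-form contains exactly one transition out of \texttt{1}, namely $\texttt{1} \to \texttt{2}$; symmetrically, the only transition out of \texttt{2} is $\texttt{2} \to \texttt{1}$.

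From this I would deduce that every walk in $\machine(\probtwohalf)$ starting at \texttt{1} is the unique alternating walk $\texttt{1}, \texttt{2}, \texttt{1}, \texttt{2}, \dots$, and likewise for \texttt{2}. In particular, every walk of the form $\texttt{1} \leadsto \texttt{1}$ has even length, and every walk of the form $\texttt{2} \leadsto \texttt{2}$ has even length. Hence for every constant $K$ there exists $d \ge K$ (take any odd $d$) for which no walk $\texttt{1} \leadsto \texttt{1}$ of length $d$ exists, and similarly for \texttt{2}. By the definition of a flexible state, neither \texttt{1} nor \texttt{2} is flexible in $\machine(\probtwohalf)$, and by the definition of path-flexibility the labels \texttt{1} and \texttt{2} are path-inflexible in $\probtwohalf$.

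The observation's accompanying sentence (\emph{``any path between two nodes labeled with labels \textup{\texttt{1}} and \textup{\texttt{2}} must form a valid 2-coloring''}) is simply the combinatorial restatement of the one-transition-out analysis: along any root-to-leaf path once a \texttt{1} or \texttt{2} appears, the labels downwards must strictly alternate. I would include a one-sentence remark to this effect so the reader sees why the parity restriction is forced rather than a coincidence of the automaton.

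No real obstacle is expected here; the proof is just an unpacking of the automaton from the problem description. The only subtlety worth emphasizing is that path-inflexibility is a statement about the path-form, so we explicitly pass from the configuration set $C$ of $\probtwohalf$ to the configuration set $C'$ of $\pathform{\probtwohalf}$ before invoking the absence of odd-length self-walks.
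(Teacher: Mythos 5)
Your proposal is correct and matches the paper's own justification: the paper also argues via the automaton $\machine(\probtwohalf)$, noting that the only transitions involving \texttt{1} and \texttt{2} force strict alternation (a valid $2$-coloring), so self-walks $\texttt{1} \leadsto \texttt{1}$ and $\texttt{2} \leadsto \texttt{2}$ have only even lengths and the states cannot be flexible. Your version simply spells out the same parity argument in more detail, including the explicit passage from $C$ to the path-form configurations $C'$.
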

Problem $\probtwohalf$ can be restricted by removing these labels from the set of possible labels.
This gives the following restricted problem $\probtwohalf'$:
\begin{definition}[restricted \twohalf-coloring problem $\probtwohalf'$]
    Problem $\probtwohalf'$ has three labels: \texttt{A}, \texttt{B}, and \texttt{X}.
    Nodes with label \texttt{A} can have any combination of children labeled with \texttt{B} and \texttt{X}.
    Similarly, nodes with label \texttt{B} can have any combination of children labeled with \texttt{A} and \texttt{X}.
    Nodes with label \texttt{X} cannot have any children.
\end{definition}
By analyzing the problem description of $\probtwohalf'$, and the automaton associated with its path-form shown in \cref{fig:twohalf-automata}, we arrive at the following observation:
\begin{observation}
    \label{obs:restricted-twohalf-inflexibility}
    All labels \textup{\texttt{A}}, \textup{\texttt{B}} and \textup{\texttt{X}} are path-inflexible in problem $\probtwohalf'$.
    This is because the label \textup{\texttt{X}} cannot be used to label any internal nodes of the tree, and hence this is (almost) a regular 2-coloring problem with labels \textup{\texttt{A}} and \textup{\texttt{B}}.
    Denote the set of inflexible labels by
    $ \Gamma_2 = \{ \textup{\texttt{A}},\ \textup{\texttt{B}},\ \textup{\texttt{X}} \} $.
\end{observation}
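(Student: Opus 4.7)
The plan is to verify the claim directly from the automaton $\machine(\probtwohalf')$ shown on the right of \cref{fig:twohalf-automata}, working from the problem description. First I would write down the transitions of $\machine(\probtwohalf')$ explicitly by reading off the allowed parent/child configurations of $\probtwohalf'$: from state \texttt{A} there are transitions to \texttt{B} and to \texttt{X}; from state \texttt{B} there are transitions to \texttt{A} and to \texttt{X}; and from state \texttt{X} there are no outgoing transitions, since \texttt{X}-labeled nodes have no children.

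The case of \texttt{X} is immediate: since \texttt{X} has out-degree zero in $\machine(\probtwohalf')$, there is no walk $\texttt{X} \leadsto \texttt{X}$ of any positive length, so for every constant $K$ there exist (indeed, infinitely many) lengths $d \ge K$ for which no such self-walk exists. Hence \texttt{X} is not flexible, i.e.\ path-inflexible.

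For \texttt{A} and \texttt{B} I would argue a parity obstruction. Any self-walk $\texttt{A} \leadsto \texttt{A}$ cannot pass through \texttt{X} (since \texttt{X} has no outgoing edges), and therefore must alternate strictly between \texttt{A} and \texttt{B}. Such an alternating walk returns to \texttt{A} only after an even number of steps. Consequently no walk $\texttt{A} \leadsto \texttt{A}$ of odd length exists, so for any candidate $K$ the length $d = 2\lceil K/2 \rceil + 1 \ge K$ witnesses a failure of flexibility; thus \texttt{A} is path-inflexible. The symmetric argument applies to \texttt{B}.

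Combining the three cases, every label of $\probtwohalf'$ is path-inflexible, proving the observation. The only mildly delicate step is the parity argument for \texttt{A} and \texttt{B}, which relies crucially on the fact that \texttt{X} is a sink so that it cannot be used as a detour to change the parity of a closed walk; everything else is a direct reading of the automaton.
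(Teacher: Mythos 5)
Your proof is correct and follows essentially the same route as the paper, which likewise reads the claim off the automaton $\machine(\probtwohalf')$: the paper's figure caption notes exactly that \texttt{X} has no self-walks and that \texttt{A} and \texttt{B} admit only even-length self-walks, which is the sink-plus-parity argument you spell out. Your write-up just makes the alternation argument explicit, which is fine.
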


The lower-bound construction for \twohalf-coloring relies on the fact that an algorithm solving $\probtwohalf$ cannot use labels \texttt{1} and \texttt{2} to label two nodes without seeing how the nodes are connected to each other.
Otherwise the adversary could alter the length of the path connecting the nodes such that the labeling cannot be completed.
This is because labels \texttt{1} and \texttt{2} are path-inflexible, as noted in \cref{obs:twohalf-inflexibility}.

Nevertheless any sufficiently local algorithm can be forced to use labels \texttt{1} and \texttt{2} in far-away parts of the graph using the following trick:
Show the algorithm two fragments of the graph such that the algorithm does not know how they are connected.
Based on the previous observation, the algorithm must use labels \texttt{A}, \texttt{B} and \texttt{X} to label the shown nodes.

By \cref{obs:restricted-twohalf-inflexibility}, labels \texttt{A}, \texttt{B} and \texttt{X} are path-inflexible in the restricted \twohalf-coloring problem $\probtwohalf'$.
Hence the adversary can connect the fragments in such a way that there exists no valid labeling for the path connecting the two fragments that satisfies the restricted problem~$\probtwohalf'$. Therefore, the algorithm must use labels \texttt{1} and \texttt{2} somewhere along the connecting path.
The construction can be repeated to create another node with label \texttt{1} or \texttt{2}.
By being careful with the construction, it can be ensured that the algorithm has not seen how the two nodes having labels \texttt{1} and \texttt{2} are connected to each other. Hence the adversary can force the distance between them to be such that there exists no valid labeling for $\probtwohalf$ for the resulting tree.

Informally, the components that are shown to the algorithm are formed in the following way:
Let $P$ be a directed path with $x$ nodes from $t$ to $s$, and let $c$ be the middlemost node on the path.
Collectively call the nodes on path $P$ \emph{layer-2 nodes}, and call node $t$ the \emph{connector node}.
Identify each node along path $P$ as the root of another path of $x+1$ nodes, and call the newly-created nodes \emph{layer-1 nodes}.
As the last step, add one child node to each layer-1 node. Call these added nodes \emph{layer-0 nodes}.
We call the tree formed in this way $T^x_2$.
\cref{fig:2-layer-tree} shows an example of tree $T^5_2$.

\begin{figure}
    \centering
    \resizebox*{0.8\textwidth}{!}{\fontsize{11}{11}\selectfont%% Creator: Inkscape 1.1.2 (0a00cf5339, 2022-02-04), www.inkscape.org
%% PDF/EPS/PS + LaTeX output extension by Johan Engelen, 2010
%% Accompanies image file 'example-layered-tree.pdf' (pdf, eps, ps)
%%
%% To include the image in your LaTeX document, write
%%   \input{<filename>.pdf_tex}
%%  instead of
%%   \includegraphics{<filename>.pdf}
%% To scale the image, write
%%   \def\svgwidth{<desired width>}
%%   \input{<filename>.pdf_tex}
%%  instead of
%%   \includegraphics[width=<desired width>]{<filename>.pdf}
%%
%% Images with a different path to the parent latex file can
%% be accessed with the `import' package (which may need to be
%% installed) using
%%   \usepackage{import}
%% in the preamble, and then including the image with
%%   \import{<path to file>}{<filename>.pdf_tex}
%% Alternatively, one can specify
%%   \graphicspath{{<path to file>/}}
%% 
%% For more information, please see info/svg-inkscape on CTAN:
%%   http://tug.ctan.org/tex-archive/info/svg-inkscape
%%
\begingroup%
  \makeatletter%
  \providecommand\color[2][]{%
    \errmessage{(Inkscape) Color is used for the text in Inkscape, but the package 'color.sty' is not loaded}%
    \renewcommand\color[2][]{}%
  }%
  \providecommand\transparent[1]{%
    \errmessage{(Inkscape) Transparency is used (non-zero) for the text in Inkscape, but the package 'transparent.sty' is not loaded}%
    \renewcommand\transparent[1]{}%
  }%
  \providecommand\rotatebox[2]{#2}%
  \newcommand*\fsize{\dimexpr\f@size pt\relax}%
  \newcommand*\lineheight[1]{\fontsize{\fsize}{#1\fsize}\selectfont}%
  \ifx\svgwidth\undefined%
    \setlength{\unitlength}{388bp}%
    \ifx\svgscale\undefined%
      \relax%
    \else%
      \setlength{\unitlength}{\unitlength * \real{\svgscale}}%
    \fi%
  \else%
    \setlength{\unitlength}{\svgwidth}%
  \fi%
  \global\let\svgwidth\undefined%
  \global\let\svgscale\undefined%
  \makeatother%
  \begin{picture}(1,0.79381443)%
    \lineheight{1}%
    \setlength\tabcolsep{0pt}%
    \put(0,0){\includegraphics[width=\unitlength,page=1]{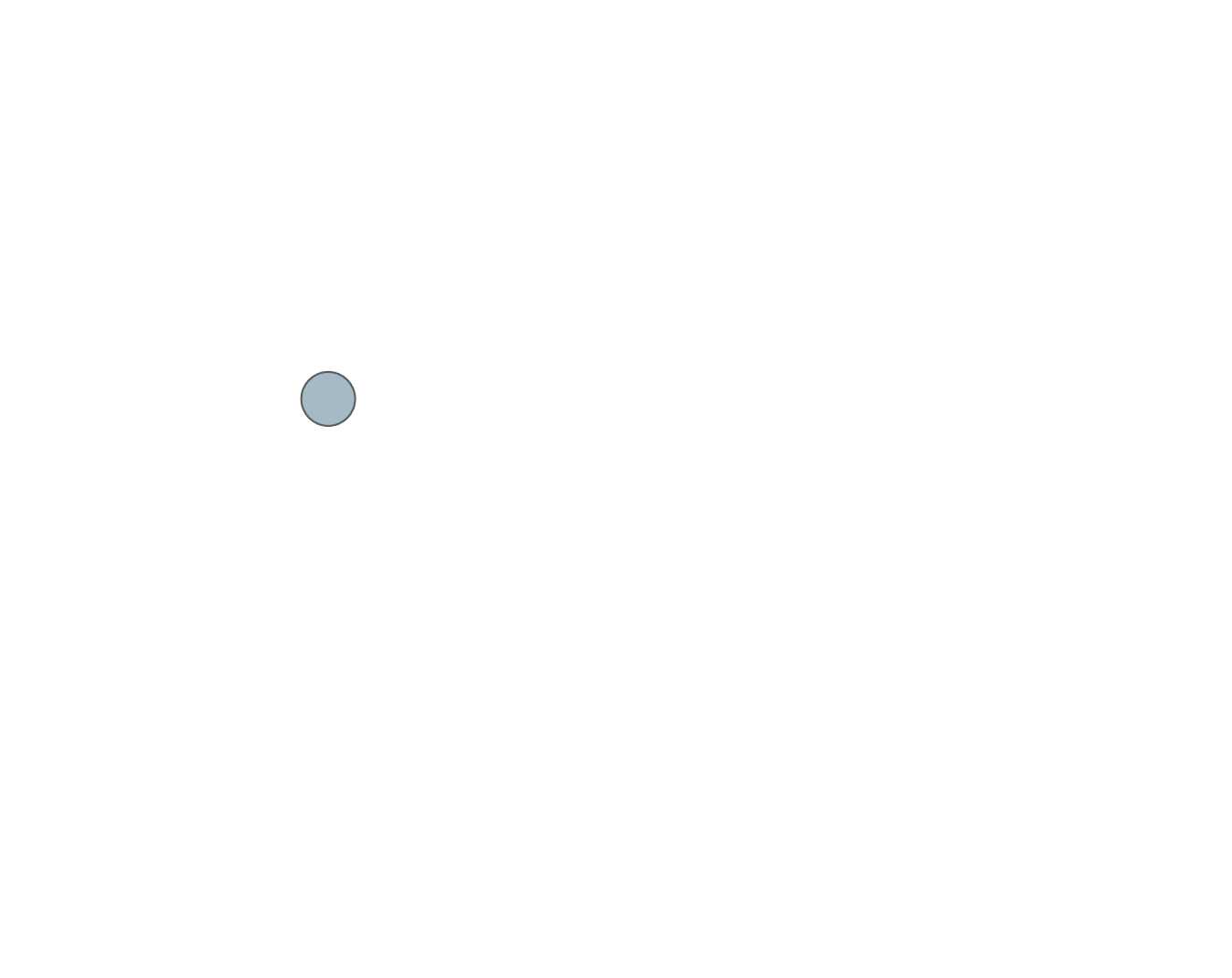}}%
    \put(0.26657284,0.46188748){\color[rgb]{0,0,0}\makebox(0,0)[t]{\lineheight{1.25}\smash{\begin{tabular}[t]{c}$t$\end{tabular}}}}%
    \put(0,0){\includegraphics[width=\unitlength,page=2]{example-layered-tree.pdf}}%
    \put(0.61672,0.60777785){\color[rgb]{0,0,0}\makebox(0,0)[t]{\lineheight{1.25}\smash{\begin{tabular}[t]{c}$c$\end{tabular}}}}%
    \put(0,0){\includegraphics[width=\unitlength,page=3]{example-layered-tree.pdf}}%
    \put(0.9668671,0.75366817){\color[rgb]{0,0,0}\makebox(0,0)[t]{\lineheight{1.25}\smash{\begin{tabular}[t]{c}$s$\end{tabular}}}}%
    \put(0,0){\includegraphics[width=\unitlength,page=4]{example-layered-tree.pdf}}%
  \end{picture}%
\endgroup%
}
    \caption{
        An example of a layered tree $T^5_2$.
        The darkest nodes belong to layer 2, the lighter nodes belong to layer 1, and the white leaf nodes belong to layer 0.
        Node $s$ is the root, node $t$ is the connector, and node $c$ is the middlemost node on the core path.
    }
    \Description{}
    \label{fig:2-layer-tree}
\end{figure}

We can now prove the lower bound for the \twohalf-coloring problem.
The proof is based on a simulation of an \onlineLOCAL algorithm $\algoa$ which supposedly solves the \twohalf-coloring problem with locality $o(\sqrt{n})$.
We simulate algorithm $\algoa$ on disjoint graph fragments, and only after the algorithm has committed the labels for some nodes, we decide how the fragments are connected to each other.
In the following, we will show how we can use the commitments of the algorithm to construct an input instance for which the algorithm must fail.
We will also illustrate a concrete example of the construction for an algorithm with locality $T(n) = 1$ with parameter $x = 5$.

Assume for contradiction that there existed an \onlineLOCAL algorithm $\algoa$ solving problem $\probtwohalf$ with locality $T(n) = o(\sqrt{n})$.
We can now construct a failing instance $G_\algoa$ as follows:
\begin{enumerate}
    \item Construct $4$ copies of tree $T^x_2$, namely $G_1, G_2, G_3, G_4$, with sufficiently large constant $x \gg 2T(n)$, where $n$ is the size of the graph.
    This is possible because each copy of tree $T^x_2$ has $x + 2x^2$ nodes, and hence the whole graph has $4x + 8x^2 < 16x^2$ nodes.
    By assumption, the locality is $T(n) = o(\sqrt{n})$, and hence there exists some $x$ such that
    $ x \gg 2T(16x^2) $
    holds.

    \begin{figure}
        \centering
        \resizebox*{0.8\textwidth}{!}{\fontsize{9}{9}\selectfont%% Creator: Inkscape 1.1.2 (0a00cf5339, 2022-02-04), www.inkscape.org
%% PDF/EPS/PS + LaTeX output extension by Johan Engelen, 2010
%% Accompanies image file 'superlogarithmic-example-1.pdf' (pdf, eps, ps)
%%
%% To include the image in your LaTeX document, write
%%   \input{<filename>.pdf_tex}
%%  instead of
%%   \includegraphics{<filename>.pdf}
%% To scale the image, write
%%   \def\svgwidth{<desired width>}
%%   \input{<filename>.pdf_tex}
%%  instead of
%%   \includegraphics[width=<desired width>]{<filename>.pdf}
%%
%% Images with a different path to the parent latex file can
%% be accessed with the `import' package (which may need to be
%% installed) using
%%   \usepackage{import}
%% in the preamble, and then including the image with
%%   \import{<path to file>}{<filename>.pdf_tex}
%% Alternatively, one can specify
%%   \graphicspath{{<path to file>/}}
%% 
%% For more information, please see info/svg-inkscape on CTAN:
%%   http://tug.ctan.org/tex-archive/info/svg-inkscape
%%
\begingroup%
  \makeatletter%
  \providecommand\color[2][]{%
    \errmessage{(Inkscape) Color is used for the text in Inkscape, but the package 'color.sty' is not loaded}%
    \renewcommand\color[2][]{}%
  }%
  \providecommand\transparent[1]{%
    \errmessage{(Inkscape) Transparency is used (non-zero) for the text in Inkscape, but the package 'transparent.sty' is not loaded}%
    \renewcommand\transparent[1]{}%
  }%
  \providecommand\rotatebox[2]{#2}%
  \newcommand*\fsize{\dimexpr\f@size pt\relax}%
  \newcommand*\lineheight[1]{\fontsize{\fsize}{#1\fsize}\selectfont}%
  \ifx\svgwidth\undefined%
    \setlength{\unitlength}{410bp}%
    \ifx\svgscale\undefined%
      \relax%
    \else%
      \setlength{\unitlength}{\unitlength * \real{\svgscale}}%
    \fi%
  \else%
    \setlength{\unitlength}{\svgwidth}%
  \fi%
  \global\let\svgwidth\undefined%
  \global\let\svgscale\undefined%
  \makeatother%
  \begin{picture}(1,1.40243902)%
    \lineheight{1}%
    \setlength\tabcolsep{0pt}%
    \put(0,0){\includegraphics[width=\unitlength,page=1]{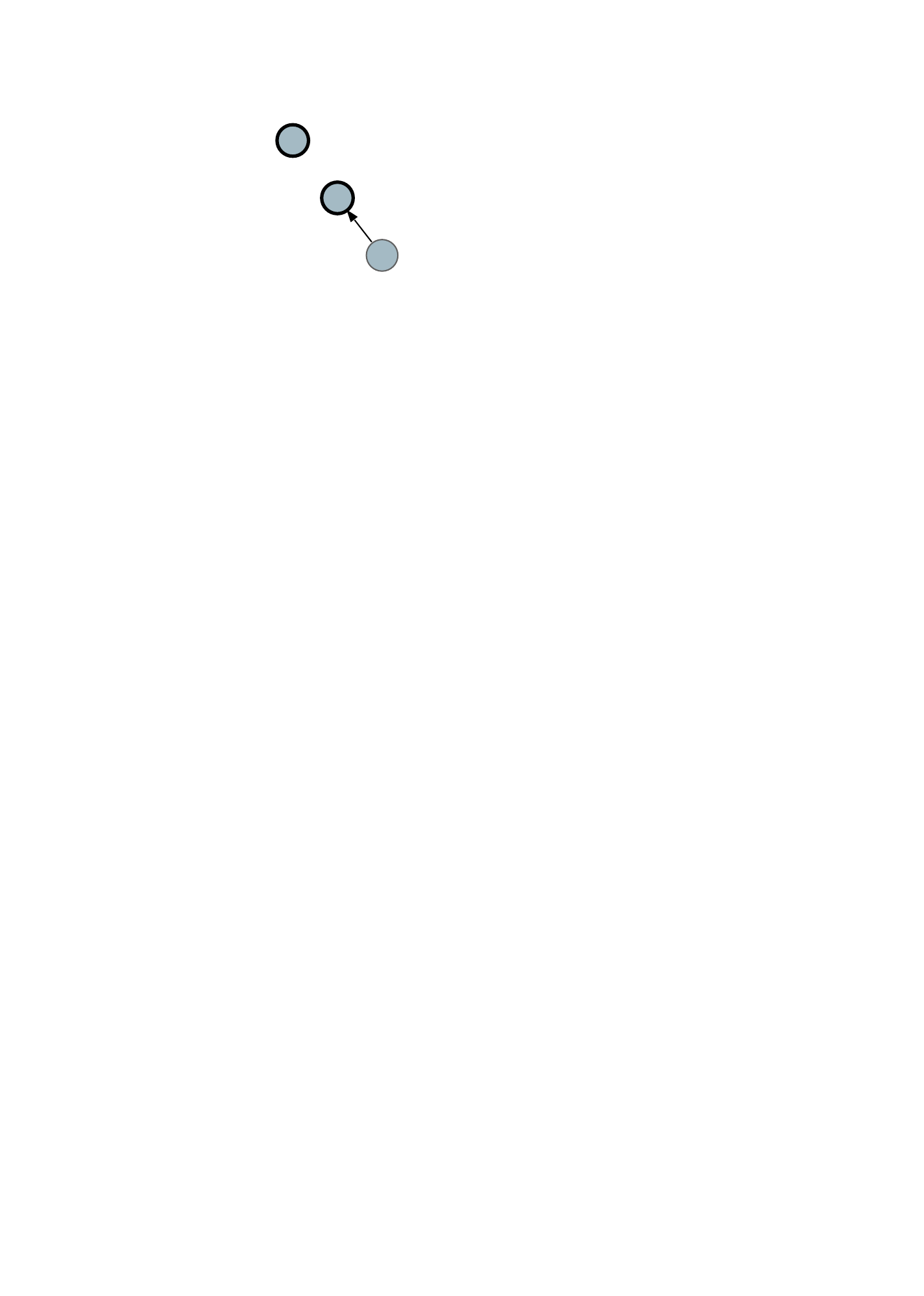}}%
    \put(0.31694895,1.24283035){\color[rgb]{0,0,0}\makebox(0,0)[t]{\lineheight{1.25}\smash{\begin{tabular}[t]{c}\texttt{1}\end{tabular}}}}%
    \put(0,0){\includegraphics[width=\unitlength,page=2]{superlogarithmic-example-1.pdf}}%
    \put(0.87647135,1.24283035){\color[rgb]{0,0,0}\makebox(0,0)[t]{\lineheight{1.25}\smash{\begin{tabular}[t]{c}\texttt{X}\end{tabular}}}}%
    \put(0,0){\includegraphics[width=\unitlength,page=3]{superlogarithmic-example-1.pdf}}%
    \put(0.31694895,0.51749829){\color[rgb]{0,0,0}\makebox(0,0)[t]{\lineheight{1.25}\smash{\begin{tabular}[t]{c}\texttt{A}\end{tabular}}}}%
    \put(0,0){\includegraphics[width=\unitlength,page=4]{superlogarithmic-example-1.pdf}}%
    \put(0.87647135,0.51749829){\color[rgb]{0,0,0}\makebox(0,0)[t]{\lineheight{1.25}\smash{\begin{tabular}[t]{c}\texttt{B}\end{tabular}}}}%
    \put(0,0){\includegraphics[width=\unitlength,page=5]{superlogarithmic-example-1.pdf}}%
  \end{picture}%
\endgroup%
}
        \caption{
            Trees $G_1$, $G_2$, $G_3$ and $G_4$.
            The algorithm has labeled the center nodes on the core paths with \texttt{1}, \texttt{X}, \texttt{A} and \texttt{B}.
            The neighborhoods the algorithm has seen are visualized by thicker lines around nodes.
        }
        \Description{}
        \label{fig:superlogarithmic-example-1}
    \end{figure}

    \item Reveal the center nodes on the core paths of the trees $G_i$ to algorithm $\algoa$.
    The algorithm must commit to some labels for these nodes without seeing the roots or the connector nodes of the trees.
    This means that the algorithm does not know how the trees $G_i$ are connected to each other.
    This is because $x$ is larger than the diameter of the view that is revealed to the algorithm.
    \cref{fig:superlogarithmic-example-1} visualizes an example of this situation.

    \item Consider now the following cases.
    In each one of them, we can force algorithm~$\algoa$ to fail to produce a valid labeling:
    \begin{enumerate}[label=(\roman*)]
        \item \label{itm:two-from-small-set} \emph{Algorithm $\algoa$ labels at least two of the center nodes of the trees with labels from set $\Gamma_1$.}

        Let $v_1$ and $v_2$ be two distinct center nodes labeled with labels $\gamma_1, \gamma_2 \in \Gamma_1$, respectively.
        Without loss of generality, we may assume that nodes $v_1$ and $v_2$ are the center nodes of trees $G_1$ and $G_2$.
        By \cref{obs:twohalf-inflexibility}, the labels $\gamma_1$ and $\gamma_2$ are path-inflexible in $\probtwohalf$, and hence the pair $(\gamma_1, \gamma_2)$ is a path-inflexible pair by \cref{lemma:rooted-tree-combinations}.

        Consider two different ways of combining the trees $G_1$ and $G_2$:
        The root of tree $G_2$ can be identified with the connector node of $G_1$, or, alternatively, the root of tree $G_2$ can be made a child of the connector node.
        \cref{fig:superlogarithmic-example-2} visualizes both of these cases.
        In the former case the length of the path between $v_1$ and $v_2$ is $p$ and in the latter it is $p+1$.
        As $p$ and $p+1$ have different parity, only one of the walks $\gamma_1 \leadsto \gamma_2$ of length $p$ and $p+1$ can exist in $\machine(\probtwohalf)$.
        We can choose the option for which the walk of that length does not exist.
        This implies that there is no way to label the path between $v_1$ and $v_2$ such that the labeling would be valid according to the \twohalf-coloring problem $\probtwohalf$.
        Hence the algorithm must fail.

        \begin{figure}[p]
            \centering
            \resizebox*{0.8\textwidth}{!}{\fontsize{9}{9}\selectfont%% Creator: Inkscape 1.1.2 (0a00cf5339, 2022-02-04), www.inkscape.org
%% PDF/EPS/PS + LaTeX output extension by Johan Engelen, 2010
%% Accompanies image file 'superlogarithmic-example-2.pdf' (pdf, eps, ps)
%%
%% To include the image in your LaTeX document, write
%%   \input{<filename>.pdf_tex}
%%  instead of
%%   \includegraphics{<filename>.pdf}
%% To scale the image, write
%%   \def\svgwidth{<desired width>}
%%   \input{<filename>.pdf_tex}
%%  instead of
%%   \includegraphics[width=<desired width>]{<filename>.pdf}
%%
%% Images with a different path to the parent latex file can
%% be accessed with the `import' package (which may need to be
%% installed) using
%%   \usepackage{import}
%% in the preamble, and then including the image with
%%   \import{<path to file>}{<filename>.pdf_tex}
%% Alternatively, one can specify
%%   \graphicspath{{<path to file>/}}
%% 
%% For more information, please see info/svg-inkscape on CTAN:
%%   http://tug.ctan.org/tex-archive/info/svg-inkscape
%%
\begingroup%
  \makeatletter%
  \providecommand\color[2][]{%
    \errmessage{(Inkscape) Color is used for the text in Inkscape, but the package 'color.sty' is not loaded}%
    \renewcommand\color[2][]{}%
  }%
  \providecommand\transparent[1]{%
    \errmessage{(Inkscape) Transparency is used (non-zero) for the text in Inkscape, but the package 'transparent.sty' is not loaded}%
    \renewcommand\transparent[1]{}%
  }%
  \providecommand\rotatebox[2]{#2}%
  \newcommand*\fsize{\dimexpr\f@size pt\relax}%
  \newcommand*\lineheight[1]{\fontsize{\fsize}{#1\fsize}\selectfont}%
  \ifx\svgwidth\undefined%
    \setlength{\unitlength}{385bp}%
    \ifx\svgscale\undefined%
      \relax%
    \else%
      \setlength{\unitlength}{\unitlength * \real{\svgscale}}%
    \fi%
  \else%
    \setlength{\unitlength}{\svgwidth}%
  \fi%
  \global\let\svgwidth\undefined%
  \global\let\svgscale\undefined%
  \makeatother%
  \begin{picture}(1,0.85194805)%
    \lineheight{1}%
    \setlength\tabcolsep{0pt}%
    \put(0,0){\includegraphics[width=\unitlength,page=1]{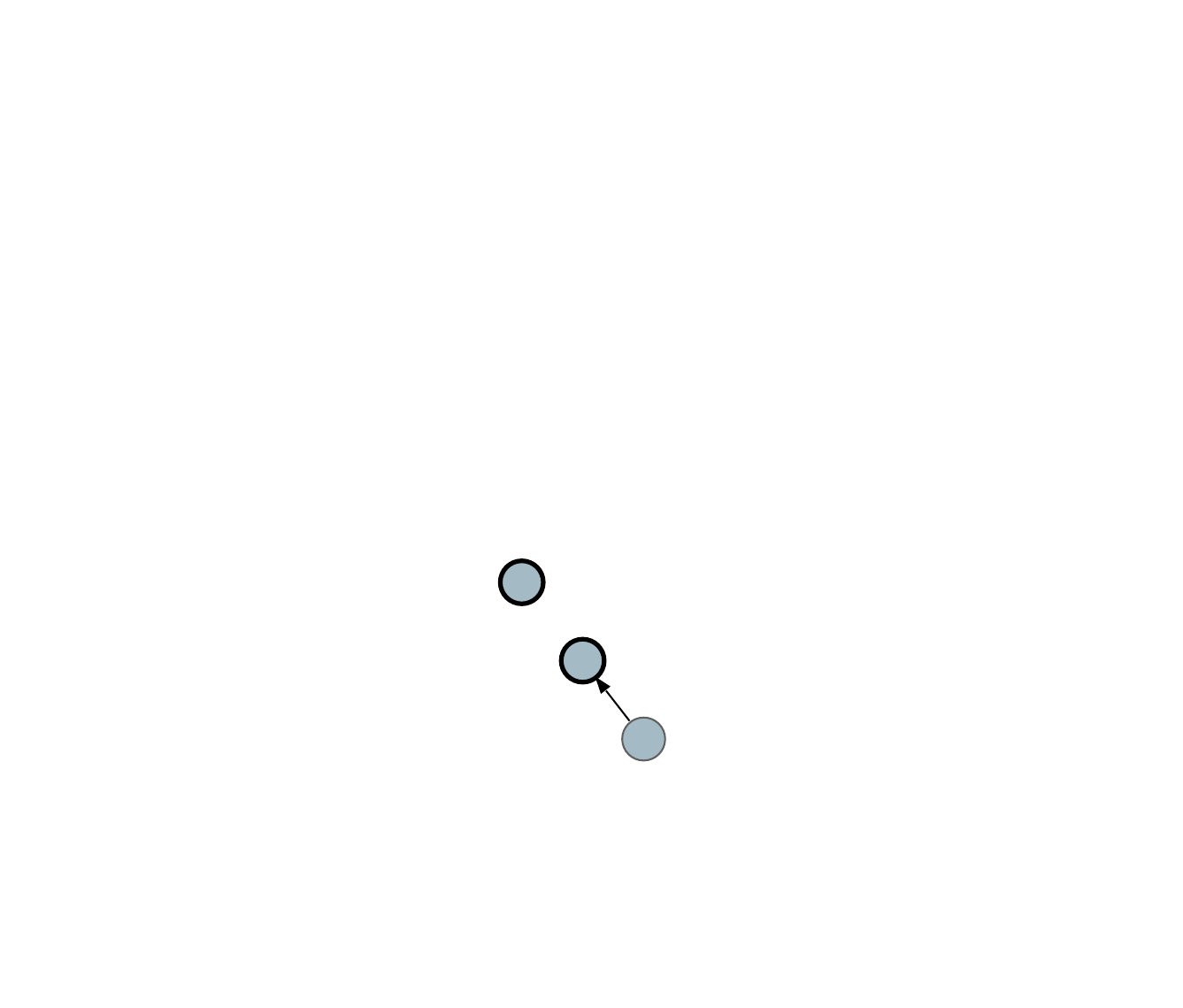}}%
    \put(0.44112902,0.35169405){\color[rgb]{0,0,0}\makebox(0,0)[t]{\lineheight{1.25}\smash{\begin{tabular}[t]{c}\texttt{1}\end{tabular}}}}%
    \put(0,0){\includegraphics[width=\unitlength,page=2]{superlogarithmic-example-2.pdf}}%
    \put(0.18355062,0.68287733){\color[rgb]{0,0,0}\makebox(0,0)[t]{\lineheight{1.25}\smash{\begin{tabular}[t]{c}\texttt{1}\end{tabular}}}}%
    \put(0,0){\includegraphics[width=\unitlength,page=3]{superlogarithmic-example-2.pdf}}%
    \put(0.86798576,0.3847968){\color[rgb]{0,0,0}\makebox(0,0)[t]{\lineheight{1.25}\smash{\begin{tabular}[t]{c}\texttt{1}\end{tabular}}}}%
    \put(0,0){\includegraphics[width=\unitlength,page=4]{superlogarithmic-example-2.pdf}}%
    \put(0.66191788,0.64974862){\color[rgb]{0,0,0}\makebox(0,0)[t]{\lineheight{1.25}\smash{\begin{tabular}[t]{c}\texttt{1}\end{tabular}}}}%
    \put(0,0){\includegraphics[width=\unitlength,page=5]{superlogarithmic-example-2.pdf}}%
  \end{picture}%
\endgroup%
}
            \caption[Two ways to connect trees with labels from set $\Gamma_1$.]{
                Example for case~\ref{itm:two-from-small-set}.
                Two ways to connect trees with labels from set $\Gamma_1$; in this example, the algorithm has decided to label both center nodes with label \texttt{1}.
                Note that in this visualization the chain of layer-1 nodes has been made shorter to draw attention to layer 2; in reality there are more layer-1 and layer-0 nodes.
                On the left, the root of the lower tree has been made a child of the connector node of the upper tree, while on the right they have been identified as one node.
                The distance between labeled nodes is 5 in the left tree and 4 in the right tree.
                Because the distance between nodes labeled with \texttt{1} must be even, the labeling on the left tree cannot be completed.
            }
    \Description{}
    \label{fig:superlogarithmic-example-2}
        \end{figure}

        \item \label{itm:all-from-large-set} \emph{Algorithm $\algoa$ labels all center nodes of the trees with labels from set $\Gamma_2$.}

        In this case, two trees with labels from set $\Gamma_2$ can be combined into one tree with a label from set $\Gamma_1$.
        By repeating this for both pairs of trees, we can construct two trees with labels from set $\Gamma_1$ such that the algorithm has not seen how the trees are connected.
        A contradiction can then be derived using the construction from case~\ref{itm:two-from-small-set} on these trees.

        To see how to construct a tree with a label from set $\Gamma_1$, consider two of the trees $G_i$, say $G_1$ and $G_2$. The corresponding center nodes are $v_1$ and $v_2$, and their labels are $\gamma_1$ and $\gamma_2$, respectively.
        By \cref{obs:restricted-twohalf-inflexibility}, labels $\gamma_1$ and $\gamma_2$ are path-inflexible in $\probtwohalf'$.
        Using a similar construction as in case~\ref{itm:two-from-small-set}, we can construct a tree in which algorithm $\algoa$ cannot solve the restricted problem $\probtwohalf'$.
        In particular, algorithm $\algoa$ must fail to solve $\probtwohalf'$ on the path connecting $v_1$ and $v_2$.

        However, algorithm $\algoa$ is not actually trying to solve $\probtwohalf'$ but $\probtwohalf$, and hence it can also use labels from set $\Gamma_1$ to label some of the revealed nodes.
        By revealing the nodes between $v_1$ and $v_2$, as well as their children, to algorithm $\algoa$, we can force the algorithm to use a label from set $\Gamma_1$ for some revealed node $v$ without showing the root of the tree to the algorithm.
        Moreover, revealing node $v$ does not let the algorithm see the whole subtree rooted at $v$.
        In particular, the node at the end of a path consisting of only layer-1 nodes is not shown to the algorithm. Therefore, we can use that node to connect the tree to another tree.

        Note that this is the case where the structure of the trees $G_i$ comes into play.
        The layered tree structure ensures that when node $v$ is revealed to the algorithm, the algorithm has not seen all layer-1 nodes in the subtree rooted at $v$.
        This is because the algorithm has locality $T(n)$, but the length of the path consisting of layer-1 nodes has length $x \gg 2T(n)$.
        The layered tree structure is also needed for proving the general result in \cref{ssec:rooted-tree-equivalence-super}.

        \Cref{fig:superlogarithmic-example-3} shows an example of how two trees can be connected to force the algorithm to use a label from set $\Gamma_1$.

        \begin{figure}[p]
            \centering
            \resizebox*{0.5\textwidth}{!}{\fontsize{9}{9}\selectfont\subimport*{figures/}{superlogarithmic-example-3.pdf_tex}}
            \caption{
                Example for case~\ref{itm:all-from-large-set}.
                We have connected trees $G_3$ and $G_4$, having labels \texttt{A} and \texttt{B}, in such a way that the connecting path cannot be labeled according to the restricted problem $\probtwohalf'$.
                The algorithm has labeled the nodes on the connecting path, and the children of those nodes.
                Because there is no valid labeling according to the restricted problem $\probtwohalf'$, the algorithm had to use a label from set $\Gamma_1$ to label at least one node.
                In this case, the algorithm has decided to use label \texttt{1}.
                The node labeled with \texttt{1} is node $v$ from the text, and the node marked with $t$ is the connector node in the subtree rooted at $v$.
            }
            \Description{}
            \label{fig:superlogarithmic-example-3}
        \end{figure}

        \item \label{itm:both-sets-present} \emph{Algorithm $\algoa$ labels exactly one of the center nodes of the trees with label from set $\Gamma_1$, and the rest of the center nodes from set $\Gamma_2$.}
        
        In this case, two of the trees whose center nodes have been labeled with labels from set $\Gamma_2$ can be combined like in case~\ref{itm:all-from-large-set} to produce a tree with a label from set $\Gamma_1$.
        To force the algorithm to fail, this resulted tree can then be combined with the original tree whose center node is from set $\Gamma_1$, just like in case~\ref{itm:two-from-small-set}.
        See Figure~\ref{fig:superlogarithmic-example-4} for an example.

        \begin{figure}
            \centering
            \resizebox*{0.5\textwidth}{!}{\fontsize{9}{9}\selectfont\subimport*{figures/}{superlogarithmic-example-4.pdf_tex}}
            \caption{
                Example for case~\ref{itm:both-sets-present}.
                In this case, we first combined trees $G_3$ and $G_4$ to get a node with label \texttt{1}.
                We then made tree $G_1$ a child of the connector node of the newly-formed tree.
                The algorithm has tried to label all nodes on the path connecting the two labeled parts, but as there exists no valid labeling for the path, the algorithm has failed to label the node marked with \texttt{!}.
            }
            \Description{}
            \label{fig:superlogarithmic-example-4}
        \end{figure}
    \end{enumerate}
    
    This is an exhaustive list of all cases.
    As we can force algorithm~$\algoa$ to fail to produce a valid labeling in all cases, the assumption that such an algorithm can exist must be false.
    The only assumption made about algorithm~$\algoa$ is that it has locality $T(n) = o(\sqrt{n})$, and hence the \twohalf-coloring problem $\probtwohalf$ must require locality $\Omega(\sqrt{n})$.
\end{enumerate}

In the next section, we will generalize the techniques presented in this example to all LCL problems.

\flushfigures

\subsection{Equivalence in the super-logarithmic region}
\label{ssec:rooted-tree-equivalence-super}
It is known that problems requiring locality~$n^{\Omega(1)}$ in the \LOCAL model have a very specific structure~\cite{balliu21rooted-trees}.
In particular, such problems can be decomposed into hierarchical sequences of restricted problems by repeatedly removing path-inflexible labels from the problem.
More formally:
\begin{definition}[path-inflexible decomposition]
    Let $\Pi$ be an LCL problem.
    The \emph{path-inflexible decomposition} consists of a sequence of problems $(\Pi_0, \Pi_1, \ldots, \Pi_k)$ and a sequence of labels $(\Gamma_1, \Gamma_2, \ldots, \Gamma_k)$.
    The sequences are defined as follows:
    \begin{enumerate}[noitemsep]
        \item $\Pi_0 = \Pi$.
        \item For every $i \in [1, k]$, let $\Gamma_i$ be the set of path-inflexible labels in $\Pi_{i-1}$.
        \item For every $i \in [1, k]$, let $\Pi_i$ be the restriction of $\Pi$ to label set $\Gamma \setminus (\Gamma_1 \cup \Gamma_2 \cup \cdots \cup \Gamma_i)$.
        \item Problem $\Pi_k$ is either an empty problem, or all of its labels are path-flexible.
    \end{enumerate}        
\end{definition}

Note that every problem in the sequence $(\Pi_0, \Pi_1, \ldots, \Pi_k)$ is a restriction of the previous one.
In particular, a solution to problem $\Pi_i$ is a valid solution also to any problem $\Pi_j$ with $j \le i$.

It is known that if all labels of an LCL problem are path-flexible, then the LCL problem can be solved with locality $O(\log n)$ in the \LOCAL model \cite{balliu21rooted-trees}.
This holds also for the case where $\Pi$ itself has path-inflexible labels, but it has a restriction with only path-flexible labels.
In particular, if such restriction exists, then the last problem in its path-inflexible decomposition is such a restriction.

From now onwards, we are only interested in the case where the last problem $\Pi_k$ in the path-inflexible decomposition $(\Pi_0, \ldots, \Pi_k)$ is an empty problem.
We will show that in that case, the original problem requires locality $\Omega(n^{1/k})$ to be solved in the \onlineLOCAL model.
This lower bound also applies for the \LOCAL model.

The hierarchical structure of a problem is really useful when proving the $\Omega(n^{1/k})$ lower bound.
In fact, we use matching $k$-layer trees to construct a counter example for any locality-$o(n^{1/k})$ \onlineLOCAL algorithm.
A layered tree $T_k^x$ with $k$ layers and path length $x$ can be constructed as follows:
\begin{enumerate}
    \item If $k = 0$, then $T_0^x$ is a single node.
    The node belongs to layer $0$.
    \item Otherwise construct a directed path $(v_1, v_2, \ldots, v_x)$ with $x$ nodes.
    This path is called the \emph{core path} of $T_k^x$.
    For each node $v_i$, identify its children with the roots of $\delta-1$ copies of $T_{k-1}^x$.
    The nodes $v_i$ belong to layer $k$.
    Node $v_x$ is the root of $T_k^x$, and node $v_1$ is its \emph{connector node}.
\end{enumerate}
This layered structure is very useful.
In particular, for almost every node $v$, there exists a node belonging to a lower layer in the subtree rooted at $v$ that is far away from $v$.
This happens even though the size of the tree is only polynomial.
We formalize these observations in the following:
\begin{observation}
    \label{obs:rooted-tree-layers-continue}
    For every node $v$ in $T_k^x$ belonging to layer $i \ge 2$, there exists a layer-$(i-1)$ node $u$ in the subtree rooted at $v$ such that the distance between $v$ and $u$ is at least $x$.
\end{observation}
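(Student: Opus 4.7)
The plan is to argue directly from the recursive construction. Since $v$ is a layer-$i$ node, it must lie on the core path $(w_1, w_2, \ldots, w_x)$ of some copy of $T_i^x$ embedded inside $T_k^x$, say $v = w_j$. Because the subtree of $T_k^x$ rooted at $v$ contains the subtree of that $T_i^x$ copy rooted at $v$, it suffices to exhibit the desired layer-$(i-1)$ node within this $T_i^x$ subtree.

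Next, I would invoke the recursive structure one level down. Since $i \ge 2$, we have $i - 1 \ge 1$, so $T_{i-1}^x$ is itself constructed from a core path of $x$ layer-$(i-1)$ nodes. By the construction of $T_i^x$, the node $v = w_j$ has at least one child $r$ that is the root of a copy of $T_{i-1}^x$, and $r$ is a layer-$(i-1)$ node lying at distance $1$ from $v$.

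Finally, I would descend along the core path of this $T_{i-1}^x$ copy from its root $r$ down to its connector node $u$. By definition of the core path, all $x$ of these nodes are layer-$(i-1)$ nodes, and the distance from $r$ to $u$ within this copy is exactly $x - 1$. Combining the two distances yields $\dist(v, u) = 1 + (x-1) = x$, which establishes the claim.

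The argument has no real obstacle: the only point that needs any care is noticing that the hypothesis $i \ge 2$ is precisely what guarantees that $T_{i-1}^x$ itself has a nontrivial core path of layer-$(i-1)$ nodes (for $i = 1$, the object $T_0^x$ is a single node and the statement would fail, since the only layer-$0$ descendants of a layer-$1$ node are at distance $1$). Everything else is a direct unpacking of the recursive definition.
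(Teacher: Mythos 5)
Your proof is correct, and since the paper states this as an observation without proof, your argument is exactly the intended direct unpacking of the recursive construction: a layer-$i$ node sits on the core path of an embedded $T_i^x$, one of its children is the root of a $T_{i-1}^x$ copy (using $\delta \ge 2$, implicit in the paper's setting), and descending that copy's core path of $x$ layer-$(i-1)$ nodes to its connector gives distance $1+(x-1)=x$. Your remark that $i \ge 2$ is what guarantees the nontrivial core path one level down is also the right explanation for the hypothesis.
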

\begin{observation}
    Let $T_k^x$ be a layered tree.
    The tree consists of $O(x^k)$ nodes.
\end{observation}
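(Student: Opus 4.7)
The plan is a straightforward induction on the number of layers $k$, using the recursive definition of $T_k^x$ given just above the observation. Let $N_k(x)$ denote the number of nodes of $T_k^x$. The base case is trivial: $T_0^x$ is a single node, so $N_0(x) = 1$. For the inductive step, I would use the construction directly: $T_k^x$ is obtained from a core path on $x$ nodes by attaching, at each node of the path, up to $\delta-1$ children that are roots of copies of $T_{k-1}^x$ (here $\delta$ is the arity from \cref{def:lcl-rooted-trees}, which is a constant of the LCL problem). Counting the core-path nodes and all nodes contained in the attached subtrees, one obtains
\[
    N_k(x) \;\le\; x + x(\delta-1)\, N_{k-1}(x).
\]

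Applying the induction hypothesis $N_{k-1}(x) = O(x^{k-1})$ yields $N_k(x) = O(x) + O(x \cdot x^{k-1}) = O(x^k)$, since $\delta$ and $k$ are constants (the LCL problem and the depth of the path-inflexible decomposition are fixed). This closes the induction and gives the claimed bound.

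The main obstacle is essentially cosmetic: one must be careful that the constants hidden in the $O(\cdot)$ depend only on $\delta$ and $k$, both of which are constants in the setting of \cref{ssec:rooted-tree-equivalence-super}, so they are allowed to appear in the hidden constant. There is no real mathematical difficulty here; the observation is a routine accounting step whose only purpose is to certify, together with \cref{obs:rooted-tree-layers-continue}, that the layered trees $T_k^x$ built in the lower-bound construction have polynomial size in $x$, so that picking $x$ slightly larger than the claimed \onlineLOCAL locality still produces a tree whose node count $n$ satisfies $x \gg T(n)$, as required for the adversarial argument.
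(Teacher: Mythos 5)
Your induction is correct and is exactly the routine counting argument the paper leaves implicit: the observation is stated without proof, and the recurrence $N_k(x) \le x + x(\delta-1)N_{k-1}(x)$ with $N_0(x)=1$, together with the fact that $\delta$ and $k$ are constants of the problem, is precisely the intended justification. Nothing is missing.
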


We can now use the path-inflexible decomposition and layered trees to prove \cref{thm:rooted-tree-superlogarithmic}.
\begin{proof}[Proof of \cref{thm:rooted-tree-superlogarithmic}]
    Let $\Pi = (\delta, \Gamma, C)$ be an LCL problem, and let $(\Pi_0, \Pi_1, \ldots, \Pi_k)$ and $(\Gamma_1, \ldots, \Gamma_k)$ be its path-inflexible decomposition.
    If problem $\Pi_k$ is non-empty, then problem $\Pi$ can be solved with locality $O(\log n)$ in the \LOCAL model~\cite{balliu21rooted-trees}, and therefore also in the \onlineLOCAL model.
    Hence we will assume that $\Pi_k$ is an empty problem.
    Assume also that there existed an \onlineLOCAL algorithm $\algoa$ solving $\Pi$ with locality $T(n) = o(n^{1/k})$, where $k$ is the number of non-empty problems in the path-inflexible decomposition of $\Pi$.

    We will construct a rooted tree $G_\algoa$ on which algorithm~$\algoa$ fails.
    We start by constructing many copies of $T_k^x$ for sufficiently large $x$.
    Then we use algorithm~$\algoa$ to label one layer-$k$ node on each of the trees.
    Finally we combine the trees together in a way which ensures that no valid labeling exists for whole tree, given the existing labels.
    This proves that no such algorithm~$\algoa$ can exist.

    More formally, the construction works as follows:
    \begin{enumerate}
        \item\label{step:rooted-tree-superlogarithmic-tree-construction} Construct $2^{k+1}$ copies of layered tree $T_k^x$, with a sufficiently large parameter $x \gg 2T(n)$.
        Such large parameter $x$ exists, because each tree $T_k^x$ contains only $O(x^k)$ nodes, and by assumption $T(n) = o(n^{1/k})$.

        \item Use algorithm~$\algoa$ to label the middlemost nodes on the core paths of every tree.
        Because the parameter $x$ has been chosen to be large enough, algorithm~$\algoa$ does not see the ends of the core path.

        \item\label{step:rooted-tree-properties-of-C} Divide the trees into collections $C_1, \ldots, C_k$ based on the labels the algorithm has produced.
        If the algorithm used label $\gamma$ from set $\Gamma_i$ to label the node in tree $G$, put tree $G$ into collection~$C_i$.

        There are four properties which hold for every tree~$G$ in collection~$C_i$:
        \begin{enumerate}[noitemsep]
            \item There exists a node~$v$ in the tree~$G$ such that the label of $v$ belongs to set~$\Gamma_i$,
            \item the algorithm has not committed a label for any node whose layer is less than $i$,
            \item the algorithm has not seen the root nor the connector node of $G$, and
            \item the layer of the connector node is at least $i$.
        \end{enumerate}
        All these properties hold trivially for the initial trees.

        \item Iteratively combine two trees from collection~$C_i$ into one tree that can be added to collection~$C_j$ for some $j < i$.

        The combination of trees proceeds as follows:
        Consider two trees $A$ and $B$ from collection $C_i$.
        By the construction of $C_i$, there exist nodes $v_A$ and $v_B$ in $A$ and $B$, respectively, such that the labels of nodes $v_A$ and $v_B$ belong to set $\Gamma_i$.
        Moreover, the layers of $v_A$ and $v_B$ are at least $i$.

        There are four different ways in which trees $A$ and $B$ can be combined:
        \begin{enumerate}[noitemsep, label=(\alph*)]
            \item\label{itm:A-root-1} identify the root of $B$ with the connector node of $A$,
            \item\label{itm:B-root-1} identify the root of $A$ with the connector node of $B$,
            \item\label{itm:A-root-2} make the root of $B$ a child of the connector node of $A$, or
            \item\label{itm:B-root-2} make the root of $A$ a child of the connector node of $B$.
        \end{enumerate}
        Note that we could choose any one of these combinations without the algorithm knowing which one we chose.
        This is because the algorithm has not seen the roots or the connector nodes of either of the trees $A$ and $B$, and therefore we can change the structure of the trees in those neighborhoods freely.

        Among these four combinations, choose one for which there exists no valid labeling for problem $\Pi_{i-1}$.
        Such tree must exist by \cref{lemma:rooted-tree-combinations}.
        In particular, no valid labeling exists for the nodes and their children on the path connecting $v_A$ and $v_B$.
        Call the resulting tree $R$.
        Any attempt to label the nodes and their children on the path between $v_A$ and $v_B$ in $R$ according to problem $\Pi_{i-1}$ must fail.

        We now use algorithm~$\algoa$ to label all the nodes, and their children, on the path between $v_A$ and $v_B$ in $R$.
        Note that each of these nodes belongs to a layer whose index is at least $i-1$.
        By the choice of $R$, the algorithm cannot label all nodes using only labels present in problem~$\Pi_{i-1}$.
        Hence there must exist a node $v$ on the path between $v_A$ and $v_B$ such that it and its children do not form a valid configuration in problem~$\Pi_{i-1}$.
        This is possible only if either $v$ or at least one of its children has label $\gamma \in \Gamma_j$ for some $j < i$.
        Denote that node by $v'$.

        By \cref{obs:rooted-tree-layers-continue}, there exists a layer-$j$ node $u$ in the subtree rooted at $v$ such that the distance between $v$ and $u$ is at least $x$.
        Hence the distance between $v'$ and $u$ is at least $x-1 > T(n)$. This means that the algorithm has not seen node $u$ yet.
        Make node $u$ the connector node of tree $R$.

        It is easy to check that tree $R$ constructed in this way fulfills all properties of a tree of collection $C_j$, as described in step~\ref{step:rooted-tree-properties-of-C}.
        We can therefore remove trees $A$ and $B$ from collection $C_i$ and add tree $R$ to collection $C_j$.
        We repeat this step until collection $C_1$ contains at least two trees, at which point we move to the next step and force the algorithm to fail.

        \item In this last step, we force algorithm~$\algoa$ to fail.
        We do this by taking two trees $A$ and $B$ from collection $C_1$.
        By construction of collection $C_1$, there exist nodes $v_A$ and $v_B$ on trees $A$ and $B$ with labels $\gamma_A$ and $\gamma_B$ from set $\Gamma_1$, respectively.
        Recall that set $\Gamma_1$ is the set of path-inflexible labels in problem $\Pi_0 = \Pi$, and hence the label pair $(\gamma_A, \gamma_B)$ is a path-inflexible pair of $\Pi$.

        We combine the trees $A$ and $B$ in a similar way as in the previous step to get the tree $R$.
        This time, there does not exist a valid labeling for the original problem $\Pi$ in $R$.
        This is the tree $G_\algoa$ we aimed to construct.
        In particular, algorithm~$\algoa$ must fail to produce a valid labeling for tree $G_\algoa$.
    \end{enumerate}

    As a final remark, the number of trees constructed in step~\ref{step:rooted-tree-superlogarithmic-tree-construction} is large enough such that, no matter what labels the algorithm uses, we can always construct two trees for collection~$C_1$.
\end{proof}

\subsection{Equivalence in the sub-logarithmic region}
\label{ssec:rooted-tree-equivalence-sub}

We conclude this section by proving \cref{thm:rooted-tree-sublogarithmic}.
We prove the theorem by showing that the existence of a locality-$o(\log n)$ \onlineLOCAL algorithm for solving the LCL problem $\Pi$ implies that there exists a \emph{certificate for $O(\log* n)$ solvability} for $\Pi$.
This in turn implies that problem $\Pi$ is solvable with locality $O(\log* n)$ in the \LOCAL model~\cite{balliu21rooted-trees}.

\begin{definition}[certificate for $O(\log* n)$ solvability \cite{balliu21rooted-trees}]
    \label{def:rooted-tree-logstar-certificate}
    Let $\Pi = (\delta, \Gamma, C)$ be an LCL problem. A certificate for $O(\log* n)$ solvability of $\Pi$ consists of labels $\Gamma_{\mathcal{T}} = \{\gamma_1, \ldots, \gamma_t\} \subseteq \Gamma$, a depth pair $(d_1, d_2)$ and a pair sequences $\mathcal{T}^1$ and $\mathcal{T}^2$ of $t$~labeled trees such that
    \begin{enumerate}
        \item The depths $d_1$ and $d_2$ are coprime.
        
        \item Each tree of $\mathcal{T}^1$ (resp. $\mathcal{T}^2$) is a complete $\delta$-ary tree of depth $d_1 \ge 1$ (resp. $d_2 \ge 1$).
        
        \item Each tree is labeled by labels from $\Gamma$ correctly according to problem $\Pi$.
        
        \item Let $\bar{\mathcal{T}}^1_i$ (resp. $\bar{\mathcal{T}}^2_i$) be the tree obtained by starting from $\mathcal{T}^1_i$ (resp. $\mathcal{T}^2_i$) and removing the labels of all non-leaf nodes. It must hold that all trees $\bar{\mathcal{T}}^1_i$ (resp. $\bar{\mathcal{T}}^2_i$) are isomorphic, preserving the labeling. All the labels of the leaves of $\bar{\mathcal{T}}^1_i$ (resp. $\bar{\mathcal{T}}^2_i$) must be from set $\Gamma_{\mathcal{T}}$.

        \item The root of tree $\mathcal{T}^1_i$ (resp. $\mathcal{T}^2_i$) is labeled with label $\gamma_i$.
    \end{enumerate}
\end{definition}

In the proof, we will construct a large set of complete $\delta$-ary trees and label the nodes that are in the middle of the tree.
We call these nodes the \emph{middle nodes}.
\begin{definition}[middle nodes]
    Let $G$ be a complete $\delta$-ary tree with depth $2d$.
    The middle nodes of $G$ are the nodes of $G$ that are at distance $d$ from the root and the leaves of $G$.
\end{definition}

We are now ready construct a certificate.
\begin{proof}[Proof of \cref{thm:rooted-tree-sublogarithmic}]
    Let $\Pi = (\delta, \Gamma, C)$ be an LCL problem, and let $\algoa$ be an \onlineLOCAL algorithm solving $\Pi$ with locality $T(n) = o(\log n)$.

    We use algorithm~$\algoa$ to construct a certificate for $O(\log* n)$ solvability as follows:
    \begin{enumerate}
        \item Let $n$ be large enough to satisfy $(\delta^{T(n) + 2} + |\Gamma|)(\delta^{2T(n) + 3} - 1) \ll n$.
        Construct $\delta^{T(n) + 2} + |\Gamma|$ complete $\delta$-ary trees with depth $2T(n) + 2$.
        Each one of the tree has $\delta^{2T(n) + 3} - 1$ nodes.

        \item Use algorithm~$\algoa$ to label the middle nodes of each tree.
        Because the depth of each tree is larger than the visibility diameter of the algorithm, the algorithm does not see the roots or the leaves of the trees.
        Let $\Gamma_\mathcal{T} = \{\gamma_1, \ldots, \gamma_t\}$ the set of labels the algorithm used to label these nodes.

        \item Divide the trees into two sets $\mathcal{L}$ and $\mathcal{U}$.
        Set $\mathcal{U}$ is the minimal set such that for every label of $\Gamma_\mathcal{T}$, there exists a tree with that label in $\mathcal{U}$.
        Note that the size of set $\mathcal{U}$ is at most $|\Gamma|$, and hence the size of $\mathcal{L}$ is at least $\delta^{T(n) + 2}$.
        Order the trees in set $\mathcal{L}$ in a consistent order.

        \item We can now use the sets $\mathcal{U}$ and $\mathcal{L}$ to construct the individual trees $\mathcal{T}^1_i$ and $\mathcal{T}^2_i$.

        Consider each $\gamma_i$ in $\Gamma_\mathcal{T}$ one by one.
        Find a tree $U$ from set $\mathcal{U}$ having a node $u$ labeled with $\gamma_i$; by the construction of $\mathcal{U}$, such a tree must exist.
        Consider the subtree of $U$ rooted at $u$, and denote that tree by $U_u$.
        Tree $U_u$ has $\delta^{T(n) + 1}$ leaves.
        Identify those leaves with the roots of the first $\delta^{T(n) + 1}$ trees of $\mathcal{L}$.

        Now the root of subtree $U_u$ has label $\gamma_i$, and the nodes at depth $2T(n) + 2$ have also been labeled with labels from set $\Gamma_\mathcal{T}$.
        We can use algorithm~$\algoa$ to label all the nodes in between.
        As the result, we get that the depth-$(2T(n) + 2)$ subtree of $U_u$ is a complete labeled tree with depth $2T(n) + 2$; let it be $\mathcal{T}^1_i$.

        We can construct the trees $\mathcal{T}^2_i$ in an analogous way.
        The only difference is that instead of identifying the roots of the first $\delta^{T(n) + 1}$ trees of $\mathcal{L}$ with the leaves of the subtree $U_u$, we make the first $\delta^{T(n) + 2}$ trees of $\mathcal{L}$ the children of those leaves.
        Again, we use algorithm~$\algoa$ to label the nodes in between.
        This time we get a depth-$(2T(n) + 3)$ labeled tree as $\mathcal{T}^2_i$.

        It is easy to verify that the sequence of trees $\mathcal{T}^1$ and $\mathcal{T}^2$ actually form a valid certificate for $O(\log* n)$ solvability for problem~$\Pi$.
        In particular, the depths $2T(n) + 2$ and $2T(n) + 3$ of the trees are coprime, the leaves of each tree in both sets are labeled similarly using labels from set $\Gamma_\mathcal{T}$, and for every label of set $\Gamma_\mathcal{T}$, there exists a tree in both sets having root labeled with that label.
    \end{enumerate}

    The fact that the existence of a locality-$o(\log n)$ \onlineLOCAL algorithm implies the existence of a certificate for $O(\log* n)$ solvability is sufficient to prove the theorem.
    In particular, if no certificate existed, we know that the locality of the problem in the \onlineLOCAL model must be $\Omega(\log n)$.
\end{proof}

\end{document}